\documentclass[onecolumn]{IEEEtran}

\usepackage{graphicx}
\usepackage{amsmath,amssymb}
\usepackage{mathtools}
\usepackage{amsthm}
\usepackage{dsfont}
\usepackage{physics}
\usepackage[colorlinks=true,citecolor=blue,linkcolor=blue,urlcolor=blue,]{hyperref}

\newtheorem{definition}{Definition}
\newtheorem{theorem}[definition]{Theorem}
\newtheorem{proposition}[definition]{Proposition}
\newtheorem{corollary}[definition]{Corollary}
\newtheorem{lemma}[definition]{Lemma}

\newcommand{\1}{\mathds{1}}
\newcommand{\dbra}[1]{\langle\!\langle {#1} \vert}
\newcommand{\dket}[1]{\vert {#1} \rangle\!\rangle}
\NewDocumentCommand{\dbraket}{m g}{%
  \IfNoValueTF{#2}
    {\langle\!\langle {#1} \vert {#1} \rangle\!\rangle}
    {\langle\!\langle {#1} \vert {#2} \rangle\!\rangle}%
}
\NewDocumentCommand{\dketbra}{m g}{%
  \IfNoValueTF{#2}
    {\vert #1 \rangle\!\rangle\!\langle\!\langle #1 \vert}
    {\vert #1 \rangle\!\rangle\!\langle\!\langle #2 \vert}%
}
\newcommand{\U}{\mathrm{U}}
\newcommand{\Path}[1]{\mathrm{Path}(#1)}
\newcommand{\isometry}[2]{W_{#2, #1}}
\newcommand{\Comm}{\mathrm{Comm}}

\newcommand{\ZZ}{\mathbb{Z}}
\newcommand{\YY}{\mathbb{Y}}

\newcommand{\mcL}{\mathcal{L}}
\newcommand{\mcX}{\mathcal{X}}
\newcommand{\mcY}{\mathcal{Y}}
\newcommand{\mcZ}{\mathcal{Z}}
\newcommand{\mcU}{\mathcal{U}}
\newcommand{\mcS}{\mathcal{S}}
\newcommand{\mcI}{\mathcal{I}}
\newcommand{\mcO}{\mathcal{O}}
\newcommand{\mcP}{\mathcal{P}}
\newcommand{\mcF}{\mathcal{F}}
\newcommand{\mcH}{\mathcal{H}}
\newcommand{\mcA}{\mathcal{A}}

\newcommand{\mcV}{\mathcal{V}}
\newcommand{\mcD}{\mathcal{D}}
\newcommand{\mcW}{\mathcal{W}}

\newcommand{\mcE}{\mathcal{E}}
\newcommand{\CC}{\mathbb{C}}
\newcommand{\mfS}{\mathfrak{S}}
\newcommand{\sfS}{\mathsf{S}}

\newcommand{\mD}[1]{m_{#1}^{(D)}}
\newcommand{\md}[1]{m_{#1}^{(d)}}

\begin{document}

\title{Optimal complex conjugation of unknown isometry channels}

\author{Satoshi~Yoshida and Mio~Murao
\thanks{S. Yoshida and M. Murao are with the Department of Physics, Graduate School of Science, The University of Tokyo, Tokyo 113-0033, Japan (e-mail: satoshiyoshida.phys@gmail.com; murao@phys.s.u-tokyo.ac.jp).}
\thanks{M. Murao is also with the Trans-scale Quantum Science Institute, The University of Tokyo, Bunkyo-ku, Tokyo 113-0033, Japan.}}

\maketitle

\begin{abstract}
    Access to the complex conjugate of an unknown quantum channel is a useful resource in quantum oracle problems, motivating the question of how such access can be simulated using only a limited number of calls to the original channel.
    We determine the optimal deterministic protocol for approximately implementing the complex conjugate isometry $\overline{V}$ from $n$ uses of an unknown isometry channel $V: \CC^d\to\CC^D$.
    We derive a closed-form expression for the optimal fidelity and prove that a parallel protocol is optimal even among general quantum superchannels, including adaptive and indefinite-causal-order strategies.
    The formula implies a query complexity $n=\Theta(d[(D-d)/\epsilon+1])$ for achieving infidelity $\epsilon$.
    We also present a circuit construction based on the quantum Schur transform and the dual Clebsch--Gordan transform, with circuit complexity $O(\mathrm{poly}(D,1/\epsilon))$.
    This task is extended to the multi-copy case $V^{\otimes n}\mapsto \overline{V}^{\otimes k}$.
    For fixed $d<D$ and $k$, we show that the optimal fidelity for the multi-copy case is $1-kd(D-d)/n+o(n^{-1})$, and that this value is asymptotically attained by a parallel estimation-based protocol.
    Finally, combining the isometry protocol with random Stinespring dilations yields a protocol for complex conjugation of unknown rank-$r$ quantum channels whose query complexity is optimal up to a constant factor if the Kraus rank $r$ is constant.
\end{abstract}

\begin{IEEEkeywords}
Isometry channels, Higher-order quantum transformations, Quantum superchannels, Schur--Weyl duality.
\end{IEEEkeywords}

\section{Introduction}
\label{sec:introduction}

Given only black-box uses of an unknown quantum channel, which related operations can be implemented, and at what query cost?
This question underlies the comparison of oracle access models: an oracle for a unitary channel, its inverse, a controlled one, or its complex conjugate can provide genuinely different computational resources~\cite{cotler2021revisiting,huang2022quantum,vanapeldoorn2023quantum,tang2025conjugate,tang2025controlled,tang2025amplitude}.
Complex conjugation is a particularly natural test case, because conjugate access can substantially reduce the query complexity of oracle problems such as reality testing~\cite{cotler2021revisiting,huang2022quantum,tang2025conjugate}.
It is therefore important to determine when a conjugate query can itself be implemented from forward queries alone.
This problem is considered within the framework of universal transformation of unknown quantum states or quantum channels, which transforms an unknown input into a desired output without a classical description of the input.

Complex conjugation of unknown pure states~\cite {buzek1999optimal,brzic2026optimal} and unitary channels~\cite {miyazaki2019complex,ebler2023optimal} are well understood, and their optimal performance is known.
Isometry channels interpolate between pure-state preparation and unitary channels and arise naturally in various quantum-information settings, including Stinespring dilations of quantum channels and quantum information encoding.
Recently, the channel access model was shown to be equivalent to the Stinespring dilation isometry access model~\cite{girardi2025random,yoshida2025random}.
Still, the lifting to Stinespring dilation unitary requires a dimension-dependent overhead in the query complexity~\cite{yoshida2025random}.
Despite the importance of isometry channels, the optimal performance of isometry complex conjugation has not been known.
More broadly, such a transformation is considered in the framework of quantum superchannels\footnote{They are also called quantum supermaps~\cite{chiribella2008transforming} and higher-order quantum transformations~\cite{taranto2025higher}. See also related formalisms: quantum combs~\cite{chiribella2008quantumcircuits}, quantum channels with memory~\cite{kretschmann2005quantum}, quantum strategy~\cite{gutoski2007toward}, process matrices~\cite{oreshkov2012quantum}, process tensors~\cite{taranto2019quantum}, or quantum algorithmic measurements~\cite{aharonov2022quantum}.}, which is a general framework for transforming unknown quantum channels, e.g., unitary inversion $U\mapsto U^\dagger$, unitary transposition $U\mapsto U^\top$~\cite{chiribella2016optimal,sardharwalla2016universal,hahn1950spin,freeman1998spin,sedlak2019optimal,quintino2019probabilistic,quintino2019reversing,miyazaki2019complex,ebler2023optimal,ishizaka2008asymptotic,ishizaka2009quantum,studzinski2017port,quintino2022deterministic,trillo2020translating,trillo2023universal,yoshida2023reversing,chen2024quantum}, and unitary complex conjugation $U\mapsto \overline{U}$~\cite{miyazaki2019complex,ebler2023optimal} for $U\in\U(d)$, and lower bounds on the query complexity of these tasks are investigated in Refs.~\cite{odake2025analytical,chen2026approximation}.
Unitary inversion is extended to isometry inversion $V\mapsto V^{-1}$~\cite{yoshida2023universal} or isometry adjointation $V\mapsto V^\dagger$~\cite{yoshida2025universal}, and unitary transposition is extended to isometry transposition $V\mapsto V^\top$~\cite{yoshida2023universal} for isometries $V:\CC^d\to\CC^D$, but the extension of unitary complex conjugation to isometry complex conjugation $V\mapsto \overline{V}$ remained open.
This problem is not a straightforward extension of unitary complex conjugation.
The unitary complex conjugation is made possible by using the representation-theoretic analysis of the complex conjugate representation $\overline{U}$~\cite{miyazaki2019complex}.
At the same time, the set of isometry operators does not form a group, so the same representation-theoretic argument is not directly applicable.
Moreover, unlike unitary complex conjugation, the complex conjugation of an unknown pure state cannot be implemented exactly using any finite number of queries, even probabilistically~\cite{yang2014certifying}.
Since pure-state preparation is a special case of an isometry channel, exact complex conjugation of an unknown isometry channel is likewise impossible with finitely many queries~\cite{yoshida2023universal}.
We therefore investigate the optimal approximate implementation of complex conjugation for unknown isometry channels.

This work investigates the optimal deterministic protocol for complex conjugation of an unknown isometry $V:\CC^d\to\CC^D$, which is to implement $\overline{V}$ from $n$ calls of $V$ (see Fig.~\ref{fig:task}), where $\overline{V}$ is the complex conjugate of $V$ in the computational basis.
Our main result provides a closed formula for the optimal fidelity of the task, showing that the parallel protocol attains the optimal fidelity among all possible protocols, including adaptive or indefinite-causal-order ones~\cite{hardy2007towards,oreshkov2012quantum,chiribella2013quantum}.
We provide the optimal fidelity $F$ by using the semidefinite programming and the Schur--Weyl duality, which attains the fidelity $F= 1-\epsilon$ with $n = \Theta(d[(D-d)/\epsilon+1])$ queries, and construct the optimal parallel protocol by using the quantum Schur transform and the dual Clebsch--Gordan transform~\cite{harrow2005applications,bacon2006efficient,bacon2007quantum,krovi2019efficient,burchardt2025high,nguyen2023mixed,grinko2023gelfand,fei2023efficient,grinko2025mixed}, with the circuit complexity $O(\mathrm{poly}(D,1/\epsilon))$.
This circuit construction interpolates between the optimal circuit for transposition of an unknown pure state~\cite{brzic2026optimal} and that for unitary complex conjugation~\cite{miyazaki2019complex,ebler2023optimal}, where the former is given similarly but the circuit complexity of the latter was not known.
The task is extended to the multi-copy case $V^{\otimes n}\mapsto \overline{V}^{\otimes k}$, and we show that the asymptotic optimal fidelity for fixed $d,D,k$ is given by $F = 1-kd(D-d)/n+o(n^{-1})$, which is achieved by a parallel estimation-based protocol using the isometry estimation protocol in Ref.~\cite{yoshida2026quantum}.
By combining the optimal isometry complex conjugation protocol with a random dilation superchannel (corresponding to $\Lambda\mapsto \overline{V}_\Lambda\mapsto \overline{\Lambda}$ in Fig.~\ref{fig:access-models}), we also provide a protocol for complex conjugation of an unknown channel $\Lambda:\mcL(\CC^{d_\mathrm{in}})\to\mcL(\CC^{d_\mathrm{out}})$ of bounded Kraus rank $r$ with $O(d_\mathrm{in}(\Delta_\Lambda/\epsilon+1))$ queries with $\Delta_\Lambda\coloneqq rd_\mathrm{out}-d_\mathrm{in}$, where $\overline{\Lambda}(\cdot)\coloneqq \overline{\Lambda(\overline{\cdot})}$ (see Fig.~\ref{fig:access-models}).
This construction has an advantage over the protocol based on the unitary complex conjugation and the lifting from the channel to its Stinespring dilation unitary (corresponding to $\Lambda\mapsto V_\Lambda\mapsto U_\Lambda\mapsto \overline{U}_\Lambda \mapsto \overline{\Lambda}$ in Fig.~\ref{fig:access-models}), which requires $O(d_\mathrm{in}^2d_\mathrm{out}^2r^2/\epsilon)$ queries to $\Lambda$.
We also show a lower bound on the query complexity of the channel complex conjugation given by $\Omega(d_\mathrm{out}(\min\{d_\mathrm{in}, d_\mathrm{out}, \Delta_\Lambda\}/\epsilon+1))$ for $0<\epsilon\leq 1/4$.
If the Kraus rank is constant ($r=O(1)$), this construction provides the optimal query complexity of the task given by $\Theta(d_\mathrm{in}(\Delta_\Lambda/\epsilon+1))$.

\begin{figure}
    \centering
    \includegraphics[width=0.5\linewidth]{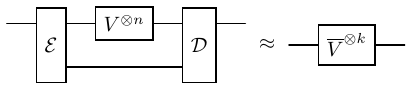}
    \caption{Illustration of the task of isometry complex conjugation.
    By using $n$ parallel queries of an unknown isometry channel $\mcV(\rho)=V\rho V^\dagger$ with $V:\CC^d\to\CC^D$, we implement a channel approximating $k$ parallel queries of the complex conjugate channel $\overline{\mcV}(\rho)=\overline{V}\rho V^\top$.
    We illustrate a parallel protocol with an encoder $\mathcal{E}$ and a decoder $\mathcal{D}$.
    The optimal protocol is shown to be parallel even within the general class of deterministic supermaps for the case of $k=1$, which also contains adaptive and indefinite-causal-order strategies~\cite{hardy2007towards,oreshkov2012quantum,chiribella2013quantum}.
    For $k\geq 1$, the asymptotic optimality of the estimation-based parallel protocol is shown for fixed $d,D,k$ and $n\to\infty$.}
    \label{fig:task}
\end{figure}

\begin{figure}
    \centering
    \includegraphics[width=0.5\linewidth]{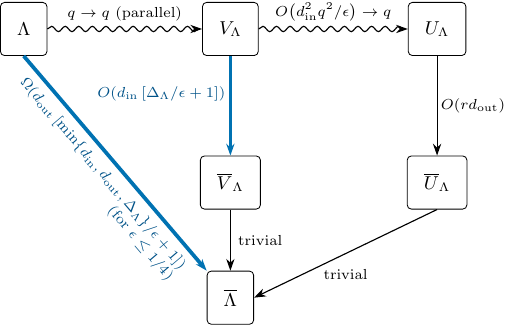}
    \caption{
        Application of this work to complex conjugation of unknown quantum channels.
        Each arrow indicates a lifting from one access model to another, which can be implemented with a certain overhead in the query complexity, and blue arrows are constructed in this work.
        The wave arrow indicates a randomized lifting, which implements $q$ queries of the randomly chosen target access (called the acorn trick~\cite{tang2025conjugate}).
        The access to rank-$r$ unknown quantum channel $\Lambda: \mcL(\CC^{d_\mathrm{in}}) \to \mcL(\CC^{d_\mathrm{out}})$ is lifted to the access to its Stinespring dilation isometry $V_\Lambda$ such that $\Tr_\mcE[V_\Lambda(\cdot)V_\Lambda^\dagger] = \Lambda(\cdot)$ using an environmental system $\mcE\simeq \CC^r$.
        For $q\in \ZZ_{+}$, $q$ parallel queries to a randomly chosen dilation isometry $V_\Lambda$ can be implemented using $q$ parallel queries to the same unknown channel $\Lambda$~\cite{girardi2025random,yoshida2025random}.
        It is further lifted to the access to its Stinespring dilation unitary $U_\Lambda$ such that $\Tr_\mcE[U_\Lambda(\cdot\otimes \ketbra{0})U_\Lambda^\dagger] = \Lambda(\cdot)$, but this lifting requires $O(d_\mathrm{in}^2 q^2/\epsilon)$ queries to $V_\Lambda$~\cite{yoshida2025random} to implement $q$ parallel queries to a randomly chosen dilation unitary $U_\Lambda$.
        Using the optimal isometry complex conjugation protocol constructed in this work, we can implement the complex conjugation of $V_\Lambda$ with $O(d_\mathrm{in}[\Delta_{\Lambda}/\epsilon+1])$ queries with diamond-norm error $\epsilon$.
        The optimality of isometry complex conjugation implies the lower bound on the query complexity of the channel complex conjugation given by $\Omega(d_\mathrm{out}(\min\{d_\mathrm{in}, d_\mathrm{out}, \Delta_\Lambda\}/\epsilon+1))$ for $0<\epsilon\leq 1/4$, which achieves the optimality of the former construction for constant Kraus rank $r=O(1)$.
        This construction has an advantage over the protocol based on the unitary complex conjugation~\cite{miyazaki2019complex,ebler2023optimal} and the lifting from the channel to its Stinespring dilation unitary, which requires $O(d_\mathrm{in}^2d_\mathrm{out}^2r^2/\epsilon)$ queries to $\Lambda$.}
    \label{fig:access-models}
\end{figure}

The rest of this work is organized as follows.
Section~\ref{sec:preliminaries} provides preliminaries on the Schur--Weyl duality and quantum superchannels.
Section~\ref{sec:optimal-complex-conjugation} provides the main result on the optimal fidelity of isometry complex conjugation, its circuit construction, and its application to channel complex conjugation.
Section~\ref{sec:multi-copy-conjugation} provides the extension to the multi-copy case, which shows the optimality of the estimation-based protocol in the asymptotic limit.
Section~\ref{sec:discussion} discusses the comparison with the estimation-based protocol and the advantage of isometry complex conjugation over unitary complex conjugation.
Section~\ref{sec:conclusion} concludes this work and discusses future directions.

\section{Preliminaries}
\label{sec:preliminaries}

\subsection{Schur--Weyl duality}

We summarize the Schur--Weyl duality, which is a key tool for analyzing the optimal fidelity of isometry complex conjugation.
We suggest standard textbooks~\cite{klimyk1995representations,fulton2013representation,ceccherini2010representation} for more details.
We consider the representation of the unitary group $\U(d)$ and the symmetric group $\mfS_n$ on the $n$-fold space $(\CC^d)^{\otimes n}$, given by
\begin{align}
    \U(d) \ni U \mapsto U^{\otimes n} \in \mcL((\CC^d)^{\otimes n}),\\
    \mfS_n \ni \sigma \mapsto \pi(\sigma) \in \mcL((\CC^d)^{\otimes n}),
\end{align}
where $\pi(\sigma)$ is the permutation operator defined by
\begin{align}
    \pi(\sigma)(\ket{i_1}\otimes\cdots\otimes\ket{i_n}) = \ket{i_{\sigma^{-1}(1)}}\otimes\cdots\otimes\ket{i_{\sigma^{-1}(n)}}.
\end{align}
These two representations generate the commutant of each other, which provides the decomposition of $(\CC^d)^{\otimes n}$ into irreducible representation (irrep) spaces of $\U(d)$ and $\mfS_n$:
\begin{align}
    \label{eq:schur-weyl}
    (\CC^d)^{\otimes n} &\cong \bigoplus_{\lambda\vdash_d n} \mcU_\lambda^{(d)}\otimes\mcS_\lambda,\\
    U^{\otimes n} &\cong \bigoplus_{\lambda\vdash_d n} U_\lambda\otimes\1_{\mcS_\lambda} \quad \forall U\in \U(d),\\
    \pi(\sigma) &\cong \bigoplus_{\lambda\vdash_d n} \1_{\mcU_\lambda^{(d)}}\otimes g_\lambda(\sigma) \quad \forall \sigma \in \mfS_n,
\end{align}
where $\lambda\vdash_d n$ denotes a Young diagram with $n$ boxes and at most $d$ rows, $\mcU_\lambda^{(d)}$ is the irrep space of $\U(d)$ (also called the Weyl module), $\mcS_\lambda$ is the irrep space of $\mfS_n$ (also called the Specht module), and $U_\lambda$ and $g_\lambda$ are the corresponding irreducible representations of $\U(d)$ and $\mfS_n$, respectively.
The Young diagram $\lambda$ is represented by a sequence of non-negative integers $\lambda=(\lambda_1,\ldots,\lambda_d)$ with $\lambda_1\geq \cdots \lambda_d\geq 0$.
The isomorphism above is called the quantum Schur transform $U_\mathrm{Sch}^{(n,d)}$, i.e.,
\begin{align}
    U_\mathrm{Sch}^{(n,d)}: (\CC^d)^{\otimes n} \to \bigoplus_{\lambda\vdash_d n} \mcU_\lambda^{(d)}\otimes\mcS_\lambda
\end{align}
such that
\begin{align}
    \label{eq:unitary-in-schur-basis}
    U_\mathrm{Sch}^{(n,d)} U^{\otimes n} U_\mathrm{Sch}^{(n,d)\dagger} &= \bigoplus_{\lambda\vdash_d n} U_\lambda\otimes\1_{\mcS_\lambda} \quad \forall U\in \U(d),\\
    U_\mathrm{Sch}^{(n,d)} \pi(\sigma) U_\mathrm{Sch}^{(n,d)\dagger} &= \bigoplus_{\lambda\vdash_d n} \1_{\mcU_\lambda^{(d)}}\otimes g_\lambda(\sigma) \quad \forall \sigma \in \mfS_n.
\end{align}
In particular, we take the Gelfand-Zetlin basis of $\mcU_\lambda^{(d)}$ and the Young-Yamanouchi basis of $\mcS_\lambda$, whose tensor products form the Schur basis of $(\CC^d)^{\otimes n}$.
In this case, the quantum Schur transform $U_\mathrm{Sch}^{(n,d)}$ is a real orthogonal matrix~\cite{klimyk1995representations}.

The dimension of the Weyl module $\mcU_\lambda^{(d)}$ is denoted by $\md{\lambda}$, which is calculated by the Weyl dimension formula:
\begin{theorem}[Weyl dimension formula]
For $\lambda\vdash_d n$, the dimension of the corresponding irreducible representation of $\U(d)$ is
\begin{align}
    \label{eq:unitary_irrep_dimension}
    \md{\lambda} = {\prod_{1\leq i<j\leq d}(\tilde{\lambda}_i-\tilde{\lambda}_j) \over \prod_{j=1}^{d-1}j!},
\end{align}
where $\tilde{\lambda}_i$ is defined by $\tilde{\lambda}_i \coloneqq \lambda_i-i$.
\end{theorem}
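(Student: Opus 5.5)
The plan is to derive the formula from the Weyl character formula, evaluated in the limit where the unitary approaches the identity. From the Schur--Weyl decomposition \eqref{eq:schur-weyl}, the character of $U_\lambda$ on a diagonal unitary $\mathrm{diag}(x_1,\ldots,x_d)$ is the Schur polynomial
\begin{align}
    s_\lambda(x)=\frac{\det\bigl(x_i^{\lambda_j+d-j}\bigr)_{i,j}}{\det\bigl(x_i^{d-j}\bigr)_{i,j}}.
\end{align}
I will take this as the standard identification of the characters of the Weyl modules. It can be justified by computing $\Tr[U^{\otimes n}\pi(\sigma)]$ as a power-sum polynomial, then applying Frobenius' formula together with orthogonality of the $\mfS_n$ characters. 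The dimension $\md{\lambda}$ equals $s_\lambda(1,\ldots,1)$, but both determinants vanish at this point, so the ratio must be evaluated by a limit.

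First, I specialize to a geometric sequence $x_i=q^{d-i}$ and let $q\to 1$. With this choice, both numerator and denominator become Vandermonde determinants. Set $\ell_j\coloneqq \lambda_j+d-j=\tilde\lambda_j+d$. Then
\begin{align}
    \det\bigl(q^{(d-i)\ell_j}\bigr)_{i,j}=\prod_{i<j}\bigl(q^{\ell_i}-q^{\ell_j}\bigr),
\end{align}
where the overall sign convention is the same for both determinants and therefore cancels in the ratio. Similarly, the denominator equals $\prod_{i<j}(q^{d-i}-q^{d-j})$.

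Second, I take the limit $q\to1$. Each factor satisfies $(q^{a}-q^{b})/(q-1)\to a-b$. Since numerator and denominator have the same number $\binom d2$ of factors, this gives
\begin{align}
    \md{\lambda}=\prod_{i<j}\frac{\ell_i-\ell_j}{j-i}.
\end{align}
The differences satisfy $\ell_i-\ell_j=\tilde\lambda_i-\tilde\lambda_j$. For the denominator, grouping the pairs by $j$ gives $\prod_{j=1}^{d}\prod_{i<j}(j-i)=\prod_{j=1}^{d}(j-1)!=\prod_{j=1}^{d-1}j!$. This yields \eqref{eq:unitary_irrep_dimension}.

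The main obstacle is justifying the character formula, that is, identifying the $\U(d)$ character of $\mcU^{(d)}_\lambda$ with $s_\lambda$ as a bialternant. If I cannot simply cite it, I would use Weyl's integration formula and orthonormality of characters. A shorter self-contained alternative is to count the Gelfand--Tsetlin patterns with top row $\lambda$, since these index the basis of $\mcU_\lambda^{(d)}$ mentioned above. By induction on $d$, using the branching rule to $\U(d-1)$, the count would satisfy the same product formula via an interpolation argument. I would, however, default to the character route and cite the textbooks already referenced for the character formula.
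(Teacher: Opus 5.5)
Your argument is correct, but the paper does not prove this statement at all. It imports the formula as the classical Weyl dimension formula from the cited textbooks and uses it only as a computational tool, for example to obtain the height-$h$ form of $\mD{\lambda}$ and the ratio in~\eqref{eq:branching_ratio}.

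Your route is the standard derivation, and its steps check out:
identify the character of $U_\lambda$ on the torus with the bialternant $s_\lambda$ via Schur--Weyl duality and Frobenius' formula; specialize to $x_i=q^{d-i}$, so both alternants become Vandermonde products $\prod_{i<j}(q^{\ell_i}-q^{\ell_j})$ and $\prod_{i<j}(q^{d-i}-q^{d-j})$; then let $q\to1$. The identities $\ell_i-\ell_j=\tilde\lambda_i-\tilde\lambda_j$ and $\prod_{i<j}(j-i)=\prod_{j=1}^{d-1}j!$ are also right.

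One point you should make explicit. For real $q\neq1$ the point $\mathrm{diag}(q^{d-1},\ldots,1)$ is not unitary, so the character formula on $\U(d)$ does not apply there directly. Either note that the alternant $\det(x_i^{\ell_j})$ is divisible by the Vandermonde $\det(x_i^{d-j})$, so $s_\lambda$ is a polynomial continuous at $(1,\ldots,1)$ and the ratio formula holds wherever the Vandermonde is nonzero. Or take $q=e^{i\theta}$ with $\theta\to0$, which stays inside $\U(d)$ and gives the same limit.

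What your approach buys is a derivation that is self-contained relative to the Schur--Weyl setup the paper already uses. The price is importing Frobenius' character formula, which is roughly as deep as the statement itself. The paper's choice to cite is more economical and entirely appropriate for a textbook fact.
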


\begin{figure}
    \centering
    \includegraphics[width=0.5\linewidth]{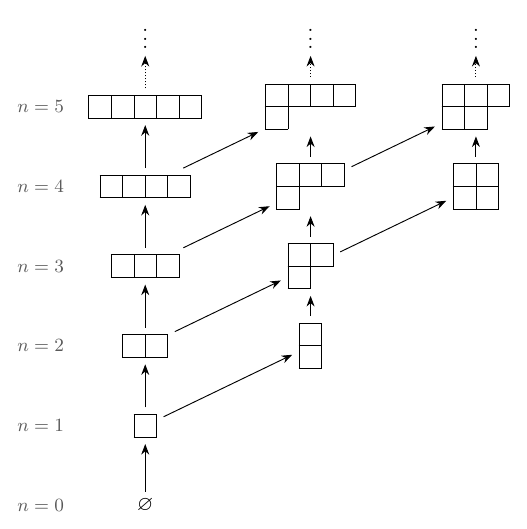}
    \caption{An example of the restricted Young lattice $\YY_d$ for $d=2$.
    It consists of vertices labeled by the Young diagrams $\lambda\vdash_d n$ for $n\in \ZZ_{\geq 0}$, and edges $\alpha\to\lambda$ with $\lambda\in \alpha+_d\square$.}
    \label{fig:young-lattice_d2}
\end{figure}

\noindent To define the basis vectors in the Specht module $\mcS_\lambda$, we introduce the restricted Young lattice $\YY_d$, which is a directed graph $\YY_d = (V_d,E_d)$ with vertex set $V_d$ and edge set $E_d$ given by (see also Fig.~\ref{fig:young-lattice_d2})
\begin{align}
    V_d &\coloneqq \{\lambda \vdash_d n \mid n\in\ZZ_{\geq 0}\},\\
    E_d &\coloneqq \{(\alpha, \lambda) \mid \lambda\in \alpha+_d \square\}.
\end{align}
where $\alpha+_d \square$ denotes the set of Young diagrams of height at most $d$ obtained from $\alpha$ by adding one box, defined by
\begin{align}
    \alpha+_d \square \coloneqq \{\lambda = \alpha+e_i \mid 1\leq i\leq d, \lambda\vdash_d \abs{\alpha}+1\},
\end{align}
where $e_i$ is the $i$-th standard basis vector given by $(e_i)_j = \delta_{ij}$ using the Kronecker delta $\delta_{ij}$ defined by $\delta_{ij} = 1$ if $i=j$ and $\delta_{ij} = 0$ otherwise.
We define the set of paths from $\alpha\in V_d$ to $\lambda \in V_d$ by
\begin{align}
    \Path{\alpha\to\lambda} \coloneqq \{\alpha_0 \to \cdots \to \alpha_k \mid k\in\ZZ_{\geq 0}, \alpha_0=\alpha, \alpha_k=\lambda, (\alpha_{i-1},\alpha_i)\in E_d\},
\end{align}
and the set of paths from the empty diagram $\varnothing$ to $\lambda$ by
\begin{align}
    \Path{\lambda} \coloneqq \Path{\varnothing\to\lambda}.
\end{align}
Then, the Specht module $\mcS_\lambda$ is spanned by basis vectors labeled by a path $p\in \Path{\lambda}$.
Thus, its dimension $d_\lambda$ is given by
\begin{align}
    \label{eq:dimension-specht}
    d_\lambda = \abs{\Path{\lambda}}.
\end{align}

By using the Schur--Weyl decomposition, we can define the irreducible matrix units $E_{pq}^{\lambda,d}$ as
\begin{align}
    E_{pq}^{\lambda,d} \coloneqq U_\mathrm{Sch}^{(n,d) \dagger}(\1_{\mcU_\lambda^{(d)}}\otimes\ketbra{p}{q}) U_\mathrm{Sch}^{(n,d)},
\end{align}
which form a basis of the commutant algebra of $\U(d)$ on $(\CC^d)^{\otimes n}$:
\begin{align}
    \Comm(\U(d), n)
    &\coloneqq \{X\in \mcL(\CC^d)^{\otimes n} \mid [X, U^{\otimes n}] = 0\}\\
    &= \mathrm{span}\{E_{pq}^{\lambda, d} \mid \lambda\vdash_d n, p, q\in \Path{\lambda}\}.
\end{align}
The irreducible matrix units satisfy the following relations~\cite{studzinski2022efficientmulti}:
\begin{theorem}
    \label{thm:branching-trace-identity}
    For any $\lambda\vdash_d n$, $\alpha, \beta\in \lambda-\square$, $p_\alpha\in \Path{\alpha}$, and $q_\beta\in\Path{\beta}$, the following branching rule holds:
\begin{align}
    \label{eq:branching-trace-identity}
    \Tr_n E^{\lambda, d}_{p_\alpha \to \lambda, q_\beta\to \lambda} = \delta_{\alpha\beta} \frac{\md{\lambda}}{\md{\alpha}} E^{\alpha}_{p_\alpha q_\alpha}.
\end{align}
\end{theorem}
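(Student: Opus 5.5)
The plan rests on the subgroup-adapted structure of the Schur basis, together with Schur's lemma applied to the partial trace.

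\textbf{Step 1: block structure of $E^{\lambda,d}_{p_\alpha\to\lambda,\,q_\beta\to\lambda}$.} Because the Young--Yamanouchi basis is adapted to the chain $\mfS_{n-1}\subset\mfS_n$, restricting $\mcU^{(d)}_\lambda\otimes\mcS_\lambda$ to $\U(d)\times\mfS_{n-1}$, with the first $n-1$ tensor factors separated from the last, gives
\begin{align}
    \mcU_\lambda^{(d)}\otimes\mcS_\lambda\;\cong\;\bigoplus_{\gamma\in\lambda-\square}\mcU_\lambda^{(d)}\otimes\mcS_\gamma .
\end{align}
Here the basis vector $\ket{p_\gamma\to\lambda}$ corresponds to $\ket{p_\gamma}\in\mcS_\gamma$. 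Equivalently, we can apply $U_\mathrm{Sch}^{(n-1,d)}\otimes\1$ first, which yields $\bigoplus_\gamma \mcU_\gamma^{(d)}\otimes\mcS_\gamma\otimes\CC^d$, and then Clebsch--Gordan decompose $\mcU_\gamma^{(d)}\otimes\CC^d\cong\bigoplus_{\lambda\in\gamma+_d\square}\mcU_\lambda^{(d)}$. With this identification,
\begin{align}
    E^{\lambda,d}_{p_\alpha\to\lambda,\,q_\beta\to\lambda}=\bigl(\1_{\mcU^{(d)}_\alpha}\otimes\ketbra{p_\alpha}{q_\beta}\bigr)\otimes\bigl(\iota_{\alpha}^{\lambda}\,\iota_{\beta}^{\lambda\dagger}\bigr),
\end{align}
where $\iota_\gamma^\lambda:\mcU^{(d)}_\lambda\to\mcU^{(d)}_\gamma\otimes\CC^d$ is the $\U(d)$-equivariant isometric embedding given by the Clebsch--Gordan transform. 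The $\mcS$ tensor factor is placed in the obvious way.

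\textbf{Step 2: take the partial trace.} Tracing out the $n$-th system, i.e.\ the $\CC^d$ factor, gives
\begin{align}
    \Tr_n E^{\lambda,d}_{p_\alpha\to\lambda,\,q_\beta\to\lambda}=\ketbra{p_\alpha}{q_\beta}\otimes \Tr_{\CC^d}\bigl[\iota_\alpha^\lambda\iota_\beta^{\lambda\dagger}\bigr].
\end{align}
The operator $T\coloneqq\Tr_{\CC^d}[\iota_\alpha^\lambda\iota_\beta^{\lambda\dagger}]:\mcU_\beta^{(d)}\to\mcU_\alpha^{(d)}$ is $\U(d)$-equivariant, since $\iota$ intertwines and the partial trace over $\CC^d$ commutes with conjugation by $U\otimes U$. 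Schur's lemma then gives $T=0$ when $\alpha\neq\beta$, because the irreps are inequivalent, and $T=c\,\1_{\mcU_\alpha^{(d)}}$ when $\alpha=\beta$. Taking the full trace fixes the constant: $c\,\md{\alpha}=\Tr[\iota_\alpha^\lambda\iota_\alpha^{\lambda\dagger}]=\Tr[\iota_\alpha^{\lambda\dagger}\iota_\alpha^\lambda]=\md{\lambda}$, since $\iota$ is an isometry. Finally, $\1_{\mcU^{(d)}_\alpha}\otimes\ketbra{p_\alpha}{q_\alpha}$, conjugated back by $U_\mathrm{Sch}^{(n-1,d)}$, is exactly $E^{\alpha}_{p_\alpha q_\alpha}$, which yields \eqref{eq:branching-trace-identity}.

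\textbf{Main obstacle.} The delicate step is Step 1: justifying that the Schur basis with path labels really factors as $U_\mathrm{Sch}^{(n-1,d)}\otimes\1$ followed by a Clebsch--Gordan isometry. This amounts to showing that the path label $p_\alpha\to\lambda$ records the Young diagram of the first $n-1$ systems, so that the Schur transform is built recursively in the standard Bacon--Chuang--Harrow manner. I would adopt this recursive construction as the definition of the Schur basis. It is consistent with the Young--Yamanouchi and Gelfand--Zetlin conventions up to phases, and those phases cancel in $E_{pq}$ because $p$ and $q$ appear in a ket and a bra.
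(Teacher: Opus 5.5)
Your proof is correct. The paper does not prove this identity itself; it imports it from \cite{studzinski2022efficientmulti}. Your argument is therefore a self-contained derivation, and it uses only the convention the paper sets up when it builds $U_{\mathrm{Sch}}^{(n,d)}$ by concatenating the transforms $\mathrm{CG}^{(d)}_{\gamma,\square}$ and labels the multiplicity space by paths. In that picture, after conjugation by $U_{\mathrm{Sch}}^{(n-1,d)}\otimes\1_d$, the unit $E^{\lambda,d}_{p_\alpha\to\lambda,q_\beta\to\lambda}$ becomes $\ketbra{p_\alpha}{q_\beta}\otimes\iota^\lambda_\alpha\iota^{\lambda\dagger}_\beta$. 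The partial trace passes through the outer conjugation. Schur's lemma together with $\Tr(\iota^{\lambda\dagger}_\alpha\iota^\lambda_\alpha)=\md{\lambda}$ then yields both the $\delta_{\alpha\beta}$ and the ratio $\md{\lambda}/\md{\alpha}$. Compared with the citation, your route works directly with the path basis that the paper's own recursive construction produces, so no matching of conventions with the external reference is needed. It also shows exactly where the $\delta_{\alpha\beta}$ and the dimension ratio come from.

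Two small corrections. First, in your Step 1 display the factor $\1_{\mcU^{(d)}_\alpha}$ should be dropped, because $\iota^\lambda_\alpha\iota^{\lambda\dagger}_\beta$ already maps $\mcU^{(d)}_\beta\otimes\CC^d$ to $\mcU^{(d)}_\alpha\otimes\CC^d$; your Step 2 in fact uses the correct form. Second, and more important, your closing claim that phase ambiguities cancel in $E_{pq}$ is wrong for $p\neq q$. Rephasing $\ket{p}\mapsto e^{i\theta_p}\ket{p}$ multiplies $E_{pq}$ by $e^{i(\theta_p-\theta_q)}$. So, starting from a basis in which the identity holds, it continues to hold only if $\theta_{p_\alpha\to\lambda}-\theta_{p_\alpha}$ takes the same value for every $p_\alpha\in\Path{\alpha}$. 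The identity is thus a statement about path bases glued coherently across levels, and the recursive Clebsch--Gordan definition guarantees exactly this. With that definition adopted, as you propose and as the paper does, the phase remark is unnecessary and the proof is complete.
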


\begin{theorem}
    \label{thm:branching-tensor-identity}
    For any $\alpha\vdash_d n-1$ and $p_\alpha, q_\alpha\in \Path{\alpha}$ the following branching rule holds:
\begin{align}
    E^{\alpha, d}_{p_\alpha q_\alpha}\otimes \1_d = \sum_{\lambda\in \alpha+_d\square} E^{\lambda, d}_{p_\alpha\to \lambda, q_\alpha\to \lambda}.
\end{align}
\end{theorem}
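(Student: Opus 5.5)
The plan is to check the identity on the Schur basis, using the branching structure of the Young--Yamanouchi (Gelfand--Zetlin) basis under restriction from $n$ to $n-1$ tensor factors. The key fact is that the Schur basis of $(\CC^d)^{\otimes n}$ is adapted to the chain $\U(d)\times\mfS_{n-1}\subset \U(d)\times\mfS_n$. Concretely, the basis vector of $\mcS_\lambda$ labelled by the path $p_\alpha\to\lambda$ lies in the $\mfS_{n-1}$-isotypic component $\mcS_\alpha\subset\mcS_\lambda|_{\mfS_{n-1}}$ and is the image of $\ket{p_\alpha}$.

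Dually, I would use the decomposition
\begin{align}
    \mcU^{(d)}_\alpha\otimes\CC^d\cong\bigoplus_{\lambda\in\alpha+_d\square}\mcU^{(d)}_\lambda,
\end{align}
which is the Pieri rule and is multiplicity-free. Together these give
\begin{align}
    U_\mathrm{Sch}^{(n,d)}\bigl(U_\mathrm{Sch}^{(n-1,d)\dagger}\otimes\1_d\bigr)
    \cong \bigoplus_{\alpha\vdash_d n-1}\bigoplus_{\lambda\in\alpha+_d\square} \bigl(\text{isometry } \mcU_\alpha^{(d)}\otimes\CC^d\supset\mcU_\lambda^{(d)}\bigr)\otimes\bigl(\ket{p_\alpha}\mapsto\ket{p_\alpha\to\lambda}\bigr).
\end{align}
Here the first isometry is the Clebsch--Gordan intertwiner $W_{\alpha,\lambda}:\mcU^{(d)}_\lambda\to\mcU^{(d)}_\alpha\otimes\CC^d$. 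The isometries for distinct $\lambda$ have orthogonal ranges, and together they sum to a unitary.

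With this, I would write
\begin{align}
    E^{\lambda,d}_{p_\alpha\to\lambda,q_\beta\to\lambda}
    = \bigl(U_\mathrm{Sch}^{(n-1,d)\dagger}\otimes\1_d\bigr)\bigl(W_{\alpha,\lambda}W_{\beta,\lambda}^\dagger\otimes\ketbra{p_\alpha}{q_\beta}\bigr)\bigl(U_\mathrm{Sch}^{(n-1,d)}\otimes\1_d\bigr),
\end{align}
where $W_{\alpha,\lambda}W_{\beta,\lambda}^\dagger$ maps $\mcU_\beta^{(d)}\otimes\CC^d\to\mcU_\alpha^{(d)}\otimes\CC^d$.

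Theorem~\ref{thm:branching-tensor-identity} then follows immediately. Set $\beta=\alpha$ and sum over $\lambda$. Since $\sum_\lambda W_{\alpha,\lambda}W_{\alpha,\lambda}^\dagger=\1_{\mcU^{(d)}_\alpha\otimes\CC^d}$, the sum equals $E^{\alpha,d}_{p_\alpha q_\alpha}\otimes\1_d$.

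For Theorem~\ref{thm:branching-trace-identity}, take the partial trace over the last factor. This gives $\Tr_{\CC^d}[W_{\alpha,\lambda}W_{\beta,\lambda}^\dagger]$, an operator from $\mcU^{(d)}_\beta$ to $\mcU^{(d)}_\alpha$. Because each $W$ is $\U(d)$-equivariant and $\Tr_{\CC^d}$ respects the $U\otimes U$ action, this operator intertwines $U_\beta$ and $U_\alpha$. By Schur's lemma it vanishes unless $\alpha=\beta$, and for $\alpha=\beta$ it equals $c\,\1$. Taking the full trace fixes the constant: $c\,\md{\alpha}=\Tr W_{\alpha,\lambda}^\dagger W_{\alpha,\lambda}=\md{\lambda}$, so $c=\md{\lambda}/\md{\alpha}$. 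Conjugating back with $U_\mathrm{Sch}^{(n-1,d)}$ yields $\delta_{\alpha\beta}\frac{\md{\lambda}}{\md{\alpha}}E^{\alpha,d}_{p_\alpha q_\alpha}$.

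The main obstacle is justifying the adapted-basis factorization in the first display. Schur--Weyl duality alone only fixes each block up to unitaries on multiplicity spaces. We need that the Young--Yamanouchi basis vectors $\ket{p_\alpha\to\lambda}$ coincide, including phases, with the images of $\ket{p_\alpha}$ under the canonical $\mfS_{n-1}$-embedding $\mcS_\alpha\hookrightarrow\mcS_\lambda$. This is exactly the defining property of the Young--Yamanouchi basis, namely that it is the Gelfand--Tsetlin basis for $\mfS_1\subset\cdots\subset\mfS_n$. Any residual phase or sign per pair $(\alpha,\lambda)$ can be absorbed into $W_{\alpha,\lambda}$, so both identities are unaffected. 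I would state this convention explicitly and cite it, as the paper does via~\cite{studzinski2022efficientmulti}.
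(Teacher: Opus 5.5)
Your proof is correct. The paper does not prove this statement itself. It quotes it, together with Theorem~\ref{thm:branching-trace-identity}, from Ref.~\cite{studzinski2022efficientmulti}, whose framework builds the matrix units inside the group algebra of $\mfS_n$. The natural comparison is therefore with the symmetric-group route.

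On that route one writes $E^{\lambda,d}_{pq}=\frac{d_\lambda}{n!}\sum_{\sigma\in\mfS_n}g_\lambda(\sigma)_{pq}\,\pi(\sigma)$, an expression that vanishes for diagrams with more than $d$ rows. One then expresses $E^{\alpha,d}_{p_\alpha q_\alpha}\otimes\1_d$ as the analogous sum over $\mfS_{n-1}$. The two sides are compared block by block using two ingredients: Schur orthogonality for $\mfS_{n-1}$, and the restriction property $g_\lambda(\tau)=\bigoplus_{\alpha\in\lambda-\square}g_\alpha(\tau)$ of Young's orthogonal form for $\tau\in\mfS_{n-1}$. The truncation to $\alpha+_d\square$ then comes for free, and no Clebsch--Gordan intertwiners or phase bookkeeping enter.

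You put the weight on the $\U(d)$ side instead. Completeness of the Pieri decomposition, $\sum_{\lambda\in\alpha+_d\square}W_{\alpha,\lambda}W_{\alpha,\lambda}^\dagger=\1$, does the work. The Young--Yamanouchi input is isolated in the multiplicity-free $\U(d)\times\mfS_{n-1}$ Schur's-lemma step. Your remark that the residual phases depend only on $(\alpha,\lambda)$, and therefore cancel, is exactly what that step needs.

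Your approach buys two things. First, Theorem~\ref{thm:branching-trace-identity}, including the factor $\md{\lambda}/\md{\alpha}$, falls out of the same factorization by Schur's lemma and a dimension count. Second, your first display is precisely the paper's own recursive construction of $U_\mathrm{Sch}^{(n,d)}$ from the transforms $\mathrm{CG}^{(d)}_{\alpha,\square}$, for which it holds essentially by definition.
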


\subsection{Quantum Schur transforms and (dual) Clebsch--Gordan transforms}

Suppose we have two irreps $\alpha, \beta$ of $\U(d)$.
The tensor product $\mcU_\alpha^{(d)}\otimes \mcU_\beta^{(d)}$ can be decomposed into irreps of $\U(d)$ as
\begin{align}
    \mcU_\alpha^{(d)} \otimes \mcU_\beta^{(d)} \cong \bigoplus_{\lambda} \mcU_\lambda^{(d)} \otimes \CC^{c_{\alpha\beta}^\lambda},
\end{align}
where $c_{\alpha\beta}^\lambda$ is the multiplicity of $\mcU_\lambda^{(d)}$ in the decomposition of $\mcU_\alpha^{(d)}\otimes \mcU_\beta^{(d)}$.
The multiplicity $c_{\alpha\beta}^\lambda$ is calculated by
\begin{align}
    \label{eq:multiplicity-integral}
    c_{\alpha\beta}^\lambda = \int_{\U(d)} \Tr(U_\alpha) \Tr(U_\beta) \overline{\Tr(U_\lambda)} \dd U.
\end{align}
The above isomorphism is called the Clebsch--Gordan (CG) transform\footnote{The CG transformation usually refers to the case where $\beta$ is the fundamental representation $\beta = \square$, but we use the term CG transform in a broader sense.}, denoted by $\mathrm{CG}_{\alpha,\beta}^{(d)}: \mcU_\alpha^{(d)}\otimes \mcU_\beta^{(d)} \to \bigoplus_{\lambda\vdash_d n+m} \mcU_\lambda^{(d)}\otimes \CC^{c_{\alpha\beta}^\lambda}$.
In particular, when $\beta$ is the fundamental representation $\beta = \square$ given by $U_\square = U$ for $U\in \U(d)$, the multiplicity $c_{\alpha\square}^\lambda$ is given by
\begin{align}
    c_{\alpha\square}^\lambda
    =
    \begin{cases}
        1 & \text{if $\lambda\in \alpha+_d\square$}\\
        0 & \text{otherwise}
    \end{cases},
\end{align}
i.e.,
\begin{align}
    \mcU_\alpha^{(d)}\otimes \mcU_\square^{(d)} \cong \bigoplus_{\lambda\in \alpha+_d\square} \mcU_\lambda^{(d)},
\end{align}
and $\mathrm{CG}_{\alpha, \square}^{(d)}$ is given by
\begin{align}
    \mathrm{CG}_{\alpha, \square}^{(d)}: \mcU_\alpha^{(d)}\otimes \mcU_\square^{(d)} \to \bigoplus_{\lambda\in \alpha+_d\square} \mcU_\lambda^{(d)}
\end{align}
such that
\begin{align}
    \label{eq:CG-transform}
    \mathrm{CG}_{\alpha, \square}^{(d)}(U_\alpha \otimes U) \mathrm{CG}_{\alpha, \square}^{(d)\dagger} = \bigoplus_{\lambda\in \alpha+_d\square} U_\lambda \quad \forall U\in \U(d).
\end{align}
We define the complex conjugate $\overline{\beta}$ of an irrep $\beta$ by $U_{\overline{\beta}} = \overline{U_\beta}$ for $U_\beta\in \U(d)$.
Then, the multiplicity satisfies $c_{\alpha\beta}^\lambda = c_{\lambda\overline{\beta}}^\alpha$ due to Eq.~\eqref{eq:multiplicity-integral}.
Since $\mcU_{\overline{\beta}}^{(d)}$ is isomorphic to $\mcU_\beta^{(d)}$ as a vector space, we can take the same labeling $u_\beta\in [\md{\beta}]$ for the basis vectors of $\mcU_{\overline{\beta}}^{(d)}$ and $\mcU_\beta^{(d)}$.
In particular, we can take these basis vectors such that~\cite[Eq.~(10), p.~298]{klimyk1995representations}
\begin{align}
    \label{eq:dual-CG}
    \bra{u_\alpha} \bra{a}\mathrm{CG}_{\lambda, \overline{\beta}}^{(d)} \ket{u_\lambda} \ket{u_\beta} = \sqrt{\md{\alpha} \over \md{\lambda}} \overline{\bra{u_\lambda} \bra{a}\mathrm{CG}_{\alpha, \beta}^{(d)}\ket{u_\alpha} \ket{u_\beta}} \quad \forall u_\alpha\in [\md{\alpha}], u_\beta \in [\md{\beta}], u_\lambda \in [\md{\lambda}], a \in [c_{\alpha\beta}^\lambda].
\end{align}
We call the isomorphism $\mathrm{CG}_{\lambda, \overline{\beta}}^{(d)}$ the dual CG transform\footnote{The dual CG transform usually refers to $\mathrm{CG}_{\lambda, \overline{\square}}$, but we use the term in a broader sense.}.
By concatenating the CG transforms $\mathrm{CG}_{\alpha, \square}^{(d)}$ for irreps $\alpha$, we can construct the quantum Schur transforms $U_\mathrm{Sch}^{(n,d)}$ as follows:
\begin{align}
    (\CC^{d})^{\otimes n} &\cong \mcU_\square^{(d)} \otimes \cdots \otimes \mcU_\square^{(d)}\\
    &\cong \bigoplus_{\alpha\vdash_d 2} \mcU_\alpha^{(d)} \otimes \mcU_\square^{(d)} \otimes \cdots \mcU_\square^{(d)}\\
    &\cong \cdots \\
    &\cong \bigoplus_{\lambda\vdash_d n} \mcU_\lambda^{(d)} \otimes \mathrm{span}\{\ket{p} \mid p\in \Path{\lambda}\}.
\end{align}
The multiplicity space $\mathrm{span}\{\ket{p} \mid p\in \Path{\lambda}\}$ is isomorphic to the Specht module $\mcS_\lambda$ due to the Schur--Weyl duality shown in Eq.~\eqref{eq:schur-weyl}.

\begin{figure}[t]
    \centering
    \includegraphics{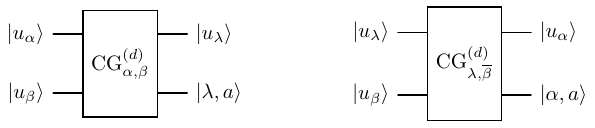}
    \caption{(Left panel) The CG transforms in the quantum circuit.
    Two input registers correspond to the Weyl modules $\mcU_\alpha^{(d)}$ and $\mcU_\beta^{(d)}$.
    The first output register corresponds to the Weyl module $\mcU_\lambda^{(d)}$.
    The second output register contains the irrep label $\lambda$ and the multiplicity label $a\in[c_{\alpha\beta}^{\lambda}]$.
    (Right panel) The dual CG transforms in the quantum circuit, similar to the left panel.}
    \label{fig:cg-transforms}
\end{figure}

\begin{figure}[t]
    \centering
    \includegraphics{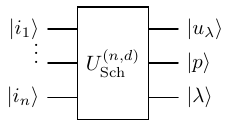}
    \caption{The quantum Schur transform $U_\mathrm{Sch}^{(n,d)}$ in the quantum circuit.
    It maps the computational-basis registers to the Weyl-module register $\ket{u_\lambda}$, the Specht-module register $\ket{p}$, and the Young-diagram register $\ket{\lambda}$.}
    \label{fig:quantum-schur-transform}
\end{figure}

The (dual) CG transforms and the quantum Schur transform can be represented as quantum circuits as shown in Figs.~\ref{fig:cg-transforms} and~\ref{fig:quantum-schur-transform}.
The CG transforms $\mathrm{CG}_{\alpha, \square}^{(d)}$ and the quantum Schur transform  $U_\mathrm{Sch}^{(n,d)}$ can be implemented with the circuit complexity $O(\mathrm{poly}(\log d, n))$~\cite{burchardt2025high} and the dual CG transform $\mathrm{CG}_{\alpha, \overline{\square}}^{(d)}$ can be implemented with the circuit complexity $O(\mathrm{poly}(d, n))$~\cite{nguyen2023mixed,grinko2023gelfand,fei2023efficient,grinko2025mixed}.
The quantum circuit for the CG transforms $\mathrm{CG}_{\alpha, \beta}^{(d)}$ for general irreps $\alpha$ and $\beta$ is not known.

\subsection{Stiefel manifold, its Haar measure, and Schur--Weyl duality applied to isometry channels}

The set of isometry operators $V: \CC^d\to \CC^D$ can be identified with the Stiefel manifold $\isometry{d}{D}$, which is the set of $D\times d$ complex matrices with orthonormal columns~\cite{james1976topology}, i.e.,
\begin{align}
    \isometry{d}{D} = \{V: \CC^d \to \CC^D \mid V^\dagger V = \1_d\}.
\end{align}
It is isomorphic to the quotient space $\U(D)/\U(D-d)$, and the Haar measure on $\U(D)$ induces a unique normalized invariant measure $\dd V$ on $\isometry{d}{D}$, which satisfies the left- and right-invariance property $\dd (WVU) = \dd V$ for all $U\in\U(d)$ and $W\in\U(D)$.

The $n$-fold unitary operator $U^{\otimes n}$ for $U\in \U(d)$ is represented in the Schur basis as Eq.~\eqref{eq:unitary-in-schur-basis}, and the tensor product of an irrep $U_\alpha$ of $\U(d)$ and the fundamental representation $U$ is represented as Eq.~\eqref{eq:CG-transform}.
These representations can be extended to isometry operators $V:\CC^d\to\CC^D$ as follows~\cite{yoshida2023universal,yoshida2025universal,yoshida2026quantum}:
\begin{align}
    U_{\mathrm{Sch}}^{(n,D)}V^{\otimes n}
    U_{\mathrm{Sch}}^{(n,d)\dagger}
    &=
    \bigoplus_{\alpha\vdash_d n}V_\alpha\otimes\1_{\mcS_\alpha} \quad \forall V\in \isometry{d}{D},
    \label{eq:schur-isometry}\\
    \mathrm{CG}_{\alpha, \square}^{(D)}(V_\alpha \otimes V) \mathrm{CG}_{\alpha, \square}^{(d)\dagger} &= \bigoplus_{\lambda\in \alpha+_d\square} V_\lambda \quad \forall V\in \isometry{d}{D},
\end{align}
where $V_\alpha: \mcU_\alpha^{(d)} \to \mcU_\alpha^{(D)}$ is an isometry between the Weyl modules of $\U(d)$ and $\U(D)$ associated with the same Young diagram $\alpha$.
The map $V\mapsto V_\alpha$ is not a representation, but it admits a similar expression: $(WVU)_\alpha = W_\alpha V_\alpha U_\alpha$ holds for $V\in \isometry{d}{D}$, $U\in \U(d)$, and $W\in \U(D)$.

We introduce the following lemma for the calculation of the Haar integral over $\isometry{d}{D}$:

\begin{lemma}
    \label{lem:haar-integral}
    For any $\alpha, \beta\vdash_d n$, the following relation holds:
    \begin{align}
        \int_{\isometry{d}{D}} \dd V V_\alpha (\cdot) V_\beta^\dagger = \delta_{\alpha\beta} {\1_{\mcU_\alpha^{(D)}} \over \mD{\alpha}} \Tr(\cdot).
    \end{align}
\end{lemma}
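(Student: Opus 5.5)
Let $X\in\mcL(\mcU_\beta^{(d)},\mcU_\alpha^{(d)})$ be an arbitrary operator and set $T(X)\coloneqq \int \dd V\, V_\alpha X V_\beta^\dagger$. We reduce to an invariance argument on $\U(D)$ and then use the unitary irrep theory already available. Because the measure is left invariant, $\dd(WV)=\dd V$ for $W\in\U(D)$. Combined with the covariance $(WV)_\alpha=W_\alpha V_\alpha$ stated just before the lemma, this gives
\begin{align}
    W_\alpha T(X) W_\beta^\dagger = T(X) \quad \forall W\in\U(D).
\end{align}
So $T(X)$ is an intertwiner from the $\U(D)$ irrep $\mcU_\beta^{(D)}$ to $\mcU_\alpha^{(D)}$. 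The labels $\alpha,\beta$ remain distinct Young diagrams when read as $\U(D)$ irreps, since they have at most $d\le D$ rows and the same number of boxes. Schur's lemma then yields $T(X)=0$ for $\alpha\neq\beta$, and $T(X)=c(X)\1_{\mcU_\alpha^{(D)}}$ for $\alpha=\beta$.

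To determine $c(X)$ we take traces. We have $\Tr T(X)=\int \dd V\,\Tr(V_\alpha^\dagger V_\alpha X)$. The key input is that $V_\alpha$ is an isometry, i.e.\ $V_\alpha^\dagger V_\alpha=\1_{\mcU_\alpha^{(d)}}$. This holds because it is the restriction of the isometry $V^{\otimes n}$ to the Schur block of Eq.~\eqref{eq:schur-isometry}: taking $(\cdot)^\dagger(\cdot)$ of that equation gives $\bigoplus_\alpha V_\alpha^\dagger V_\alpha\otimes\1_{\mcS_\alpha}=\1$. Hence $\Tr T(X)=\Tr X$, and therefore $c(X)=\Tr(X)/\mD{\alpha}$, which is the claimed formula.

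The main point needing care is the first step, namely that $V\mapsto V_\alpha$ satisfies $(WV)_\alpha=W_\alpha V_\alpha$ and that $V_\alpha$ maps into the $\alpha$ block with no cross terms. We take both from the cited construction via Eq.~\eqref{eq:schur-isometry}. To justify the block structure, note that $V^{\otimes n}$ intertwines the $\mfS_n$ actions on $(\CC^d)^{\otimes n}$ and $(\CC^D)^{\otimes n}$. Moreover, $W^{\otimes n}V^{\otimes n}$ carries the $\U(D)$ action, which makes the covariance immediate. With these facts in hand, the remaining steps are routine.
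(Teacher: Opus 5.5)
Your proof is correct and follows the same route as the paper: left-invariance of the Haar measure together with $(WV)_\alpha=W_\alpha V_\alpha$ makes the integral a $\U(D)$-intertwiner, Schur's lemma gives the block structure, and comparing traces fixes the constant. You also justify explicitly that $V_\alpha^\dagger V_\alpha=\1_{\mcU_\alpha^{(d)}}$, which is what gives $\Tr T(X)=\Tr X$; the paper leaves this step implicit.
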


\begin{proof}
    Due to the left-invariance $\dd(WV) = \dd V$ of the Haar measure $\dd V$ for all $W\in \U(D)$, the left-hand side satisfies
    \begin{align}
        W_\alpha \left(\int_{\isometry{d}{D}} \dd V V_\alpha (\cdot) V_\beta^\dagger\right) W_\beta^\dagger = \int_{\isometry{d}{D}} \dd V V_\alpha (\cdot) V_\beta^\dagger
    \end{align}
    for all $W\in \U(D)$.
    Due to Schur's lemma, we obtain
    \begin{align}
        \int_{\isometry{d}{D}} \dd V V_\alpha (\cdot) V_\beta^\dagger
        \propto
        \begin{cases}
            0 & (\alpha\neq \beta)\\
            \1_{\mcU_\alpha^{(D)}} & (\alpha = \beta)
        \end{cases}.
    \end{align}
    By comparing the traces of both sides, we obtain the desired equality.
\end{proof}

\subsection{Semidefinite programming for optimal quantum superchannels}

In this work, we consider the task of transforming an unknown isometry channel $V:\CC^d\to\CC^D$ into its complex conjugate channel $\overline{V}:\CC^d\to\CC^D$.
We first summarize the notion of quantum superchannels describing the transformation of quantum channels (see Ref.~\cite{taranto2025higher} for the review).
Then, we define the fidelity-based figure of merit for the task and provide the semidefinite programming (SDP) formulation for the optimal fidelity based on Ref.~\cite{chiribella2016optimal}.

The quantum superchannel is a multi-linear map transforming $n$ input channels $\Phi_1,\ldots,\Phi_n$ for $\Phi_i: \mcL(\mcI_i)\to \mcL(\mcO_i)$, $i\in [n]$ into an output channel $\Phi_\mathrm{out}: \mcL(\mcP)\to \mcL(\mcF)$, where $\mcI_i$ and $\mcO_i$ are the input and output Hilbert spaces of the $i$-th channel, $\mcP$ and $\mcF$ are the input and output Hilbert spaces of the output channel, and $\mcL(\mcX)$ is the set of linear operators on a Hilbert space $\mcX$.
It is naturally represented as a linear map transforming $\Phi_1\otimes \cdots \otimes \Phi_n$ to $\Phi_\mathrm{out}$ as
\begin{align}
    \sfS: \bigotimes_{i=1}^{n} [\mcL(\mcI_i)\to \mcL(\mcO_i)] \to [\mcL(\mcP)\to \mcL(\mcF)],
\end{align}
where $[\mcL(\mcX) \to \mcL(\mcY)]$ represents the set of linear maps from $\mcL(\mcX)$ to $\mcL(\mcY)$.
We can represent the input and output channels by their Choi matrices $J_{\Phi_i}\in \mcL(\mcI_i\otimes \mcO_i)$ and $J_{\Phi_\mathrm{out}}\in \mcL(\mcP\otimes \mcF)$, where the Choi matrix $J_{\Phi}$ of $\Phi: \mcL(\mcI)\to \mcL(\mcO)$ is defined by
\begin{align}
    J_\Phi &\coloneqq (\1_{\mcL(\mcI)}\otimes \Phi)(\dketbra{\1}_\mcI) \in \mcL(\mcI\otimes \mcO),\\
    \dket{\1}_{\mcI}&\coloneqq \sum_{i=1}^{\dim(\mcI)} \ket{i}\otimes \ket{i} \in \mcI\otimes \mcI,
\end{align}
using the computational basis $\{\ket{i}\}$ of $\mcI$.
Then, the superchannel $\sfS$ is represented by a matrix $J_{\sfS}\in \mcL(\mcI^n\otimes \mcO^n\otimes \mcP\otimes \mcF)$ called the Choi matrix of the superchannel $\sfS$, where $\mcI^n$ and $\mcO^n$ are joint Hilbert spaces defined by $\mcI^n \coloneqq \bigotimes_{i=1}^{n} \mcI_i$ and $\mcO^n \coloneqq \bigotimes_{i=1}^{n} \mcO_i$.
The Choi matrix $J_{\sfS}$ satisfies
\begin{align}
    J_{\Phi_\mathrm{out}} = J_{\sfS} \ast \bigotimes_{i=1}^{n} J_{\Phi_i},
\end{align}
where $\ast$ is the link product defined by
\begin{align}
    A \ast B \coloneqq \Tr_{\mcX}[(A\otimes \1_{\mcZ})(B^{\top_\mcX} \otimes \1_{\mcY})],
\end{align}
for $A\in \mcL(\mcX\otimes \mcY)$ and $B\in \mcL(\mcX\otimes \mcZ)$, and $\top_\mcX$ is the partial transpose on $\mcX$.
The link product of the Choi matrices of two quantum channels represents the composition of the channels, i.e., for $\Phi_1: \mcL(\mcV)\to \mcL(\mcW \otimes \mcX)$ and $\Phi_2: \mcL(\mcX \otimes \mcY)\to \mcL(\mcZ)$, we have
\begin{align}
    J_{\Phi_1} \ast J_{\Phi_2} = J_{(\Phi_2 \otimes \1_{\mcL(\mcW)}) \circ (\1_{\mcL(\mcY)} \otimes \Phi_1)}.
\end{align}

We define the parallel quantum superchannel, which is realized in a quantum circuit by calling the input channels $\Phi_1, \ldots, \Phi_n$ in parallel, as
\begin{align}
    \sfS\left(\bigotimes_{i=1}^{n} \Phi_i\right) = \mcD \circ \left(\bigotimes_{i=1}^{n} \Phi_i\right) \circ \mcE,
\end{align}
using an encoder channel $\mcE: \mcL(\mcP)\to \bigotimes_{i=1}^{n} \mcL(\mcI_i) \otimes \mcL(\mcA)$ and a decoder channel $\mcD: \bigotimes_{i=1}^{n} \mcL(\mcO_i) \otimes \mcL(\mcA) \to \mcL(\mcF)$, where $\mcA$ is an auxiliary Hilbert space.
Then, the Choi matrix $J_{\sfS}$ of a parallel superchannel is represented as $J_{\sfS} = J_{\mcD} \ast J_{\mcE}$, and such a Choi matrix $J_{\sfS}$ is characterized by the following conditions~\cite{chiribella2008quantumcircuits}:
\begin{align}
    J_{\sfS} &\geq 0,\\
    \Tr_{\mcF}J_{\sfS} &= C \otimes \1_{\mcO^n},\\
    \Tr_{\mcI^n}C &= \1_{\mcP},
\end{align}
where $C \coloneqq \Tr_{\mcF \mcO^n}J_{\sfS}/\dim(\mcO^n)$.
By introducing the linear map $\mcL_\mathrm{PAR}(J_\sfS)\coloneqq (\Tr_{\mcF}J_{\sfS} - C \otimes \1_{\mcO^n}) \oplus (\Tr_{\mcI^n}C - \1_{\mcP})$, the above condition is summarized as
\begin{align}
    J_\sfS\geq 0, \quad \mcL_{\mathrm{PAR}}(J_\sfS) = 0.
\end{align}
The most general quantum superchannel is defined in an axiomatic way, which is defined as a linear map $\sfS$ satisfying the following conditions:
\begin{itemize}
    \item Completely CP preserving: For any $m_1, \ldots, m_n\in \ZZ_{\geq 0}$ and any completely positive (CP) maps $\Phi_i: \mcL(\mcI_i) \to \mcL(\mcO_i \otimes \CC^{m_i})$ for $i\in [n]$, the map $(\sfS \otimes \1_{\mcL(\CC^{m_1\cdots m_n})})(\Phi_1\otimes \cdots \Phi_n)$ is also CP.
    \item TP preserving: For any trace preserving (TP) maps $\Phi_1, \ldots, \Phi_n$, the output channel $\sfS(\Phi_1 \otimes \cdots \otimes \Phi_n)$ is also TP.
\end{itemize}
The above conditions are represented in terms of the Choi matrix $J_{\sfS}$ as follows:
\begin{align}
    J_{\sfS} \geq 0, \quad \mcL_{\mathrm{GEN}}(J_{\sfS}) =0,
\end{align}
where $\mcL_{\mathrm{GEN}}$ is a certain linear map (see Ref.~\cite{oreshkov2016causal} for the explicit form).
We define the set of parallel and general quantum superchannels as
\begin{align}
    X \coloneqq \{\sfS\mid J_{\sfS}\geq 0, \mcL_X(J_\sfS)=0\}
\end{align}
for $X\in \{\mathrm{PAR}, \mathrm{GEN}\}$.
Any quantum superchannel implementable in the quantum circuit satisfies the above two conditions, such as the parallel one defined above, the sequential one (also called the quantum comb), causally separable one, and quantum circuits with classical control (QC-CC).
It also includes the quantum superchannels beyond the quantum circuit model, called the quantum superchannels with indefinite causal order~\cite{hardy2007towards,oreshkov2012quantum,chiribella2013quantum}.
Thus, the set of general quantum superchannels can be considered as an outer approximation of the set of quantum superchannels implementable in the quantum circuit model.
In this work, we consider the general quantum superchannels to obtain the upper bound of the optimal fidelity for transforming an unknown isometry channel into its complex conjugate channel.

In this work, we consider the task of transforming an unknown isometry channel $V:\CC^d\to\CC^D$ into another isometry channel $f(V)$, where $f$ is a function on $\isometry{d}{D}$.
To quantify the performance, we introduce the average-case fidelity defined by
\begin{align}
    F_{\mathrm{avg}}(\sfS, f)
    \coloneqq \int_{\isometry{d}{D}} \dd V F_{\mathrm{ch}}(\sfS(\mcV^{\otimes n}), f(\mcV)),
\end{align}
where $\mcV(\cdot)\coloneqq V(\cdot)V^\dagger$ is the isometry channel corresponding to $V\in \isometry{d}{D}$, $f(\mcV)$ is the isometry channel corresponding to $f(V)$, and $F_{\mathrm{ch}}$ is the channel fidelity defined by
\begin{align}
    F_{\mathrm{ch}}(\Phi,\mcV)
    \coloneqq
    \frac{1}{d^2}\dbra{V}J_\Phi\dket{V},
    \label{eq:channel-fidelity}
\end{align}
for a CPTP map $\Phi$ and an isometry channel $\mcV$ using the dual ket $\dket{V}$ defined by
\begin{align}
    \dket{V}\coloneqq \sum_{i=1}^{d} \ket{i}\otimes V\ket{i} \in \CC^d\otimes \CC^D.
\end{align}
Then, the average-case fidelity is expressed as a linear function of $J_\sfS$ by
\begin{align}
    F_{\mathrm{avg}}(\sfS, f) = \Tr(J_\sfS \Omega_{f, n})
\end{align}
using the performance operator $\Omega_{f, n}$ defined by
\begin{align}
    \label{eq:performance-operator}
    \Omega_{f, n} \coloneqq \frac{1}{d^2} \int_{\isometry{d}{D}} \dd V \dketbra{f(V)}_{\mcP\mcF} \otimes \overline{\dketbra{V}}_{\mcI^n\mcO^n}^{\otimes n}.
\end{align}
For the tasks considered in this work, we can show that the optimal average-case fidelity is the same as the optimal worst-case fidelity, which is defined by
\begin{align}
    F_{\mathrm{worst}}(\sfS, f) \coloneqq \min_{V\in \isometry{d}{D}} F_{\mathrm{ch}}(\sfS(\mcV^{\otimes n}), f(\mcV)),
\end{align}
due to the unitary-group covariance [see also Eq.~\eqref{eq:unitary-symmetry}].

The optimization of the performance in the parallel or general quantum superchannel is formalized as the following SDP:
\begin{align}
    \begin{split}
        \label{eq:sdp}
    \max\; &\Tr(J_\sfS \Omega_{f, n})\\
    \mathrm{s.t.} \; & J_{\sfS}\geq 0, \quad \mcL_X (J_{\sfS}) = 0,
    \end{split}
\end{align}
for $X\in \{\mathrm{PAR}, \mathrm{GEN}\}$.
In particular, for the case of $X=\mathrm{PAR}$, this SDP is given by
\begin{align}
    \begin{split}
        \label{eq:primal-sdp-parallel}
    \max\; &\Tr(J_\sfS \Omega_{f, n})\\
    \mathrm{s.t.} \; & J_{\sfS}\geq 0,\\
    &\Tr_{\mcF}J_{\sfS} = C \otimes \1_{\mcO^n},\\
    &\Tr_{\mcI^n}C = \1_{\mcP}.
    \end{split}
\end{align}
We can consider the dual problem of the SDP~\eqref{eq:sdp} defined by
\begin{align}
    \begin{split}
        \label{eq:dual-sdp}
    \min\; & c\\
    \mathrm{s.t.}\; & \Omega_{f,n} \leq c W, \quad W\in L_X^*,
    \end{split}
\end{align}
where $L_X$ is an affine space defined by $L_X\coloneqq \{A\mid \mcL_X(A) = 0\}$ and $L_X^*$ is its affine dual defined by $L_X^*\coloneqq \{B\mid \Tr(AB^\dagger) = 1 \quad \forall A\in L_X\}$.
Due to the weak duality, any feasible solution of the dual SDP~\eqref{eq:dual-sdp-general} provides an upper bound of the primal SDP~\eqref{eq:sdp}.
For the case of $X=\mathrm{GEN}$, the dual SDP is given by~\cite{bavaresco2021strict}
\begin{align}
    \begin{split}
        \label{eq:dual-sdp-general}
    \min\; & c\\
    \mathrm{s.t.}\; & \Omega_{f,n} \leq c (W\otimes \1_\mcF),\\
    & \Tr_{\mcO_i}W
    =\Tr_{\mcI_i\mcO_i}W\otimes\frac{\1_{\mcI_i}}{d} \quad (i\in [n]),\\
    & \Tr W=d^n.
    \end{split}
\end{align}

\subsection{Lemmas in matrix inequality}

We introduce the following lemma for the matrix inequality, which is used in the proof of the optimality of the parallel protocol for complex conjugation of an unknown isometry channel.

\begin{lemma}
    \label{eq:matrix-inequality}
    Suppose $V$ is a finite-dimensional complex vector space, $\{\ket{a}\}_{a\in A}\subset V$ is a set of vectors labeled by a finite set $A$, and $\{c_a\}_{a\in A}\subset \CC$ is a set of complex coefficients such that $c_a$ are not all zero.
    Defining $\ket{b}\coloneqq\sum_{a\in A}c_a\ket{a}$, we have
    \begin{align}
        \sum_{a\in A}\ketbra{a}
        \geq
        \frac{\ketbra{b}}{\sum_{a\in A}|c_a|^2}.
    \end{align}
\end{lemma}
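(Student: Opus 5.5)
The inequality is an operator Cauchy--Schwarz bound, so we reduce it to the scalar Cauchy--Schwarz inequality. Positive semidefiniteness of a Hermitian operator on the finite-dimensional space $V$ is equivalent to nonnegativity of its expectation value in every vector. We therefore fix an arbitrary $\ket{\psi}\in V$ and compare the two sides after sandwiching them with $\ket{\psi}$.

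The left-hand side gives $\bra{\psi}\bigl(\sum_{a\in A}\ketbra{a}\bigr)\ket{\psi}=\sum_{a\in A}\abs{\braket{\psi}{a}}^2$. For the right-hand side, linearity gives $\braket{\psi}{b}=\sum_{a\in A}c_a\braket{\psi}{a}$, so the expectation value is $\abs{\sum_{a}c_a\braket{\psi}{a}}^2/\sum_{a}\abs{c_a}^2$. The denominator is strictly positive because the $c_a$ are not all zero.

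It therefore suffices to show
\begin{align}
    \abs{\sum_{a\in A}c_a\braket{\psi}{a}}^2\leq \left(\sum_{a\in A}\abs{c_a}^2\right)\left(\sum_{a\in A}\abs{\braket{\psi}{a}}^2\right).
\end{align}
This is the Cauchy--Schwarz inequality in $\CC^{\abs{A}}$ applied to the vectors $(\overline{c_a})_{a\in A}$ and $(\braket{\psi}{a})_{a\in A}$. Dividing by $\sum_a\abs{c_a}^2$ yields the claimed inequality for every $\ket{\psi}$, and hence the operator inequality.

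There is no real obstacle here. The only points needing care are that the vectors $\ket{a}$ need not be orthonormal or linearly independent, which the argument never uses, and that the denominator is nonzero, which follows from the hypothesis on the coefficients.
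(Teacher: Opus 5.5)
Your proposal is correct and follows essentially the same route as the paper: both test the operator inequality on an arbitrary $\ket{\psi}\in V$ and reduce it to the scalar Cauchy--Schwarz inequality $\abs{\sum_{a}c_a\braket{\psi}{a}}^2\leq\bigl(\sum_a\abs{c_a}^2\bigr)\bigl(\sum_a\abs{\braket{\psi}{a}}^2\bigr)$. You also state explicitly the points the paper leaves implicit, namely that the denominator is nonzero and that orthonormality of the $\ket{a}$ is never used.
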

\begin{proof}
    It is enough to test the operator inequality on an arbitrary vector $\ket{\psi}\in V$.
    Due to the Cauchy--Schwarz inequality, we have
    \begin{align}
        \braket{\psi}{b} \braket{b}{\psi}
        &= \abs{\sum_{a\in A} c_a \braket{\psi}{a}}^2\\
        &\leq \left(\sum_{a\in A} \abs{c_a}^2 \right) \left(\sum_{a\in A} \abs{\braket{\psi}{a}}^2\right)
    \end{align}
    Thus, we have
    \begin{align}
        \frac{\braket{\psi}{b}\braket{b}{\psi}}
        {\sum_{a\in A}|c_a|^2}
        &\leq \sum_{a\in A} \abs{\braket{\psi}{a}}^2\\
        &= \bra{\psi} \sum_{a\in A} \ketbra{a} \ket{\psi}.
    \end{align}
\end{proof}

\section{Optimal complex conjugation of isometry channels}
\label{sec:optimal-complex-conjugation}

Here we show the following theorem on the optimal fidelity for complex conjugation of an unknown isometry channel from $n$ calls.
We define the optimal fidelity of isometry complex conjugation $f: V\mapsto \overline{V}$ for $V\in\isometry{d}{D}$ in the parallel and general quantum superchannels by
\begin{align}
    F_{d,D,n}^{(X)} \coloneqq \sup_{\sfS\in X} F_\mathrm{avg}(\sfS, f)
\end{align}
for $X\in \{\mathrm{PAR}, \mathrm{GEN}\}$.
Note that this result is consistent with the known result for unitary complex conjugation when $D=d$~\cite{miyazaki2019complex,ebler2023optimal}, and for state transposition when $d=1$~\cite{buzek1999optimal,brzic2026optimal}.

\begin{figure}
    \centering
    \includegraphics{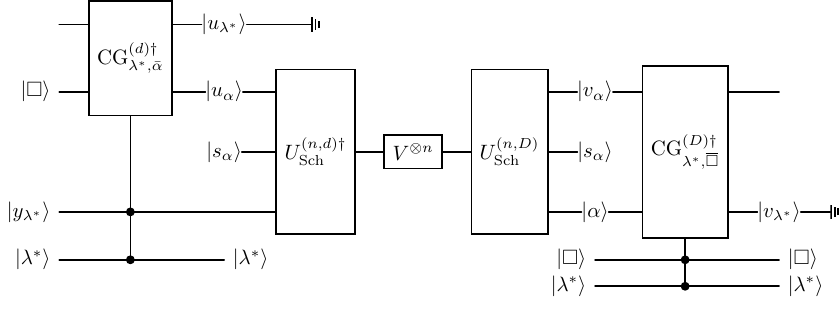}
    \caption{The optimal parallel protocol for complex conjugation of an unknown isometry $V:\CC^d\to\CC^D$ from $n$ calls.
    See the proof sketch of Prop.~\ref{prop:optimal-circuit} for the details of the implementation.}
    \label{fig:optimal-circuit}
\end{figure}

\begin{theorem}
    \label{thm:optimal-fidelity}
The optimal average channel fidelity for complex conjugation of an unknown isometry $V:\CC^d\to\CC^D$ from $n$ calls is given by
\begin{align}
F_{d,D,n}^{(\mathrm{PAR})}
&=F_{d,D,n}^{(\mathrm{GEN})}
=\max_{\lambda\vdash_d n+1}
{1\over d}
\frac{\sum_{\alpha\in\lambda-\square}\mD{\alpha}}
{\mD{\lambda}}
\label{eq:max_form}\\
&=
\begin{cases}
\dfrac{n+1}{d(D-n)} & (q=0) \\
\dfrac{1}{d}\left[D-(D-d)R_{d,D,n}\right]
& (q\geq 1)
\end{cases},
\label{eq:closed-form}
\end{align}
where $R_{d,D,n}$ is defined by
\begin{align}
    R_{d,D,n}
    &\coloneqq
    \frac{(q+D+1)(q+D-r)}
    {(q+D-r+1)(q+D-d)},
    \label{eq:dimension-ratio}
\end{align}
using $q\in \ZZ_{\geq 0}$ and $0\leq r<d$ such that $n+1 = qd+r$.
The balanced diagram
\begin{align}
    \lambda^*=\bigl((q+1)^r,q^{d-r}\bigr)\coloneqq (\underbrace{q+1,\ldots,q+1}_{r},\underbrace{q,\ldots,q}_{d-r})
    \label{eq:balanced-diagram}
\end{align}
is a maximizer in Eq.~\eqref{eq:max_form}, and the parallel protocol in Fig.~\ref{fig:optimal-circuit} attains the optimal fidelity.
\end{theorem}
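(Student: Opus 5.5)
The proof has three parts: a parallel protocol that achieves the value in Eq.~\eqref{eq:max_form}, a matching upper bound for general superchannels via the dual SDP~\eqref{eq:dual-sdp-general}, and an evaluation of the maximization that yields Eq.~\eqref{eq:closed-form}.

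\textbf{Reducing the performance operator.} The key structural observation is that $\overline{V}$ on $\mcP\mcF$ together with $\overline{V}^{\otimes n}$ on $\mcI^n\mcO^n$ gives $\overline{V}^{\otimes (n+1)}$ after a partial transpose on $\mcP\mcF$. In Choi form, $\Omega_{f,n}$ becomes (up to complex conjugation and partial transpose on $\mcP\mcF$) $\frac{1}{d^2}\int \dd V\, \dketbra{V^{\otimes n+1}}$. Using Eq.~\eqref{eq:schur-isometry} and Lemma~\ref{lem:haar-integral}, this equals
\[
\frac{1}{d^2}\bigoplus_{\mu\vdash_d n+1}\frac{1}{\mD{\mu}}\,\1_{\mcU^{(D)}_\mu}\otimes\dketbra{\1_{\mcU^{(d)}_\mu}}\otimes\dketbra{\1_{\mcS_\mu}},
\]
with Schur transforms applied on the $n+1$ input systems and the $n+1$ output systems. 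Undoing the partial transpose on the last slot $(\mcP,\mcF)$ turns the relevant coupling into the dual CG structure. This is handled by the identity in Eq.~\eqref{eq:dual-CG} together with Theorems~\ref{thm:branching-trace-identity} and~\ref{thm:branching-tensor-identity}, which relate $\mu\vdash n+1$ to $\alpha\in\mu-\square$ with $\alpha\vdash n$.

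\textbf{Achievability.} For the parallel protocol of Fig.~\ref{fig:optimal-circuit} with a fixed $\lambda\vdash_d n+1$, the encoder prepares a maximally entangled state on $\mcU^{(d)}_\alpha\otimes\mcS_\alpha$ in superposition over $\alpha\in\lambda-\square$. It does this by applying $\mathrm{CG}^{(d)\dagger}$ to the input $\ket{\psi}\in\CC^d$ combined with an ancilla in $\mcU_\lambda$, which is the transpose trick; the weights are chosen proportional to $\sqrt{\mD{\alpha}}$. After $V^{\otimes n}$, the decoder applies the dual CG transform $\mathrm{CG}^{(D)}_{\lambda,\overline\square}$ in reverse to extract $\overline{V}\ket{\psi}$. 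Evaluating the fidelity amounts to computing $\Tr(J_\sfS\Omega)$ with these branching rules, and the answer should come out to $\frac{1}{d}\sum_\alpha \mD{\alpha}/\mD{\lambda}$.

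\textbf{Converse.} I would exploit the $\U(d)\times\U(D)$ covariance to restrict $W$ to the form
\[
W=\bigoplus_{\alpha\vdash_d n} w_\alpha\,\frac{\1}{\cdots}\otimes\Pi_\alpha,
\]
where $\Pi_\alpha$ is the isotypic projector onto the $\alpha$ sector of $\mcI^n\mcO^n$ built from the $\dketbra{\1_{\mcS_\alpha}}$-type vectors. On the $\mu$ block, $\Omega$ is rank one in the multiplicity variables: it is a vector $\ket{b_\mu}=\sum_{\alpha\in\mu-\square}c_\alpha\ket{a_\alpha}$. Lemma~\ref{eq:matrix-inequality} then gives $\Omega\le c\,(W\otimes\1_\mcF)$ provided $c$ is at least the maximum over $\mu$ of the relevant ratio. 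Choosing $w_\alpha\propto\mD{\alpha}$ normalized so that $\Tr W=d^n$ makes the general-superchannel constraints hold automatically, because $W$ is of product-with-identity form on each $\mcO_i$, and the resulting bound is $\max_\mu \frac{1}{d}\sum_{\alpha\in\mu-\square}\mD{\alpha}/\mD{\mu}$. Finding a $W$ that is simultaneously feasible and tight, so that the Cauchy--Schwarz weights align with the weights used in the protocol, is the main obstacle. I would first try $W\propto\1$ restricted to $\dketbra{\1_{\mcS_\alpha}}$ components, which is the structure that works in unitary complex conjugation, and adjust the weights afterwards.

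\textbf{Closed form.} By the Weyl dimension formula,
\[
\frac{\mD{\lambda-e_i}}{\mD{\lambda}}=\prod_{j\ne i}\frac{\tilde\lambda_i-1-\tilde\lambda_j}{\tilde\lambda_i-\tilde\lambda_j}\cdot\frac{\tilde\lambda_i-1+\ldots}{\cdots},
\]
a rational function of the $\tilde\lambda_i$ with the factors for $j>d$ supplied by $\lambda_j=0$. The sum over removable boxes is then a partial-fraction/residue identity. To show that the maximum is attained at the balanced diagram $\lambda^*$, I would use a local move: transferring a box from a longer row to a shorter row does not decrease the objective. Evaluating the objective at $\lambda^*$, where only two distinct removable corners occur for $q\ge1$ (one when $r=0$), gives the two cases of Eq.~\eqref{eq:closed-form}, including $R_{d,D,n}$.
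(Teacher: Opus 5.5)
Your outline follows the paper: the Schur--Weyl form of $\Omega_{d,D,n}$, a dual-CG protocol with amplitudes $\propto\sqrt{\mD{\alpha}}$, a permutation-covariant dual ansatz closed with Lemma~\ref{eq:matrix-inequality}, and the residue and box-moving evaluation of the maximum. The converse, however, has a genuine gap at exactly the step you call the main obstacle.

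The dual constraints do not hold ``automatically because $W$ is of product-with-identity form on each $\mcO_i$''. The required $W$ is not of the form $W'\otimes\1_{\mcO_i}$; already for $n=2$ it is built from the symmetric and antisymmetric projectors on $\mcO_1\mcO_2$. Moreover, the constraint $\Tr_{\mcO_i}W=\Tr_{\mcI_i\mcO_i}W\otimes\1_{\mcI_i}/d$ asks for identity on the \emph{input} $\mcI_i$ after tracing out the output. It is this constraint, not $\Tr W=d^n$, that fixes the relative weights. Write $W=\sum_{\alpha\vdash_d n}w_\alpha\sum_{p,q\in\Path{\alpha}}(E^\alpha_{pq})_{\mcI^n}\otimes(E^\alpha_{pq})_{\mcO^n}\otimes\1_{\mcP}$. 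Permutation covariance reduces the check to $i=n$. Theorem~\ref{thm:branching-trace-identity} applied on $\mcO_n$ then produces, for each $\gamma\in\alpha-\square$, the term $w_\alpha\frac{\mD{\alpha}}{\mD{\gamma}}(E^\alpha_{p_\gamma\to\alpha,q_\gamma\to\alpha})_{\mcI^n}\otimes(E^\gamma_{p_\gamma q_\gamma})_{\mcO^{n-1}}$. Theorem~\ref{thm:branching-tensor-identity} can resum the $\mcI^n$ factor over $\alpha\in\gamma+_d\square$ into $(E^\gamma_{p_\gamma q_\gamma})_{\mcI^{n-1}}\otimes\1_{\mcI_n}$ only if $w_\alpha\mD{\alpha}$ is independent of $\alpha$. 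So within this ansatz $w_\alpha=1/(d\,\mD{\alpha})$ is forced, and the factor $1/d$ then gives $\Tr W=\sum_\alpha\md{\alpha}d_\alpha=d^n$. Your choice $w_\alpha\propto\mD{\alpha}$ is the inverse of this. With it the constraint fails in general, and the Cauchy--Schwarz step would produce $\sum_\alpha 1/\mD{\alpha}$ instead of $M_\lambda=\sum_\alpha\mD{\alpha}$.

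The missing observation is that this forced weight is exactly the one that makes the bound tight. Expand $W\otimes\1_\mcF$ by Theorem~\ref{thm:branching-tensor-identity} into blocks $\lambda\in\alpha+_d\square$, $\mu\in\alpha+_D\square$, and drop the positive blocks with $\lambda\neq\mu$ (including every $\mu$ with $d+1$ rows). On each $\lambda$, apply Lemma~\ref{eq:matrix-inequality} with $\ket{a_\alpha}=(\mD{\alpha})^{-1/2}\sum_{p_\alpha}\ket{p_\alpha\to\lambda}\otimes\ket{p_\alpha\to\lambda}$ and $c_\alpha=\sqrt{\mD{\alpha}}$. Then $\sum_\alpha|c_\alpha|^2=M_\lambda$, and $W\otimes\1_\mcF\geq\sum_\lambda Y_\lambda/(dM_\lambda)\geq\Omega_{d,D,n}/c$.

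Two further corrections. First, there is no partial transpose in $\Omega_{d,D,n}$. Since $\overline{\dket{V}}=\dket{\overline V}$, the substitution $V\mapsto\overline V$ gives $\frac{1}{d^2}\int\dd V\,\dketbra{V}^{\otimes n+1}$ directly. Lemma~\ref{lem:haar-integral} then makes its Weyl part $\1_{\mcU^{(d)}_\mu}\otimes\1_{\mcU^{(D)}_\mu}/\mD{\mu}$, not $\1\otimes\dketbra{\1_{\mcU^{(d)}_\mu}}$, which does not even act on the right space. The dual CG transform enters only in the circuit. Second, in that circuit the Specht register carries a fixed path $p_\alpha$, and $\mcU^{(d)}_\alpha$ is entangled with the discarded register $\mcU^{(d)}_\lambda$ through $\mathrm{CG}^{(d)\dagger}_{\lambda,\overline\alpha}$; it is not maximally entangled with $\mcS_\alpha$. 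The fidelity you leave as ``should come out'' is verified in the paper from this circuit (Appendix~\ref{app:optimal-circuit}), or more simply from the explicit primal point in Eq.~\eqref{eq:primal-solution}.
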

\begin{proof}
    We leave the proof of Eq.~\eqref{eq:max_form} to Props.~\ref{prop:achievability} and \ref{prop:optimality}, and leave the circuit implementation to Prop.~\ref{prop:optimal-circuit}.
    Here, we maximize its Young-diagram ratio and derive Eq.~\eqref{eq:closed-form}.
    Defining $S_D(\lambda)$ for $\lambda \vdash_d n+1$ by
    \begin{align}
        S_D(\lambda)
        \coloneqq\sum_{\alpha\in\lambda-\square} \frac{\mD{\alpha}}{\mD{\lambda}},
    \end{align}
    Eq.~\eqref{eq:max_form} yields
    \begin{align}
        F_{d,D,n}^{(\mathrm{PAR})}
        =F_{d,D,n}^{(\mathrm{GEN})}
        ={1\over d}\max_{\lambda\vdash_d n+1} S_D(\lambda).
    \end{align}
    Thus, we consider the maximization of $S_D(\lambda)$ over $\lambda\vdash_d n+1$.
    Due to the Weyl dimension formula~\eqref{eq:unitary_irrep_dimension}, we have
    \begin{align}
        \mD{\lambda} = \frac{\displaystyle\prod_{1\leq i<j\leq h} (\tilde{\lambda}_i - \tilde{\lambda}_j) \cdot \prod_{1\leq i\leq h} \frac{(\tilde{\lambda}_i+D)!}{(\tilde{\lambda}_i+h)!}}{\displaystyle\prod_{k=D-h}^{D-1} k!},
    \end{align}
    where $h=H(\lambda)$ is the height of $\lambda$.
    Thus, for $1\leq i\leq h$ satisfying $\lambda-e_i\vdash_d n$, we have
    \begin{align}
        \label{eq:branching_ratio}
        \frac{\mD{\lambda-e_i}}{\mD{\lambda}}
        =\frac{\tilde{\lambda}_i+h}{\tilde{\lambda}_i+D}
        \prod_{\substack{j=1\\j\neq i}}^h
        \frac{\tilde{\lambda}_i-\tilde{\lambda}_j-1}{\tilde{\lambda}_i-\tilde{\lambda}_j}.
    \end{align}
    If $\lambda-e_i$ is not a valid Young diagram, then $\tilde{\lambda}_i-\tilde{\lambda}_j-1=0$ for some $j\neq i$, and the right-hand side of Eq.~\eqref{eq:branching_ratio} vanishes.
    Thus, $S_D(\lambda)$ defined by 
    \begin{align}
    S_D(\lambda)
    \coloneqq\sum_{\alpha\in\lambda-\square} \frac{\mD{\alpha}}{\mD{\lambda}}
    \end{align}
    is given by
    \begin{align}
        S_D(\lambda) = \sum_{i=1}^h
        \frac{\tilde{\lambda}_i+h}{\tilde{\lambda}_i+D}
        \prod_{\substack{j=1\\j\neq i}}^h
        \frac{\tilde{\lambda}_i-\tilde{\lambda}_j-1}{\tilde{\lambda}_i-\tilde{\lambda}_j}.
    \end{align}
    We consider a meromorphic function $f$ on $\CC$ defined by
    \begin{align}
        f(z)\coloneqq\frac{z+h}{z+D} \prod_{j=1}^h\frac{z-\tilde{\lambda}_j-1}{z-\tilde{\lambda}_j},
    \end{align}
    which has singular points $z=-D, \tilde{\lambda}_1, \ldots, \tilde{\lambda}_h$.
    Each of these singular points is either a simple pole or a removable singularity since $-D<\lambda_h<\cdots <\lambda_1$ holds.
    The corresponding residues are given by
    \begin{align}
        \mathop{\mathrm{Res}}_{z=\tilde{\lambda}_i}f(z)
        &=-\frac{\tilde{\lambda}_i+h}{\tilde{\lambda}_i+D}
        \prod_{\substack{j=1\\j\neq i}}^h
        \frac{\tilde{\lambda}_i-\tilde{\lambda}_j-1}{\tilde{\lambda}_i-\tilde{\lambda}_j},\\
        \mathop{\mathrm{Res}}_{z=-D}f(z)
        &=-(D-h)
        \prod_{j=1}^h\frac{D+\tilde{\lambda}_j+1}{D+\tilde{\lambda}_j}.
    \end{align}
    Since $f(z)=1-D/z+O(\abs{z}^{-2})$ as $\abs{z}\to\infty$ holds, the residue theorem yields
    \begin{align}
        \sum_{i=1}^{h} \mathop{\mathrm{Res}}_{z=\tilde{\lambda}_i}f(z) + \mathop{\mathrm{Res}}_{z=-D}f(z) = -D,
    \end{align}
    i.e.,
    \begin{align}
        S_D(\lambda)
        &=-\sum_{i=1}^{h} \mathop{\mathrm{Res}}_{z=\tilde{\lambda}_i}f(z)\\
        &= D+\mathop{\mathrm{Res}}_{z=-D}f(z)\\
        &=D-(D-h)
        \prod_{j=1}^h
        \frac{D+\lambda_j-j+1}{D+\lambda_j-j}.
        \label{eq:residue-evaluation}
    \end{align}

    We next show that the balanced diagram maximizes Eq.~\eqref{eq:residue-evaluation}.
    First suppose $D>d$.
    Then, since $\lambda_j=0$ for $j>h$, we have
    \begin{align}
        \prod_{j=h+1}^{d} \frac{D+\lambda_j-j+1}{D+\lambda_j-j} = \frac{D-h}{D-d},
    \end{align}
    which transforms Eq.~\eqref{eq:residue-evaluation} into
    \begin{align}
        S_D(\lambda)
        &=D-(D-d)P_D(\lambda),\\
        P_D(\lambda)&\coloneqq \prod_{j=1}^{d}
        \frac{D+\lambda_j-j+1}{D+\lambda_j-j}.
    \end{align}
    If $\lambda$ is not balanced, we can take $i, j$ such that $\lambda_i-\lambda_j\geq 2$ and $i<j$ holds.
    Then $\lambda'\coloneqq \lambda-e_i+e_j$ is a Young diagram of size $n+1$.
   Defining $x=D+\lambda_i-i$ and $y=D+\lambda_j-j+1$, we have $x> y>1$ and
    \begin{align}
        \frac{P_D(\lambda')}{P_D(\lambda)}
        =\frac{x^2}{x^2-1}\frac{y^2-1}{y^2} < 1.
        \label{eq:balancing-transfer}
    \end{align}
    Thus, $S_D(\lambda')> S_D(\lambda)$ holds, which shows that the balanced diagram $\lambda^*$ in Eq.~\eqref{eq:balanced-diagram} is the unique maximizer in Eq.~\eqref{eq:max_form}.
    For the case of $D=d$, the same argument shows that $\lambda^*$ is a maximizer, but it need not be unique for the case of $n\geq d$.
    In particular, every diagram of height $d$ gives the maximum value in Eq.~\eqref{eq:max_form}, which is $1$.
    Finally, substituting $\lambda^*$ into Eq.~\eqref{eq:residue-evaluation} concludes Eq.~\eqref{eq:closed-form}.
\end{proof}

In the following subsections, we give the achievability and optimality to show Eq.~\eqref{eq:max_form} in Thm.~\ref{thm:optimal-fidelity}, with the quantum circuit implementation.

\subsection{Evaluation of the performance operator}

We denote the performance operator $\Omega_{f,n}$ for the function $f(V) = \overline{V}$ for $V\in \isometry{d}{D}$ by $\Omega_{d,D,n}$, and evaluate its form.
By definition~\eqref{eq:performance-operator}, it is given by
\begin{align}
    \Omega_{d,D,n}
    &= {1\over d^2}\int_{\isometry{d}{D}} \dd V \dketbra{\overline{V}}_{\mcP\mcF} \otimes \overline{\dketbra{V}}_{\mcI^n \mcO^n}^{\otimes n}\\
    &= {1\over d^2}\int_{\isometry{d}{D}} \dd V \dketbra{V}^{\otimes n+1}_{\mcI^n\mcP, \mcO^n\mcF}\\
    &= {1\over d^2}\int_{\isometry{d}{D}} \dd V \bigoplus_{\lambda, \mu\vdash_d n+1} \dketbra{U_\mathrm{Sch}^{(n+1,D)\dagger}(V_\lambda\otimes \1_{\mcS_\lambda})U_\mathrm{Sch}^{(n+1,d)}}{U_\mathrm{Sch}^{(n+1,D)\dagger}(V_\mu \otimes \1_{\mcS_\mu})U_\mathrm{Sch}^{(n+1,d)}}\\
    &= {1\over d^2}\int_{\isometry{d}{D}} \dd V \bigoplus_{\lambda, \mu\vdash_d n+1} (U_\mathrm{Sch}^{(n+1,d)\top} \otimes U_\mathrm{Sch}^{(n+1, D)\dagger})\dketbra{V_\lambda\otimes \1_{\mcS_\lambda}}{V_\mu\otimes \1_{\mcS_\mu}}(\overline{U_\mathrm{Sch}^{(n+1,d)}} \otimes U_\mathrm{Sch}^{(n+1, D)})\\
    &= {1\over d^2}\int_{\isometry{d}{D}} \dd V \bigoplus_{\lambda, \mu\vdash_d n+1} (U_\mathrm{Sch}^{(n+1,d)\dagger} \otimes U_\mathrm{Sch}^{(n+1, D)\dagger})\dketbra{V_\lambda\otimes \1_{\mcS_\lambda}}{V_\mu\otimes \1_{\mcS_\mu}}(U_\mathrm{Sch}^{(n+1,d)} \otimes U_\mathrm{Sch}^{(n+1, D)})\\
    &= {1\over d^2}\bigoplus_{\lambda \vdash_d n+1} (U_\mathrm{Sch}^{(n+1,d)\dagger} \otimes U_\mathrm{Sch}^{(n+1, D)\dagger})\left[{\1_{\mcU_\lambda^{(d)}} \otimes \1_{\mcU_\lambda^{(D)}} \over \mD{\lambda}} \otimes \dketbra{\1_{\mcS_\lambda}}\right](U_\mathrm{Sch}^{(n+1,d)} \otimes U_\mathrm{Sch}^{(n+1, D)})\\
    &= \sum_{\lambda\vdash_d n+1} \sum_{p_\lambda,q_\lambda\in\Path{\lambda}}{(E^{\lambda, d}_{p_\lambda q_\lambda})_{\mcI^n\mcP} \otimes (E^{\lambda, D}_{p_\lambda q_\lambda})_{\mcO^n\mcF} \over d^2 \mD{\lambda}},
    \label{eq:performance-operator-schur}
\end{align}
where we change the integral variable $V\mapsto \overline{V}$ in the second equality, we use Eq.~\eqref{eq:schur-isometry} in the third equality, we use the transpose trick $\dket{ABC} = (C^{\top} \otimes A) \dket{B}$ using the transpose $C^{\top}$ of $C$ in the fourth equality, we use the reality of $U_\mathrm{Sch}^{(n+1,d)}$ in the fifth equality, we use Lem.~\ref{lem:haar-integral} in the sixth equality.

\subsection{Achievability}

We construct a feasible solution of the primal SDP~\eqref{eq:primal-sdp-parallel} to show the achievability of the optimal fidelity in Eq.~\eqref{eq:max_form}.

\begin{proposition}
    \label{prop:achievability}
The optimal fidelity for complex conjugation of an unknown isometry $V:\CC^d\to\CC^D$ from $n$ calls with parallel protocols is lower bounded by
\begin{align}
    F_{d,D,n}^{(\mathrm{PAR})}
    \geq {1\over d}\max_{\lambda\vdash_d n+1}
    \frac{\sum_{\alpha\in\lambda-\square}\mD{\alpha}}
    {\mD{\lambda}}.
\end{align}
\end{proposition}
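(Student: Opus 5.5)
We fix a Young diagram $\lambda\vdash_d n+1$ and construct a parallel superchannel whose fidelity equals $\frac{1}{d}\sum_{\alpha\in\lambda-\square}\mD{\alpha}/\mD{\lambda}$. Maximizing over $\lambda$ then gives the claim. For the Choi matrix we use the ansatz
\begin{align}
    J_\sfS = \sum_{\alpha,\beta\in\lambda-\square} \sum_{p_\alpha\in\Path{\alpha},\, q_\beta\in\Path{\beta}} c_{\alpha\beta}\, (E^{\lambda,d}_{p_\alpha\to\lambda,\, q_\beta\to\lambda})_{\mcI^n\mcP}\otimes (E^{\lambda,D}_{p_\alpha\to\lambda,\, q_\beta\to\lambda})_{\mcO^n\mcF},
\end{align}
with $c_{\alpha\beta}=\gamma\sqrt{\mD{\alpha}\mD{\beta}}$ for a normalization $\gamma>0$. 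The idea is that Eq.~\eqref{eq:performance-operator-schur} is block diagonal in $\lambda$ and has the ``maximally entangled in path labels'' structure. Restricting $J_\sfS$ to the single block $\lambda$ and to paths that pass through $\alpha\in\lambda-\square$ at step $n$ matches $\Omega_{d,D,n}$ on that block.

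Positivity of $J_\sfS$ follows because it is $\gamma\ketbra{\Psi}$-like: up to the Schur transforms it equals $\sum_{u}\dots\otimes\dketbra{b}$ with $\ket{b}=\sum_\alpha\sqrt{\mD{\alpha}}\sum_{p_\alpha}\ket{p_\alpha\to\lambda}\ket{p_\alpha\to\lambda}$, tensored with a positive operator on the Weyl modules. Here we take the Weyl-module part proportional to $\1_{\mcU^{(d)}_\lambda}\otimes\1_{\mcU^{(D)}_\lambda}$.

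Next we check the parallel constraints. We compute $\Tr_\mcF J_\sfS$ using Thm.~\ref{thm:branching-trace-identity} on the $D$ side, which takes the partial trace over the last system. This kills the off-diagonal terms $\alpha\neq\beta$ and leaves
\begin{align}
    \sum_\alpha c_{\alpha\alpha}\frac{\mD{\lambda}}{\mD{\alpha}}\sum_{p,q} E^{\lambda,d}_{p_\alpha\to\lambda,q_\alpha\to\lambda}\otimes E^{\alpha,D}_{p_\alpha q_\alpha} = \gamma\mD{\lambda}\sum_\alpha\sum_{p,q}E^{\lambda,d}_{p_\alpha\to\lambda,q_\alpha\to\lambda}\otimes E^{\alpha,D}_{p_\alpha q_\alpha}.
\end{align}
This must equal $C\otimes\1_{\mcO^n}$. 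This is the main obstacle, because the $\mcO^n$ factor carries nontrivial $E^{\alpha,D}$ rather than the identity. I expect to resolve it by exploiting the fact that the $d$-side operator is itself a projector-type element. Concretely, I would not take the Weyl-module part to be the identity. Instead I would use the symmetric ``pairing'' of Weyl modules implied by the dual CG transform~\eqref{eq:dual-CG}: the intended decoder applies $U_\mathrm{Sch}^{(n,D)}$ to the outputs, obtains $\mcU^{(D)}_\alpha$, and applies $\mathrm{CG}^{(D)\dagger}$ composed with the dual CG $\mathrm{CG}_{\alpha,\overline\square}$ to produce $\mcF$. Correspondingly, the encoder prepares on $\mcI^n\mcP$ the state obtained from the Schur basis with paths ending through $\lambda$.

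In practice, I would define $J_\sfS=J_\mcD\ast J_\mcE$ from this circuit, which makes the constraints automatic. I would then verify that its projection onto the $\U(d)\times\U(D)$-invariant part, which does not change the fidelity, coincides with the ansatz above. The twirl preserves feasibility. Finally, $\Tr_{\mcI^n}C=\1_\mcP$ fixes $\gamma$; by Thm.~\ref{thm:branching-tensor-identity} and dimension counting, $\gamma=1/(\mD{\lambda}\,\md{\lambda}\cdots)$ is arranged so that the encoder is trace preserving.

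For the fidelity, we evaluate $\Tr(J_\sfS\Omega_{d,D,n})$ using orthogonality of the matrix units, $\Tr(E^{\lambda,d}_{pq}E^{\lambda,d}_{q'p'})=\md{\lambda}\delta\delta$, and similarly on the $D$ side. This gives
\begin{align}
    \frac{\gamma\,\md{\lambda}\mD{\lambda}}{d^2\mD{\lambda}}\Bigl(\sum_\alpha\sqrt{\mD{\alpha}}\,d_\alpha\Bigr)^{2}\Big/\text{norm}.
\end{align}
Combining this with the normalization from the constraints (where Lem.~\ref{eq:matrix-inequality} indicates that weights $\propto\sqrt{\mD{\alpha}}$ are the Cauchy--Schwarz optimum) should reduce it to $\frac{1}{d}\sum_\alpha\mD{\alpha}/\mD{\lambda}$. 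The bookkeeping of $d_\alpha$ versus $\md{\cdot}$ factors in this last step is where I would be most careful.
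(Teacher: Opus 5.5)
There is a genuine gap, at the step you flag yourself. In general your ansatz violates $\Tr_\mcF J_\sfS=C\otimes\1_{\mcO^n}$, and the repair you sketch cannot work as stated. Twirling preserves feasibility, so the twirled Choi matrix $J_\mcD\ast J_\mcE$ of an actual parallel circuit can never coincide with an operator that violates the constraint. The circuit is also not specified precisely enough to compute its fidelity directly; the paper does that computation separately, in Appendix~\ref{app:optimal-circuit}.

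The missing idea is that $J_\sfS$ need not consist only of the part that overlaps $\Omega_{d,D,n}$. Take a positive core $C_1$ on the $\lambda$ block and add the completion $[C\otimes\1_{\mcO^n}-\Tr_\mcF C_1]\otimes\1_\mcF/D$. This completion has four useful properties:
\begin{itemize}
\item it is positive as soon as $\Tr_\mcF C_1\le C\otimes\1_{\mcO^n}$;
\item it cannot lower the objective, since $\Omega_{d,D,n}\ge 0$;
\item it makes $\Tr_\mcF J_\sfS=C\otimes\1_{\mcO^n}$ hold by construction;
\item it leaves only $\Tr_{\mcI^n}C=\1_\mcP$ to check, which follows from $\Tr_{\mcI^n}E^{\lambda,d}_{pp}=(\md{\lambda}/d)\1_\mcP$.
\end{itemize}
The paper fixes one path $p_\alpha$ per $\alpha\in\lambda-\square$. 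Then Thm.~\ref{thm:branching-trace-identity} leaves only $E^{\alpha,D}_{p_\alpha p_\alpha}\le\1_{\mcO^n}$ in $\Tr_\mcF C_1$, and $C$ can be read off directly.

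Second, your weights are wrong, so even with this completion you would not reach the target. The factor $\sqrt{\mD{\alpha}}$ is the Cauchy--Schwarz weighting of the dual argument (Lem.~\ref{eq:matrix-inequality} in Prop.~\ref{prop:optimality}), not of the primal. Take your all-path ansatz with amplitude $w_\alpha$ on each path through $\alpha$. The $\alpha$-block of $\Tr_\mcF J_\sfS$ is proportional to $\lvert\Phi^+_\alpha\rangle\langle\Phi^+_\alpha\rvert$, where $\lvert\Phi^+_\alpha\rangle=\sum_{p\in\Path{\alpha}}\lvert p\to\lambda\rangle\lvert p\rangle$ has squared norm $d_\alpha$. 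Hence the smallest admissible choice is $C=\gamma\mD{\lambda}\sum_\alpha (w_\alpha^2 d_\alpha/\mD{\alpha})\Pi_\alpha$, with $\Pi_\alpha\coloneqq\sum_{p\in\Path{\alpha}}E^{\lambda,d}_{p\to\lambda,p\to\lambda}$. With this $C$ the completion has zero overlap with $\Omega_{d,D,n}$, and the normalization gives fidelity
\[
\frac{1}{d\,\mD{\lambda}}\,\frac{\bigl(\sum_\alpha x_\alpha\bigr)^2}{\sum_\alpha x_\alpha^2/\mD{\alpha}},\qquad x_\alpha\coloneqq w_\alpha d_\alpha .
\]
This equals $M_\lambda/(d\mD{\lambda})$, with $M_\lambda=\sum_{\alpha\in\lambda-\square}\mD{\alpha}$, only when $x_\alpha\propto\mD{\alpha}$.

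Your choice $w_\alpha=\sqrt{\mD{\alpha}}$ gives $(\sum_\alpha\sqrt{\mD{\alpha}}\,d_\alpha)^2/(d\mD{\lambda}\sum_\alpha d_\alpha^2)$, which is strictly smaller in general. For example, with $d=2$, $D=3$, $n=2$, $\lambda=(2,1)$, both parents have a single path, so only the weights differ from the paper; your value is $(9+6\sqrt{2})/32<9/16$.

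The paper's coefficients $d\mD{\alpha}\mD{\beta}/(\md{\lambda}\mD{\lambda}M_\lambda)$ correspond to a single path with amplitude $\propto\mD{\alpha}$. If you keep all paths, you need amplitude $\propto\mD{\alpha}/d_\alpha$ on each path.
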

\begin{proof}
    We fix a diagram $\lambda\vdash_d n+1$ attaining the maximum in Eq.~\eqref{eq:max_form}, and choose one path $p_\alpha\in\Path{\alpha}$ for every $\alpha\in\lambda-\square$.
    We define
    \begin{align}
        M_\lambda\coloneqq\sum_{\alpha\in\lambda-\square}\mD{\alpha},
        \label{eq:definition_M}
    \end{align}
    and
    \begin{align}
        C_1&\coloneqq \frac{1}{M_{\lambda}}
        \sum_{\alpha,\beta\in\lambda-\square}
        \frac{d\mD{\alpha}\mD{\beta}}
        {\md{\lambda}\mD{\lambda}} 
        (E^{\lambda}_{p_\alpha\to\lambda,p_\beta\to\lambda})_{\mcI^n\mcP} \otimes
        (E^{\lambda}_{p_\alpha\to\lambda,p_\beta\to\lambda})_{\mcO^n\mcF},
    \end{align}
    which is positive semidefinite.
    The branching trace identity~\eqref{eq:branching-trace-identity} implies
    \begin{align}
        \Tr_{\mcF}
        (E^\lambda_{p_\alpha\to\lambda,p_\beta\to\lambda})_{\mcO^n\mcF}
        =\delta_{\alpha\beta}
        \frac{\mD{\lambda}}{\mD{\alpha}}
        (E^\alpha_{p_\alpha p_\alpha})_{\mcO^n},
    \end{align}
    which gives the partial trace of $C_1$ as
    \begin{align}
        \Tr_{\mcF} C_1
        &=\frac{1}{M_{\lambda}}
        \sum_{\alpha\in\lambda-\square}
        \frac{d\mD{\alpha}}{\md{\lambda}} 
        (E^{\lambda}_{p_{\alpha}\to\lambda,p_{\alpha}\to\lambda})_{\mcI^n\mcP}
        \otimes(E^{\alpha}_{p_\alpha p_\alpha})_{\mcO^n}\\
        &\leq C \otimes \1_{\mcO^n},
        \label{eq:branching-partial-trace-inequality}
    \end{align}
    where $C$ is defined by
    \begin{align}
        C&\coloneqq \frac{1}{M_{\lambda}}
        \sum_{\alpha\in\lambda-\square}
        \frac{d\mD{\alpha}}{\md{\lambda}} 
        (E^{\lambda}_{p_\alpha\to\lambda,p_\alpha\to\lambda})_{\mcI^n\mcP},
    \end{align}
    and the inequality follows from $(E^\alpha_{p_\alpha p_\alpha})_{\mcO^n}\leq\1_{\mcO^n}$.
    We define an operator $J_{\sfS}$ by
    \begin{align}
        \label{eq:primal-solution}
        J_{\sfS}\coloneqq C_1+
        [C\otimes\1_{\mcO^n}-\Tr_{\mcF}C_1]
        \otimes\frac{\1_{\mcF}}{D},
    \end{align}
    which is positive semidefinite due to $C_1\geq 0$ and Eq.~\eqref{eq:branching-partial-trace-inequality}.
    Using Eq.~\eqref{eq:performance-operator-schur}, its objective value obeys
    \begin{align}
        \Tr(J_{\sfS}\Omega_{d,D,n})
        &\geq \Tr(C_1\Omega_{d,D,n})\\
        &={1\over d}
        \frac{\sum_{\alpha\in\lambda-\square}\mD{\alpha}}
        {\mD{\lambda}}\\
        &={1\over d}\max_{\lambda\vdash_d n+1}
        \frac{\sum_{\alpha\in\lambda-\square}\mD{\alpha}}
        {\mD{\lambda}}.
    \end{align}
    Finally, since
    \begin{align}
        [\Tr_{\mcI^n}
        (E^\lambda_{p_\alpha\to\lambda,p_\alpha\to\lambda})_{\mcI^n\mcP}, U_{\mcP}] = 0 \quad \forall U\in \U(d)
    \end{align}
    holds, we have
    \begin{align}
        \label{eq:branching-trace-identity-2}
        \Tr_{\mcI^n}
        (E^\lambda_{p_\alpha\to\lambda,p_\alpha\to\lambda})_{\mcI^n\mcP}
        =\frac{\md{\lambda}}{d}\1_{\mcP}.
    \end{align}
    The definition~\eqref{eq:primal-solution} of $J_{\sfS}$ implies the first constraint in Eq.~\eqref{eq:primal-sdp-parallel}:
    \begin{align}
        \Tr_{\mcF} J_{\sfS} &= C \otimes \1_{\mcO^n},
    \end{align}
    and Eq.~\eqref{eq:branching-trace-identity-2} implies the second constraint in Eq.~\eqref{eq:primal-sdp-parallel}:
    \begin{align}
        \Tr_{\mcI^n} C &= \1_{\mcP}.
    \end{align}
    Thus $J_{\sfS}$ is a feasible solution attaining the stated lower bound.
\end{proof}

\subsection{Optimality}

We construct a feasible solution of the dual SDP ~\eqref{eq:dual-sdp-general} to attain the matching upper bound in Eq.~\eqref{eq:max_form}.

\begin{proposition}
    \label{prop:optimality}
The optimal fidelity for complex conjugation of an unknown isometry $V:\CC^d\to\CC^D$ from $n$ calls with general protocols, including indefinite causal order, satisfies
\begin{align}
    F_{d,D,n}^{(\mathrm{GEN})}
    \leq {1\over d}\max_{\lambda\vdash_d n+1}
    \frac{\sum_{\alpha\in\lambda-\square}\mD{\alpha}}
    {\mD{\lambda}}.
\end{align}
\end{proposition}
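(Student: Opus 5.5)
We will construct a feasible solution $(c, W)$ of the dual SDP~\eqref{eq:dual-sdp-general} with $c = \frac{1}{d}\max_{\lambda\vdash_d n+1} S_D(\lambda)$, where $S_D(\lambda)=\sum_{\alpha\in\lambda-\square}\mD{\alpha}/\mD{\lambda}$. Weak duality then gives the bound. The natural ansatz is a $W$ that is invariant under $\U(d)$ acting on $\mcI^n$ and $\U(D)$ acting on $\mcO^n$, built from the irreducible matrix units of $n$ systems. Concretely, we take
\begin{align}
    W \coloneqq \sum_{\alpha\vdash_d n} w_\alpha \sum_{p_\alpha,q_\alpha\in\Path{\alpha}} (E^{\alpha,d}_{p_\alpha q_\alpha})_{\mcI^n}\otimes (E^{\alpha,D}_{p_\alpha q_\alpha})_{\mcO^n},
\end{align}
or, more simply, its ``diagonal'' version $W = \bigoplus_\alpha w_\alpha\, \Pi^{(d)}_\alpha\otimes \Pi^{(D)}_\alpha$ restricted suitably. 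The cleanest choice mimicking the unitary case is $W = \frac{d^n}{N}\,\1_{\mcI^n}\otimes \rho_{\mcO^n}$ with $\rho$ a $\U(D)$-invariant state supported on the $\alpha\vdash_d n$ sectors. The key point is that the constraints $\Tr_{\mcO_i}W = \Tr_{\mcI_i\mcO_i}W\otimes \1_{\mcI_i}/d$ hold automatically whenever $W=\1_{\mcI^n}\otimes \sigma_{\mcO^n}$ with $\Tr\sigma\cdot d^n = d^n$, i.e.\ $\Tr\sigma=1$. So we aim for
\begin{align}
    W = \1_{\mcI^n}\otimes \sum_{\alpha\vdash_d n} \frac{x_\alpha}{\mD{\alpha}}\,(P_\alpha^{(D)})_{\mcO^n},\qquad x_\alpha\geq 0,\ \sum_\alpha x_\alpha d_\alpha = 1,
\end{align}
where $P^{(D)}_\alpha=\sum_{p}E^{\alpha,D}_{pp}$ is the isotypic projector. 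The next step is to choose the $x_\alpha$ optimally.

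For the main inequality $\Omega_{d,D,n}\leq c\,W\otimes\1_\mcF$, we use Eq.~\eqref{eq:performance-operator-schur}. Both sides are block diagonal in $\lambda\vdash_d n+1$ after applying Thm.~\ref{thm:branching-tensor-identity} to $P^{(D)}_\alpha\otimes \1_\mcF$. On the $\lambda$ block, $\Omega$ equals $\frac{1}{d^2\mD{\lambda}}\1_{\mcU_\lambda^{(d)}}\otimes\1_{\mcU^{(D)}_\lambda}\otimes\dketbra{\1_{\mcS_\lambda}}$, a rank-one operator in the multiplicity space $\mcS_\lambda\otimes\mcS_\lambda$. Meanwhile $W\otimes\1_\mcF$ restricted to $\mcU^{(D)}_\lambda$ acts as $\1_{\mcI^n\mcP}\otimes\sum_{\alpha\in\lambda-\square}\frac{x_\alpha}{\mD{\alpha}}\sum_{p_\alpha}E^{\lambda,D}_{p_\alpha\to\lambda,p_\alpha\to\lambda}$. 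Here $\1_{\mcI^n\mcP}$ dominates the projector onto the $\mcU^{(d)}_\lambda\otimes\mcS_\lambda$ block. Writing $\dket{\1_{\mcS_\lambda}}=\sum_{\alpha}\sum_{p_\alpha}\ket{p_\alpha\to\lambda}\ket{p_\alpha\to\lambda}$ and applying Lemma~\ref{eq:matrix-inequality} with vectors $\ket{a}$ indexed by $\alpha$ and coefficients $c_\alpha$ chosen as $c_\alpha\propto\sqrt{x_\alpha/\mD{\alpha}}^{-1}$ appropriately weighted, we will obtain the block-wise inequality provided
\begin{align}
    \frac{1}{d^2\mD{\lambda}}\sum_{\alpha\in\lambda-\square}\frac{\mD{\alpha}}{x_\alpha}\leq c\quad\text{(schematically, per path)}.
\end{align}
Care is needed because the rank-one vector also lives across different $p$'s; the Lemma handles this by summing over all paths $p_\alpha$, with the coefficient sum $\sum_\alpha d_\alpha\mD{\alpha}/x_\alpha$-type.

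The main obstacle is choosing $x_\alpha$ (depending only on $\alpha\vdash_d n$) so that the resulting condition holds simultaneously for every $\lambda\vdash_d n+1$ with $c$ equal to the maximum of $S_D(\lambda)/d$, while keeping the normalization $\Tr W=d^n$. The dimension weighting must combine $d^n$, $d_\alpha$ and $\md{\cdot}$ factors correctly, since $\Tr \1_{\mcI^n}$ contributes $d^n$ and the $\mcU^{(d)}_\lambda$ factor of $\Omega$ may need to be compared via $\md{\lambda}$ rather than the full identity. If the flat-in-$\mcI^n$ ansatz fails to give a tight bound, I would instead let $W$ carry $\md{\cdot}$-dependent weights on $\mcI^n$, i.e.\ $W=\sum_\alpha y_\alpha P^{(d)}_\alpha\otimes P^{(D)}_\alpha$, check the no-signalling constraints via Thms.~\ref{thm:branching-trace-identity} and \ref{thm:branching-tensor-identity}, and solve for $y_\alpha\propto \mD{\alpha}$-type weights. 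A natural first guess, suggested by the primal solution, is weights proportional to $\mD{\alpha}$ times a flat factor, which makes the per-$\lambda$ condition read $\frac{1}{d}\sum_{\alpha\in\lambda-\square}\mD{\alpha}/\mD{\lambda}\leq c$. That is exactly the claimed bound.
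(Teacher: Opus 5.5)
Your framework (weak duality for the dual SDP~\eqref{eq:dual-sdp-general}, block decomposition via the branching rules, and Lemma~\ref{eq:matrix-inequality}) is the right one. However, the proposal never settles on a $W$ that is both feasible and tight, and neither of the two ans\"atze you actually pursue can be.

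\textbf{The flat ansatz is loose.} The ansatz $W=\1_{\mcI^n\mcP}\otimes\sigma_{\mcO^n}$ must carry $\1_\mcP$, so $\Tr \sigma=1/d$. It is trivially feasible, but your own block analysis shows why it cannot be tight. On the $\lambda$ block you must dominate the rank-one operator $\dketbra{\1_{\mcS_\lambda}}$ by $\1_{\mcS_\lambda}\otimes\Lambda_\lambda$. This holds iff $\Tr \Lambda_\lambda^{-1}\leq c\,d^2\mD{\lambda}$, i.e.\ $\frac{1}{d^2\mD{\lambda}}\sum_{\alpha\in\lambda-\square}d_\alpha\mD{\alpha}/x_\alpha\leq c$ under $\sum_\alpha x_\alpha d_\alpha=1/d$. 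The Specht dimensions $d_\alpha$ (the cost of comparing an entangled vector with a product operator) never cancel.

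Concretely, take $d=D=2$ and $n=2$. Twirling lets you assume $\sigma=x_{(2)}P^{(D)}_{(2)}/3+x_{(1,1)}P^{(D)}_{(1,1)}$ with $x_{(2)}+x_{(1,1)}=1/2$. The block $\lambda=(2,1)$ then forces $c\geq\frac{1}{8}\bigl(3/x_{(2)}+1/x_{(1,1)}\bigr)\geq 1+\sqrt{3}/2$. The claimed bound, which is also the true optimum since $n\geq d-1$, equals $1$.

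\textbf{The fallback is infeasible.} Your fallback $W=\sum_\alpha y_\alpha P^{(d)}_\alpha\otimes P^{(D)}_\alpha\otimes\1_\mcP$ fails in the other direction. By Thm.~\ref{thm:branching-trace-identity}, the part of $\Tr_{\mcO_n}W$ on the block $P^{(D)}_\gamma$ of $\mcO^{n-1}$ is $\sum_{\alpha\in\gamma+_d\square}y_\alpha\frac{\mD{\alpha}}{\mD{\gamma}}P^{(d)}_\alpha\otimes\1_\mcP$. The no-signalling constraint requires this to have the form $Z\otimes\1_{\mcI_n}$. Nothing ties the $\mcI$-side path to the $\mcO$-side path, so this breaks as soon as $\gamma+_d\square$ is not all of $\{\alpha\vdash_d n\}$.

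For $n=3$, $d=2$, the block $\gamma=(1,1)$ would need $P^{(2)}_{(2,1)}=\1-P_{\mathrm{sym}}$ on $(\CC^2)^{\otimes 3}$ to equal $Z\otimes\1_{\mcI_3}$, which is false, so $y_{(2,1)}=0$. Then $\gamma=(2)$ forces $y_{(3)}=0$, contradicting $\Tr W=d^n$. Your closing claim that weights proportional to $\mD{\alpha}$ yield the per-$\lambda$ condition $\frac{1}{d}\sum_\alpha\mD{\alpha}/\mD{\lambda}\leq c$ is not backed by any computation. The weight that actually works is $1/\mD{\alpha}$.

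\textbf{The missing idea.} $W$ must be entangled between the $\mcI^n$ and $\mcO^n$ multiplicity spaces. This is exactly the first form you wrote down and then dropped: $W=\sum_{\alpha\vdash_d n}\frac{1}{d\mD{\alpha}}\sum_{p_\alpha,q_\alpha\in\Path{\alpha}}(E^{\alpha,d}_{p_\alpha q_\alpha})_{\mcI^n}\otimes(E^{\alpha,D}_{p_\alpha q_\alpha})_{\mcO^n}\otimes\1_\mcP$. With it, both requirements go through.

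For the operator inequality, expand $W\otimes\1_\mcF$ by Thm.~\ref{thm:branching-tensor-identity} and discard the positive cross blocks whose diagrams differ on $\mcI^n\mcP$ and $\mcO^n\mcF$. The $\lambda$ block becomes $\frac{1}{d}\sum_{\alpha\in\lambda-\square}\frac{1}{\mD{\alpha}}\ketbra{\phi_\alpha}$ with $\ket{\phi_\alpha}=\sum_{p_\alpha\in\Path{\alpha}}\ket{p_\alpha\to\lambda}\otimes\ket{p_\alpha\to\lambda}$. Since $\sum_{\alpha}\ket{\phi_\alpha}=\dket{\1_{\mcS_\lambda}}$, Lemma~\ref{eq:matrix-inequality} (Cauchy--Schwarz over $\alpha$ only) bounds this below by $\dketbra{\1_{\mcS_\lambda}}/(d\sum_{\alpha}\mD{\alpha})$, with no $d_\alpha$ penalty. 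Comparing with Eq.~\eqref{eq:performance-operator-schur} yields exactly $c=\frac{1}{d}\max_\lambda\sum_{\alpha\in\lambda-\square}\mD{\alpha}/\mD{\lambda}$.

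For feasibility, tracing out $\mcO_n$ via Thm.~\ref{thm:branching-trace-identity} forces the $\mcI$-side path through the same $\gamma$ and produces the factor $\mD{\alpha}/\mD{\gamma}$, which cancels the weight $1/\mD{\alpha}$. The sum over $\alpha\in\gamma+_d\square$ is then complete, and Thm.~\ref{thm:branching-tensor-identity} collapses it to $(E^{\gamma,d}_{p_\gamma q_\gamma})_{\mcI^{n-1}}\otimes\1_{\mcI_n}$. Permutation covariance (the Young--Yamanouchi matrices are real orthogonal) extends this to every $i$, and $\Tr W=\sum_{\alpha}\md{\alpha}d_\alpha=d^n$.
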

\begin{proof}
    We consider the dual ansatz, analogous to that of Ref.~\cite{ebler2023optimal}, given by
    \begin{align}
        \label{eq:dual-ansatz-c}
        c&={1\over d}\max_{\lambda\vdash_d n+1}
        \frac{\sum_{\alpha\in\lambda-\square}\mD{\alpha}}
        {\mD{\lambda}},\\
        W&=\sum_{\substack{\alpha\vdash_d n\\
        p_\alpha,q_\alpha\in\Path{\alpha}}}
        \frac{(E^\alpha_{p_\alpha q_\alpha})_{\mcI^n}
        \otimes(E^\alpha_{p_\alpha q_\alpha})_{\mcO^n}
        \otimes\1_{\mcP}}
        {d\mD{\alpha}}.
    \end{align}
    For $\alpha\vdash_d n$ and $\lambda, \mu\in \alpha+\square$, we define positive semidefinite operators
    \begin{align}
        X_{\alpha;\lambda,\mu}
        &\coloneqq
        \sum_{p_\alpha, q_\alpha\in\Path{\alpha}}(E^\lambda_{p_\alpha\to\lambda,q_\alpha\to\lambda})_{\mcI^n\mcP} \otimes
        (E^\mu_{p_\alpha\to\mu,q_\alpha\to\mu})_{\mcO^n\mcF},\\
        Y_\lambda
        &\coloneqq \sum_{p_\lambda, q_\lambda\in\Path{\lambda}}
        (E^\lambda_{p_\lambda q_\lambda})_{\mcI^n\mcP}
        \otimes(E^\lambda_{p_\lambda q_\lambda})_{\mcO^n\mcF}.
        \label{eq:dual-block-abbreviation}
    \end{align}
    We first verify the operator inequality:
    \begin{align}
        W\otimes\1_{\mcF}
        &=\sum_{\substack{\alpha\vdash_d n,
        \lambda\in\alpha+_d\square\\
        \mu\in\alpha+_D\square}}
        \frac{X_{\alpha;\lambda,\mu}}
        {d\mD{\alpha}}\\
        &\geq\sum_{\lambda\vdash_d n+1,
        \alpha\in\lambda-\square}
        \frac{\mathsf X_{\alpha;\lambda,\lambda}}
        {d\mD{\alpha}}\\
        &\geq\sum_{\lambda\vdash_d n+1}
        \frac{Y_\lambda}
        {d\sum_{\alpha\in\lambda-\square}\mD{\alpha}}\\
        &\geq\frac{1}{cd^2}
        \sum_{\lambda\vdash_d n+1}
        \frac{Y_\lambda}{\mD{\lambda}}\\
        &=\frac{\Omega_{d,D,n}}{c}.
    \end{align}
    The first inequality discards the positive blocks with $\lambda\neq\mu$.
    For each fixed $\lambda$, the second inequality follows from Lem.~\ref{eq:matrix-inequality} applied to the vectors and coefficients given by
    \begin{align}
        \ket{a_\alpha^\lambda}
        &\coloneqq\frac{1}{\sqrt{\mD{\alpha}}}
        \sum_{p_\alpha\in\Path{\alpha}}
        \ket{p_\alpha\to\lambda}
        \otimes\ket{p_\alpha\to\lambda},\\
        c_\alpha^\lambda
        &\coloneqq\sqrt{\mD{\alpha}},\\
        \label{eq:dual-lemma-vectors}
        A&\coloneqq \{(\alpha,\lambda)\mid \alpha\in\lambda-\square\}.
    \end{align}
    The last inequality follows from the definition~\eqref{eq:dual-ansatz-c} of $c$, and Eq.~\eqref{eq:performance-operator-schur} gives the final equality.

    Since $g_\alpha(\sigma)$ is real orthogonal in the Young--Yamanouchi basis,
    \begin{align}
        [g_\alpha(\sigma)\otimes g_\alpha(\sigma)]
        \ket{\phi_\alpha^+}=\ket{\phi_\alpha^+}
        \quad\forall\sigma\in\mfS_n
    \end{align}
    holds, and hence
    \begin{align}
        [W,\pi_d(\sigma)_{\mcI^n}\otimes
        \pi_D(\sigma)_{\mcO^n}\otimes\1_{\mcP}]=0
        \quad\forall\sigma\in\mfS_n.
    \end{align}
    Thus, it is enough to check the second constraint in Eq.~\eqref{eq:dual-sdp-general} for the case of $i=n$:
    \begin{align}
        \Tr_{\mcO_n} W 
        &=\sum_{\substack{\gamma\vdash_d n-1,
        \alpha\in\gamma+_d\square\\
        p_\gamma,q_\gamma\in\Path{\gamma}}}
        \frac{(E^{\alpha}_{p_\gamma\to\alpha,q_\gamma\to\alpha})_{\mcI^n}}
        {d\mD{\gamma}}\notag \otimes(E^{\gamma}_{p_\gamma q_\gamma})_{\mcO^{n-1}}
        \otimes\1_{\mcP}\\
        &=\sum_{\substack{\gamma\vdash_d n-1\\
        p_\gamma,q_\gamma\in\Path{\gamma}}}
        \frac{(E^{\gamma}_{p_\gamma q_\gamma})_{\mcI^{n-1}}
        \otimes\1_{\mcI_n}}
        {d\mD{\gamma}} \otimes(E^{\gamma}_{p_\gamma q_\gamma})_{\mcO^{n-1}}
        \otimes\1_{\mcP}\\
        &=\Tr_{\mcI_n\mcO_n}W
        \otimes\frac{\1_{\mcI_n}}{d}.
    \end{align}
    Finally, the dimension counting in Eq.~\eqref{eq:schur-weyl} gives
    \begin{align}
        \Tr W
        = \sum_{\alpha\vdash_d n} \md{\alpha} d_\alpha = d^n,
    \end{align}
    where $d_\alpha$ is the dimension of the Specht module $\mcS_\alpha$ (see also Eq.~\eqref{eq:dimension-specht}).
    Thus $(c,W)$ is a feasible solution of the dual SDP attaining the claimed upper bound.
\end{proof}

\subsection{Efficient implementation of the optimal protocol}

We present a quantum circuit implementing the optimal isometry complex conjugation protocol (see also Fig.~\ref{fig:optimal-circuit}).

\begin{proposition}
    \label{prop:optimal-circuit}
    The quantum circuit in Fig.~\ref{fig:optimal-circuit} implements an optimal parallel isometry complex conjugation protocol with the fidelity in Eq.~\eqref{eq:max_form}.
\end{proposition}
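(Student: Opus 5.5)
We do not see the figure, so we commit to a natural circuit that reproduces the Choi operator $J_\sfS$ of Prop.~\ref{prop:achievability}, up to the irrelevant second term, which only matters on the failure branch.

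The circuit is the following. Fix the balanced diagram $\lambda=\lambda^*$.

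\emph{Encoder.} Prepare, on an auxiliary register, the state $\sum_{\alpha\in\lambda-\square}\sqrt{\mD{\alpha}/M_\lambda}\,\ket{\alpha}$. Conditioned on $\alpha$, prepare on $\mcI^n\otimes\mcA$ the maximally entangled state between the Weyl module $\mcU_\alpha^{(d)}$ of $\mcI^n$ and an ancilla copy of $\mcU_\alpha^{(d)}$. The Specht register of $\mcI^n$ is fixed to the path $p_\alpha$. Both are applied through $U_\mathrm{Sch}^{(n,d)\dagger}$. The input $\mcP$ is then combined with the ancilla Weyl register by $\mathrm{CG}^{(d)}_{\alpha,\square}$, and we keep only the component $\lambda$. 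More precisely, we use the \emph{dual} direction: the input qudit is regarded as carrying $\overline{U}$, so the encoder is the adjoint of the dual CG transform $\mathrm{CG}^{(d)}_{\lambda,\overline\square}$ restricted to the $\alpha$ block. This is exactly where Eq.~\eqref{eq:dual-CG} enters.

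\emph{Queries.} Apply $V^{\otimes n}$.

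\emph{Decoder.} Apply $U_\mathrm{Sch}^{(n,D)}$ to $\mcO^n$; by Eq.~\eqref{eq:schur-isometry} this yields $V_\alpha$ on the Weyl register. Then apply the dual CG transform $\mathrm{CG}^{(D)}_{\alpha,\overline\square}$, or the CG transform combining with $\lambda$, coherently controlled on $\alpha$. Finally output the fundamental register $\mcF$ and uncompute the $\alpha$ label coherently. The uncomputation is possible because the $\lambda$ component is common to all $\alpha$, and this common component is what makes the cross terms $\alpha\neq\beta$ in $C_1$ coherent.

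The verification has three steps.

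\emph{Step 1.} Compute the Choi operator of the successful branch via the link product. Using the transpose trick, the reality of the Schur bases, and Eq.~\eqref{eq:dual-CG}, show that it equals
\begin{align}
\frac{1}{M_\lambda}\sum_{\alpha,\beta}\sqrt{\mD{\alpha}\mD{\beta}}\,(\cdots)\otimes(\cdots).
\end{align}
The factor $\sqrt{\md{\alpha}/\md{\lambda}}$ of Eq.~\eqref{eq:dual-CG} converts the encoder amplitudes into the matrix units $E^{\lambda}_{p_\alpha\to\lambda,p_\beta\to\lambda}$, reproducing $C_1$ in the normalization of Prop.~\ref{prop:achievability}. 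This requires $\mD{\alpha}$-weights on the $\mcO$ side and $d/\md{\lambda}$-weights on the $\mcI$ side.

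\emph{Step 2.} The components of the CG outputs other than $\lambda$, which have nonzero weight for $D>d$, are routed to an arbitrary fixed state on $\mcF$, for example $\1_\mcF/D$. This gives a valid channel, and its Choi operator differs from $J_\sfS$ only by a positive term orthogonal to the $Y_\lambda$ support. The fidelity is therefore $\Tr(C_1\Omega_{d,D,n})$, which is the value in Eq.~\eqref{eq:max_form}.

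\emph{Step 3.} Trace-preservation follows automatically, since the circuit is a composition of isometries and a completion.

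The main obstacle is Step 1. One must track the phase and normalization conventions of the dual CG transform, namely the $\sqrt{\md{\alpha}/\md{\lambda}}$ versus $\sqrt{\mD{\alpha}/\mD{\lambda}}$ factors on the $d$ and $D$ sides, and check that the coefficients on the two sides multiply to $\mD{\alpha}\mD{\beta}d/(\md{\lambda}\mD{\lambda}M_\lambda)$. To handle this, I would first verify the case $d=1$ (state transposition) and the case $D=d$ (the known unitary protocol), then match the general case.

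Complexity follows from the cited implementations of $U_\mathrm{Sch}$ and the dual CG transforms, with $n=O(\mathrm{poly}(D,1/\epsilon))$.
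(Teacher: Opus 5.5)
\textbf{There is a genuine gap: the circuit you commit to uses the wrong transforms, and the computation that carries the proof (your Step~1) is only announced.}

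Some of the proposal is right. The $\alpha$-superposition with amplitudes $\sqrt{\mD{\alpha}/M_\lambda}$ matches the paper. Your target weight $d\,\mD{\alpha}\mD{\beta}/(\md{\lambda}\mD{\lambda}M_\lambda)$ is correct. Matching the circuit's Choi operator to $C_1$ of Prop.~\ref{prop:achievability} would be a legitimate alternative route.

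\emph{Encoder.} Take $\lambda=\lambda^*$. The map you first describe is: maximally entangle $\mcU^{(d)}_\alpha$ with an ancilla, apply $\mathrm{CG}^{(d)}_{\alpha,\square}$ to ancilla and input, keep $\lambda$. By Eq.~\eqref{eq:dual-CG} this equals $\sqrt{\md{\lambda}/(d\,\md{\alpha})}$ times the isometry $\mathrm{CG}^{(d)\dagger}_{\lambda,\overline\alpha}$ restricted to its $\square$ component. Run as a projection, it succeeds only with the $\alpha$-dependent probability $\md{\lambda}/(d\,\md{\alpha})$, e.g.\ $1/d^2$ for $\lambda=(q^d)$. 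That both loses weight and distorts the amplitudes $\sqrt{\mD{\alpha}/M_\lambda}$. Your ``dual direction'' fix names the wrong transform. The adjoint of $\mathrm{CG}^{(d)}_{\lambda,\overline\square}$ maps $\mcU^{(d)}_\alpha\to\mcU^{(d)}_\lambda\otimes\CC^d$, not $\CC^d\to\mcU^{(d)}_\lambda\otimes\mcU^{(d)}_\alpha$. Its coefficient in Eq.~\eqref{eq:dual-CG} is $\sqrt{\md{\alpha}/\md{\lambda}}$, not the $\sqrt{d/\md{\lambda}}$ you yourself say is needed. The correct encoder is $\mathrm{CG}^{(d)\dagger}_{\lambda,\overline\alpha}$.

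\emph{Decoder.} Neither $\mathrm{CG}^{(D)}_{\alpha,\overline\square}$ nor its adjoint gives the required map $\mcU^{(D)}_\alpha\to\mcU^{(D)}_\lambda\otimes\CC^D$. That map is $\mathrm{CG}^{(D)\dagger}_{\lambda,\overline\square}(\ket{\alpha}\otimes\cdot\,)$. It absorbs the label $\alpha$ and contributes the factor $\sqrt{\mD{\alpha}/\mD{\lambda}}$.

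\emph{Failure branch.} With these transforms, the dilation $U_V$ in Eq.~\eqref{eq:stinespring-isometry} satisfies $U_V^\dagger U_V=\sum_\alpha(\mD{\alpha}/M_\lambda)\1_d=\1_d$. So nothing is discarded on isometric inputs. Your Step~2 says non-$\lambda$ components carry nonzero weight for $D>d$ and are routed to $\1_\mcF/D$; this contradicts your Step~1. The second term of $J_\sfS$ is annihilated by $V^{\otimes n}$, so $C_1\ast J_{\mcV}^{\otimes n}$ is already trace preserving. Any nonzero failure weight on isometric inputs would therefore mean the successful branch is not $C_1$. A completion is needed only on the part of $\mcO^n$ outside $\bigoplus_\alpha\mcU^{(D)}_\alpha\otimes\ket{p_\alpha}$, which $V^{\otimes n}$ never reaches.

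\emph{Step~1.} This step is the whole proof, and you leave it as a plan: check $d=1$ and $D=d$, then match the general case. The paper avoids any superchannel link product. It fixes $V$ and works with the Stinespring isometry $U_V$. From Eq.~\eqref{eq:dual-CG} it derives the bending identities \eqref{eq:cg-identity-1} and \eqref{eq:cg-identity-2}. Using $\Pi^{(D)}_\lambda\mathrm{CG}^{(D)}_{\alpha,\square}(V_\alpha\otimes V)\mathrm{CG}^{(d)\dagger}_{\alpha,\square}\Pi^{(d)}_\lambda=V_\lambda$, these give $(\1\otimes\dbra{\overline V})\dket{T_{\alpha,V}}=\sqrt{d\,\mD{\alpha}/(\md{\lambda}\mD{\lambda})}\,\dket{V_\lambda}$. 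Then $\dbraket{V_\lambda}=\md{\lambda}$ and the double sum over $\alpha,\beta$ give $F_{\mathrm{ch}}(\Phi_V,\overline{\mcV})=M_\lambda/(d\,\mD{\lambda})$ for every $V$, not only on average.
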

\begin{proof}[Proof sketch]
    Equation~\eqref{eq:schur-isometry} shows that the $V^{\otimes n}$ transmits the symmetric-group register unchanged and applies $V_\alpha$ on each unitary-group register.
    We prepare the quantum state
    \begin{align}
        \ket{y_{\lambda^*}}
        \coloneqq
        \sum_{\alpha\in\lambda^*-\square}
        \sqrt{\frac{\mD{\alpha}}{M_{\lambda^*}}}\ket{\alpha},
        \label{eq:branch-control-state}
    \end{align}
    where $M_{\lambda^*}$ is defined in Eq.~\eqref{eq:definition_M}.
    Controlling the Schur blocks by the quantum state $\ket{y_{\lambda^*}}$ and sandwiching them between the two dual Clebsch--Gordan transforms provides a Stinespring dilation isometry for the output quantum channel:
    \begin{align}
        U_V
        &\coloneqq
        \sum_{\alpha\in\lambda^*-\square}
        \sqrt{\frac{\mD{\alpha}}{M_{\lambda^*}}}
        \left[\1_{\mcU_{\lambda^*}^{(d)}}\otimes
        \mathrm{CG}^{(D)\dagger}_{\lambda^*,\overline\square}
        (\ket{\alpha}\otimes V_\alpha)\right]
        \mathrm{CG}^{(d)\dagger}_{\lambda^*,\overline\alpha}.
        \label{eq:stinespring-isometry}
    \end{align}
    The output quantum channel is given by
    \begin{align}
        \Phi_V(\cdot)
        \coloneqq
        \Tr_{\mcU_{\lambda^*}^{(d)}\mcU_{\lambda^*}^{(D)}}
        [U_V(\mathord\cdot)U_V^\dagger].
    \end{align}
    As shown in Appendix~\ref{app:optimal-circuit}, the channel fidelity between $\Phi_V$ and the complex conjugate channel $\overline{\mcV}$ is given by
    \begin{align}
        F_{\mathrm{ch}}(\Phi_V,\overline{\mcV})
        ={1\over d}
        \frac{\sum_{\alpha\in\lambda^*-\square}\mD{\alpha}}
        {\mD{\lambda^*}},
    \end{align}
    for all isometry $V:\CC^d\to\CC^D$, which concludes the proof.
\end{proof}

\begin{corollary}
    \label{cor:query-complexity}
    For $0<\epsilon<1$, the complex conjugate of an unknown isometry $V:\CC^d \to \CC^D$ with diamond-distance error at most $\epsilon$ can be implemented with $n$ parallel queries satisfying
    \begin{align}
        n
        \leq
        \begin{cases}
            d-1 & (D=d)\\
            \frac{d(D-d)}{\epsilon} & (D>d)
        \end{cases}.
        \label{eq:query-complexity-isometry}
    \end{align}
    The circuit complexity is given by $O(\operatorname{poly}(D,1/\epsilon))$.
    Conversely, any general protocol for isometry complex conjugation with diamond-distance error at most $\epsilon$ requires at least
    \begin{align}
        n
        \geq
        \begin{cases}
            d-1-{\epsilon \over 1-\epsilon} & (D=d)\\
            d\qty[{(D-d)(1-\epsilon) \over \epsilon}-1] & (D>d)
        \end{cases}.
    \end{align}
\end{corollary}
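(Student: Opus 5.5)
The plan is to reduce both directions to the closed form \eqref{eq:closed-form} of Thm.~\ref{thm:optimal-fidelity}. Two relations between the channel fidelity \eqref{eq:channel-fidelity} and the diamond distance are needed, both linear in $1-F$. The \emph{lower} bound on the error is generic. For any protocol, with the output channel applied to the maximally entangled input, the trace distance between the normalized Choi state and the pure state $\dket{\overline V}/\sqrt d$ is at least $1-F_{\mathrm{ch}}$. Hence $\tfrac12\|\sfS(\mcV^{\otimes n})-\overline{\mcV}\|_\diamond\ge 1-F_{\mathrm{ch}}$ for every $V$. Averaging over $V$ and using Prop.~\ref{prop:optimality}, diamond error $\le\epsilon$ forces $F^{(\mathrm{GEN})}_{d,D,n}\ge 1-\epsilon$.

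For the \emph{upper} bound on the error I would use the explicit circuit of Prop.~\ref{prop:optimal-circuit} and its covariance. From \eqref{eq:stinespring-isometry} and $(WVU)_\alpha=W_\alpha V_\alpha U_\alpha$, one gets $\Phi_{WVU}(\rho)=\overline W\,\Phi_V(\overline U\rho U^\top)\,W^\top$. So the Choi operator of $\Phi_V$ commutes with $\overline{U}\otimes \overline{W}'$ for all $U\in\U(d)$ and all $W'\in\U(D)$ stabilizing $V$ suitably, which after the reduction $V\mapsto V_0=(\1_d,0)^\top$ means unitaries $U\oplus W_\perp$. The commutant of this group on $\CC^d\otimes\CC^D=\CC^d\otimes(\CC^d\oplus\CC^{D-d})$ is spanned by the projector onto $\dket{\1_d}$, its orthogonal complement inside $\CC^d\otimes\CC^d$, and the identity on $\CC^d\otimes\CC^{D-d}$. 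Hence
\begin{align}
J_{\Phi_{V_0}}=F\,\dketbra{\overline{V_0}}+J_\perp,\qquad J_\perp\ge0,\quad J_\perp\dket{\overline{V_0}}=0,
\end{align}
where the coefficient of the rank-one term is forced to be $F=F_{\mathrm{ch}}$. The remaining requirement is that $J_\perp/(1-F)$ is the Choi matrix of a channel $\Gamma$. Trace preservation of $J_\perp$ follows from $\Tr_{\mcF}J_{\Phi}=\1$ and $\Tr_{\mcF}\dketbra{\overline{V_0}}=\1$. Then $\Phi_{V_0}=F\,\overline{\mcV_0}+(1-F)\Gamma$, which gives $\tfrac12\|\Phi_V-\overline{\mcV}\|_\diamond\le 1-F$ uniformly in $V$.

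It then remains to turn $1-F\le\epsilon$ into the stated inequalities on $n$ using \eqref{eq:closed-form}.

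\textbf{Case $D=d$.} Here $R_{d,d,n}=\tfrac{q+d+1}{q+d-r+1}\cdot\tfrac{q+d-r}{q}$ is multiplied by $D-d=0$, so $F=1$ as soon as $q\ge1$, i.e.\ $n\ge d-1$, which gives the upper bound $n\le d-1$. For $q=0$ one has $F=\tfrac{n+1}{d(d-n)}$. Requiring $F\ge1-\epsilon$ and using $\tfrac{n+1}{d(d-n)}\le\tfrac{1}{d-n}$ gives $d-n\le 1/(1-\epsilon)$, i.e.\ $n\ge d-1-\epsilon/(1-\epsilon)$; the bound holds trivially if $q\ge1$.

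\textbf{Case $D>d$.} I would first derive two-sided estimates of $1-F$. Write $1-F=\tfrac{D-d}{d}(R_{d,D,n}-1)$ for $q\ge1$ and
\begin{align}
R_{d,D,n}-1=\frac{(q+D+1)(q+D-r)-(q+D-r+1)(q+D-d)}{(q+D-r+1)(q+D-d)}.
\end{align}
The numerator simplifies to a linear expression in $q$ with coefficient $d$, of order $d(q+D)$. Thus $1-F\approx \tfrac{D-d}{q+D}$ up to explicit constant factors. Using $n+1=qd+r$, the upper bound should read $1-F\le \tfrac{d(D-d)}{n}$, which yields $n\le d(D-d)/\epsilon$ for achievability. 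The lower bound should read $1-F\ge\tfrac{(D-d)}{q+D}\ge\tfrac{d(D-d)}{n+d+d(D-d)}$ or similar, which rearranges to $n\ge d[(D-d)(1-\epsilon)/\epsilon-1]$. The small-$n$ regime $q=0$ must be checked separately with $F=\tfrac{n+1}{d(D-n)}$.

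For the circuit complexity, I would count the gates in Fig.~\ref{fig:optimal-circuit}: the Schur transform, the dual CG transforms $\mathrm{CG}_{\lambda^*,\overline\square}$, and state preparation of $\ket{y_{\lambda^*}}$. These cost $O(\mathrm{poly}(D,n))$ by the cited constructions. The one nonstandard piece is $\mathrm{CG}^{(d)\dagger}_{\lambda^*,\overline\alpha}$, which should be assembled from $\overline\square$ steps via the Schur transform. With $n=O(dD/\epsilon)$ this is $O(\mathrm{poly}(D,1/\epsilon))$.

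I expect the main obstacle to be the elementary but fiddly inequalities on $R_{d,D,n}$. These must cover all residues $r$ and the boundary case $q=0$ with exactly the stated constants. Next in difficulty is justifying that $J_\perp$ is a valid channel Choi matrix, which rests on the covariance argument above.
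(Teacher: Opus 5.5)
Your converse mechanism ($\tfrac12\|\cdot\|_\diamond\ge 1-F_{\mathrm{ch}}$ for every $V$, averaging, then Prop.~\ref{prop:optimality}) matches the paper, and your $D=d$ case is correct. Your commutant argument for $\tfrac12\|\Phi_V-\overline{\mcV}\|_\diamond\le 1-F_{\mathrm{ch}}$ is a valid self-contained substitute for the lemma the paper cites from Ref.~\cite{yoshida2026quantum}. One small correction there: the invariance group of $J_{\Phi_{V_0}}$ is $X\otimes(\overline{X}\oplus Y)$ with $X\in\U(d)$, $Y\in\U(D-d)$, which is what makes $\dket{\1_d}$ invariant; the decomposition you list is the right one for that group.

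The genuine gap is the circuit-complexity claim. Because you keep $n$ general, $n+1=qd+r$ with $0<r<d$ is allowed. Then $\lambda^*=((q+1)^r,q^{d-r})$, and the circuit needs $\mathrm{CG}^{(d)\dagger}_{\lambda^*,\overline\alpha}$ for generic parents $\alpha\in\lambda^*-\square$. The paper states explicitly that circuits for $\mathrm{CG}_{\alpha,\beta}$ with general irreps are not known, and ``assembled from $\overline\square$ steps via the Schur transform'' is not a construction.

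The missing idea is that achievability needs only one good $n$. Take $n+1=dq_\epsilon$ with $q_\epsilon=\max\{1,\lceil (D-d)(1-\epsilon)/\epsilon\rceil\}$. Then $r=0$, and $\lambda^*=(q_\epsilon^d)$ is rectangular with the single parent $\alpha^*=(q_\epsilon^{d-1},q_\epsilon-1)$. Here $\mcU^{(d)}_{\lambda^*}$ is one-dimensional and $\overline{\alpha^*}$ is $\square$ twisted by a power of the determinant, so $\mathrm{CG}^{(d)}_{\lambda^*,\overline{\alpha^*}}\cong\mathrm{CG}^{(d)}_{\lambda^*,\square}$. The only other transform, $\mathrm{CG}^{(D)}_{\lambda^*,\overline\square}$, is a standard efficient dual CG transform. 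The same choice collapses \eqref{eq:closed-form} to $F=q_\epsilon/(q_\epsilon+D-d)$. Hence $1-F\le\epsilon$ and $n=dq_\epsilon-1\le d(D-d)/\epsilon$ follow at once, and no all-residue estimate is needed on the achievability side.

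The estimates you sketch also fail as written.

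\emph{Achievability.} The bound $1-F\le d(D-d)/n$ only lets you take $n=\lceil d(D-d)/\epsilon\rceil$, which can exceed $d(D-d)/\epsilon$. You would need the sharper bound $1-F\le d(D-d)/(n+1)$, which is true.

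\emph{Converse} (the paper leaves this algebra implicit). The chain $1-F\ge\frac{D-d}{q+D}\ge\frac{d(D-d)}{n+d+d(D-d)}$ breaks at the second step. With $n=qd+r-1$ one has $d(q+D)-[n+d+d(D-d)]=d^2-d+1-r>0$, so that inequality points the other way. The weaker bound $\frac{D-d}{q+D}$ alone only yields $n\ge d(D-d)/\epsilon-dD-1$, which is weaker than claimed.

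The bound you need holds directly. With $t\coloneqq q+D-d$, one has $1-F=\frac{(D-d)[dt+(d-r)(d+1)]}{d\,t\,(t+d-r+1)}$ and $n+d+d(D-d)=dt+d+r-1$. So $1-F\ge\frac{d(D-d)}{n+d+d(D-d)}$ reduces to
\begin{align*}
[dt+(d-r)(d+1)][dt+d+r-1]-d^2t(t+d-r+1)=d(d-1)t+(d-r)(d+1)(d+r-1)\geq 0 ,
\end{align*}
which rearranges to the stated $n\ge d[(D-d)(1-\epsilon)/\epsilon-1]$. For $D>d$, the $q\ge1$ branch of \eqref{eq:closed-form} evaluated at $q=0$ reproduces $\frac{n+1}{d(D-n)}$. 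So this one computation also settles your separate $q=0$ regime.
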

\begin{proof}
    For $D=d$, Eq.~\eqref{eq:closed-form} gives exact conjugation at $n=d-1$.
    For $D>d$, we choose
    \begin{align}
        q_\epsilon
        &\coloneqq
        \max\left\{1,
        \left\lceil\frac{(D-d)(1-\epsilon)}{\epsilon}\right\rceil
        \right\},\\
        n&\coloneqq dq_\epsilon-1.
    \end{align}
    Then,  Eq.~\eqref{eq:closed-form} reduces to $F=q_\epsilon/(q_\epsilon+D-d)$ and gives $1-F\leq\epsilon$, and we have $n\leq d(D-d)/\epsilon$ by definition.
    Due to the unitary-group invariance of the optimal protocol shown in Fig.~\ref{fig:optimal-circuit}, the output quantum channel $\Phi_V$ satisfies
    \begin{align}
        \label{eq:unitary-symmetry}
        \Phi_{WVU} = \overline{\mcW} \circ \Phi_V \circ \overline{\mcU} \quad \forall U\in\U(d), W\in\U(D).
    \end{align}
    Then, we can show that the diamond-distance error is bounded by the channel fidelity~\cite[Lemma S2]{yoshida2026quantum}:
    \begin{align}
        \frac{1}{2}\|\Phi_V-\overline{\mcV}\|_\diamond
        &\leq 1-F_{\mathrm{ch}}(\Phi_V,\overline{\mcV}).
    \end{align}
    Thus, the diamond-distance error is at most $\epsilon$ with $n$ parallel queries.

    We next show that the quantum circuit in Fig.~\ref{fig:optimal-circuit} can be implemented efficiently for this choice of parameters.
    In this case, the balanced diagram is given by $\lambda^*=(q_\epsilon^d)$, and it has a single parent $\alpha^*=(q_\epsilon^{d-1},q_\epsilon-1)$, i.e., $\lambda^*-\square = \{\alpha^*\}$.
    Since $\mathrm{CG}_{\lambda^*, \overline{\alpha}^*}^{(d)} \cong \mathrm{CG}_{\lambda^*, \square}^{(d)}$ holds, the quantum circuit in Fig.~\ref{fig:optimal-circuit} can be implemented with efficient (dual) Clebsch--Gordan transforms~\cite{harrow2005applications,bacon2006efficient,bacon2007quantum,krovi2019efficient,burchardt2025high,nguyen2023mixed,grinko2023gelfand,fei2023efficient,grinko2025mixed}, which results in the circuit complexity $O(\operatorname{poly}(D,n))=O(\operatorname{poly}(D,1/\epsilon))$.

    We finally show a lower bound on the query complexity.
    Since ${1\over 2} \|\rho-\ketbra{\psi}\|_1 \geq 1-\bra{\psi}\rho\ket{\psi}$ holds for any quantum state $\rho$ and pure state $\ket{\psi}$, we have
    \begin{align}
        \frac{1}{2}\|\Phi_V-\overline{\mcV}\|_\diamond
        &\geq 1-F_{\mathrm{ch}}(\Phi_V,\overline{\mcV}).
    \end{align}
    Thus, the diamond-distance error is at most $\epsilon$ only if the channel fidelity is at least $1-\epsilon$.
    Then, Eq.~\eqref{eq:closed-form} gives the lower bound on the query complexity.
\end{proof}

\subsection{Application: complex conjugation of general quantum channels}

Let $V_\Lambda$ be a Stinespring dilation isometry of a channel $\Lambda$.
Then $\overline{V}_{\Lambda}$ is a Stinespring dilation isometry of the complex conjugate channel $\overline{\Lambda}$ defined by $\overline{\Lambda}(\cdot)\coloneqq \overline{\Lambda(\overline{\cdot})}$. Therefore, the complex conjugation of $\Lambda$ can be implemented by conjugating $V_\Lambda$ and subsequently tracing out the environment.
Since the complex conjugate of a quantum channel $\Lambda$ is a Stinespring dilation isometry of a complex conjugate channel $\overline{\Lambda}$ defined by $\overline{\Lambda}(\rho)\coloneqq \overline{\Lambda(\overline{\rho})}$, it can be implemented by the complex conjugate of its Stinespring isometry.
By combining isometry complex conjugation with the random dilation superchannel~\cite{girardi2025random,yoshida2025random}, we can implement the complex conjugate of a general quantum channel as follows.

\begin{corollary}
    \label{cor:channel-conjugation-achievability}
    Suppose we have an unknown quantum channel $\Lambda:\mcL(\CC^{d_\mathrm{in}})\to\mcL(\CC^{d_\mathrm{out}})$ with Kraus rank at most $r$.
    The complex conjugate of $\Lambda$ can be implemented with the diamond-distance error at most $\epsilon$ using at most
    \begin{align}
        n\leq
        \begin{cases}
            d_\mathrm{in}-1 & (\Delta_\Lambda=0)\\
            \frac{d_\mathrm{in}\Delta_\Lambda}{\epsilon} & (\Delta_\Lambda>0)
        \end{cases}
        = O\qty(d_\mathrm{in}\qty[{\Delta_\Lambda \over \epsilon}+1])
    \end{align}
    parallel queries, where $\Delta_\Lambda\coloneqq d_\mathrm{out}r-d_\mathrm{in} \geq 0$.
\end{corollary}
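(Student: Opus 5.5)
The plan is to reduce channel conjugation to isometry conjugation via the random dilation superchannel, and then invoke Cor.~\ref{cor:query-complexity} with $d=d_\mathrm{in}$ and $D=rd_\mathrm{out}$.

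\textbf{Step 1 (reduction to a dilation isometry).} Let $\Lambda$ have Kraus rank at most $r$. Then it admits a Stinespring dilation $V_\Lambda:\CC^{d_\mathrm{in}}\to\CC^{d_\mathrm{out}}\otimes\CC^{r}$, i.e.\ an isometry with $D=rd_\mathrm{out}$. Hence $D-d=\Delta_\Lambda\geq 0$; nonnegativity holds because an isometry requires $D\geq d$.

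\textbf{Step 2 (random dilation).} By the random dilation superchannel of Refs.~\cite{girardi2025random,yoshida2025random}, $n$ parallel queries to $\Lambda$ implement exactly the channel $\mathbb{E}_{W}[\mcV_W^{\otimes n}]$. Here $V_W=(\1_{d_\mathrm{out}}\otimes W)V_\Lambda$, and $W$ is Haar-random on the environment $\CC^r$.

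\textbf{Step 3 (conjugate and trace out).} Feed this into the optimal parallel isometry conjugation protocol of Prop.~\ref{prop:optimal-circuit}, using the $n$ from Cor.~\ref{cor:query-complexity}, and trace out the environment of the output. By linearity, the resulting channel is $\mathbb{E}_W\Tr_\mcE\circ\Phi_{V_W}$. For each fixed $W$, Cor.~\ref{cor:query-complexity} gives $\frac12\|\Phi_{V_W}-\overline{\mcV_W}\|_\diamond\leq\epsilon$. Tracing out is a channel and does not increase the diamond norm. Moreover $\Tr_\mcE\circ\overline{\mcV_W}=\overline{\Lambda}$ for every $W$, since $\overline{V_W}$ is a dilation of $\overline{\Lambda}$, as noted before the statement. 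Convexity of the diamond norm over the average in $W$ then gives total error at most $\epsilon$.

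\textbf{Step 4 (query count).} The number of queries is the $n$ from Cor.~\ref{cor:query-complexity} with $d=d_\mathrm{in}$ and $D-d=\Delta_\Lambda$. This yields $d_\mathrm{in}-1$ when $\Delta_\Lambda=0$ and $d_\mathrm{in}\Delta_\Lambda/\epsilon$ otherwise, which is $O(d_\mathrm{in}[\Delta_\Lambda/\epsilon+1])$.

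\textbf{Main obstacle.} The only delicate point is Step 2: the random-dilation superchannel must output exactly the $W$-average of $n$-fold isometry queries, and that average must commute with the subsequent (linear) protocol. I would cite the exact statement from Ref.~\cite{yoshida2025random} and apply linearity of superchannels.
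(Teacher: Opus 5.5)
Your proposal is correct and follows the paper's proof essentially step for step. Both arguments take a Stinespring dilation into $\CC^{d_\mathrm{out}}\otimes\CC^r$ (so $d=d_\mathrm{in}$ and $D-d=\Delta_\Lambda$), use the random dilation superchannel to obtain the Haar average over dilations (your $\U(r)$-orbit is the paper's $\mathrm{Dil}_r(\Lambda)$), apply the optimal isometry conjugation protocol with $n$ from Cor.~\ref{cor:query-complexity}, and conclude via contractivity under $\Tr_{\CC^r}$ and convexity of the diamond norm, reading off the same query count.
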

\begin{proof}
    The quantum channel $\Lambda$ has a Stinespring dilation isometry $V:\CC^{d_\mathrm{in}}\to\CC^{d_\mathrm{out}}\otimes\CC^r$ such that $\Lambda(\rho)=\Tr_{\CC^r}[V\rho V^\dagger]$.
    We define the set of all such dilations by
    \begin{align}
        \mathrm{Dil}_r(\Lambda)
        \coloneqq
        \bigl\{V\in \isometry{d_\mathrm{in}}{d_\mathrm{out}r}\mid
        \Tr_{\CC^r}[V(\mathord\cdot)V^\dagger]=\Lambda\bigr\},
    \end{align}
    which has a Haar measure induced by the Haar measure on $\U(r)$.
    The random dilation superchannel performs the exact conversion
    \begin{align}
        \Lambda^{\otimes n}
        \mapsto
        \mathbb E_{V\sim\mathrm{Dil}_r(\Lambda)}
        [\mcV^{\otimes n}],
    \end{align}
    where the expectation is taken over the Haar measure on $\mathrm{Dil}_r(\Lambda)$.
    Applying the optimal isometry complex conjugation protocol to the output of the random dilation superchannel gives a channel $\Phi_V$ for each $V\in\mathrm{Dil}_r(\Lambda)$, which satisfies
    \begin{align}
        \frac{1}{2} \|\Phi_V-\overline{\mcV}\|_\diamond\leq \epsilon.
    \end{align}
    Since $\Tr_{\CC^r}\circ\overline{\mcV}=\overline\Lambda$ holds, the data-processing inequality holds for the diamond norm; we have
    \begin{align}
        \frac{1}{2}\left\|
        \mathbb E_V[\Tr_{\CC^r}\circ\Phi_V]-\overline\Lambda
        \right\|_\diamond
        &\leq
        \mathbb E_V\left[
        \frac{1}{2}\|\Phi_V-\overline{\mcV}\|_\diamond
        \right]
        \leq\epsilon.
    \end{align}
\end{proof}

By embedding a suitably chosen isometry channel onto a quantum channel, we obtain the following lower bounds from the exact isometry fidelity in Thm.~\ref{thm:optimal-fidelity}.

\begin{corollary}
    \label{cor:channel-conjugation-lower-bound}
    Let $d_\mathrm{out}\geq2$, $r\geq1$, $1\leq d_\mathrm{in}\leq r d_\mathrm{out}$, and $0<\epsilon\leq1/4$, and define $\Delta_\Lambda\coloneqq r d_\mathrm{out}-d_\mathrm{in}$.
    Any protocol that implements complex conjugation with diamond-distance error at most $\epsilon$ for every channel $\Lambda:\mcL(\CC^{d_\mathrm{in}})\to\mcL(\CC^{d_\mathrm{out}})$ of Kraus rank at most $r$ requires
    \begin{align}
        n
        &\geq \Omega\!\left(
        d_\mathrm{out}\left[
        \frac{\min\{d_\mathrm{in},d_\mathrm{out},\Delta_\Lambda\}}
        {\epsilon}+1
        \right]
        \right).
        \label{eq:channel-conjugation-lower-bound-general-r}
    \end{align}
    queries.
    For a low rank quantum channel satisfying $r=O(1)$, this reduces to
    \begin{align}
        n
        \geq \Omega\!\left(d_\mathrm{in}
        \qty[{\Delta_\Lambda \over \epsilon}+1]        \right),
        \label{eq:channel-conjugation-lower-bound-constant-r}
    \end{align}
    which shows the tightness of the upper bound in Cor.~\ref{cor:channel-conjugation-achievability} up to a constant factor.
\end{corollary}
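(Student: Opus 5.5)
We prove Cor.~\ref{cor:channel-conjugation-lower-bound} by reduction: any channel-conjugation protocol, applied to a suitably chosen subfamily of channels, yields an isometry-conjugation protocol. Corollary~\ref{cor:query-complexity} (equivalently Eq.~\eqref{eq:closed-form} together with ${1\over2}\|\Phi-\overline{\mcV}\|_\diamond\geq 1-F_{\mathrm{ch}}$) then forces the stated number of queries. We derive two separate lower bounds and take their maximum: a constant-error term $\Omega(d_\mathrm{out})$ and a $1/\epsilon$ term $\Omega(d_\mathrm{out}\min\{d_\mathrm{in},d_\mathrm{out},\Delta_\Lambda\}/\epsilon)$.

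\textbf{Embedding.} Let $d'\leq \min\{d_\mathrm{in},d_\mathrm{out}\}$. Fix an isometry $V:\CC^{d'}\to\CC^{d_\mathrm{out}}$. Define the channel $\Lambda_V(\rho)\coloneqq V P\rho P V^\dagger + \Tr[(\1-P)\rho]\,\sigma_0$, where $P$ projects onto the first $d'$ inputs and $\sigma_0$ is a fixed real pure state; this $\Lambda_V$ has Kraus rank at most $1+(d_\mathrm{in}-d')\leq$ something we must keep $\leq r$. To respect the rank constraint, we instead send the complement to $V$-independent outputs using rank-one Kraus operators $\ket{0}\bra{j}$, which can be merged into a single Kraus operator when $d_\mathrm{in}-d'$ is small. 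Simplest is to take $d'=d_\mathrm{in}$ when $d_\mathrm{in}\leq d_\mathrm{out}$, so that $\Lambda_V=\mcV$ has rank $1\leq r$. When $d_\mathrm{in}>d_\mathrm{out}$, we use $\Lambda_V(\rho)=V P\rho P V^\dagger+\Tr[(\1-P)\rho]\ketbra{0}$ with $d'=d_\mathrm{out}$, which has rank $\leq 1+ (d_\mathrm{in}-d_\mathrm{out})$ Kraus operators $\ket{0}\bra{j}$. To stay within rank $r$, we combine these into a rank-$2$ form: keep $VP$ and one Kraus operator $\ket{0}\bra{j}$ per $j$, grouped via an ancilla-free trick. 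If this fails for $r=1$ we restrict to $d_\mathrm{in}\leq d_\mathrm{out}$, and in that case the target bound already involves $\min\{d_\mathrm{in},d_\mathrm{out}\}$.

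Conjugation behaves well under the embedding: $\overline{\Lambda_V}=\Lambda_{\overline V}$ because $\sigma_0$ and $P$ are real. Given an $\epsilon$-accurate protocol for $\Lambda_V$, we precompose with $P$ (feed inputs supported on $\CC^{d'}$). Each query to $\Lambda_V$ is simulated by one query to $V$: on the input, measure $\{P,\1-P\}$ coherently and discard to $\ketbra{0}$ otherwise. This yields an isometry-conjugation protocol for $V:\CC^{d'}\to\CC^{d_\mathrm{out}}$ with the same $n$ and error $\leq\epsilon$ in diamond norm on the restricted input. Corollary~\ref{cor:query-complexity} with $(d,D)=(d',d_\mathrm{out})$ then gives $n\geq d'[(d_\mathrm{out}-d')(1-\epsilon)/\epsilon-1]$ when $d'<d_\mathrm{out}$.

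\textbf{Choice of $d'$.} For the $1/\epsilon$ term we take $d'=\lceil d_\mathrm{out}/2\rceil$ when $\min\{d_\mathrm{in},d_\mathrm{out}\}\geq d_\mathrm{out}/2$, giving $\Omega(d_\mathrm{out}^2/\epsilon)$. Otherwise we take $d'=d_\mathrm{in}<d_\mathrm{out}/2$, giving $\Omega(d_\mathrm{in}d_\mathrm{out}/\epsilon)$. For the constant term (and where $\Delta_\Lambda$ is the binding quantity) we need $d'\approx d_\mathrm{out}$ to get $n\gtrsim d_\mathrm{out}$. This uses Eq.~\eqref{eq:closed-form} with $n+1\leq d'$, i.e.\ $q=0$: then $F=(n+1)/(d'(d_\mathrm{out}-n))$, which is $\leq 1/2<1-\epsilon$ for $\epsilon\leq1/4$ unless $n=\Omega(d_\mathrm{out})$. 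Combining with $\Delta_\Lambda$ — which caps the gap $rd_\mathrm{out}-d_\mathrm{in}$, so for $r=1$, $d'=d_\mathrm{in}$ and $d_\mathrm{out}-d'=\Delta_\Lambda$ — produces $\min\{d_\mathrm{in},d_\mathrm{out},\Delta_\Lambda\}$. For $r=O(1)$, $\Delta_\Lambda\leq rd_\mathrm{out}$ and $d_\mathrm{in}\leq rd_\mathrm{out}$, so constants absorb the difference and give Eq.~\eqref{eq:channel-conjugation-lower-bound-constant-r}.

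\textbf{Main obstacle.} The delicate step is the rank-$r$ bookkeeping in the embedding when $d_\mathrm{in}>d_\mathrm{out}$. There we first try realizing the channel as an isometry $\CC^{d_\mathrm{in}}\to\CC^{d_\mathrm{out}}\otimes\CC^r$, traced over the environment, whose restriction carries a conjugation-sensitive isometry block of dimension $\min\{d_\mathrm{in},d_\mathrm{out},\Delta_\Lambda\}$. If exact constants become messy, we settle for case-wise $\Omega$ bounds.
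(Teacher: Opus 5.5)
\textbf{Gap: Kraus-rank bookkeeping.} There is a genuine gap in the Kraus-rank bookkeeping, and it sits exactly where $\Delta_\Lambda$ has to enter the bound.

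Your complement map $\rho\mapsto\Tr[(\1-P)\rho]\,\sigma_0$ is measure-and-prepare. Its Choi operator is $(\1-P)\otimes\sigma_0$, which has rank $d_\mathrm{in}-d'$. So the operators $\ket{0}\bra{j}$ cannot be merged into fewer Kraus operators, and $\Lambda_V$ has Kraus rank exactly $1+d_\mathrm{in}-d'$. The ``ancilla-free trick'' does not exist. Your embedding is therefore legal only when $d_\mathrm{in}-d'\leq r-1$.

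For $r=1$ this forces $d'=d_\mathrm{in}$, so your choice $d'=\lceil d_\mathrm{out}/2\rceil$ is illegal there. With $d'=d_\mathrm{in}$, however, the case $r=1$ does go through.

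For $r\geq 2$ the construction breaks down:
\begin{itemize}
\item With $r=2$ and $d_\mathrm{in}=d_\mathrm{out}$, the target is $\Omega(d_\mathrm{out}^2/\epsilon)$. Every legal choice ($d'\geq d_\mathrm{out}-1$) gives only $\Theta(d_\mathrm{out}/\epsilon)$.
\item With $r=2$ and $d_\mathrm{in}=2d_\mathrm{out}-1$, no $d'\leq d_\mathrm{out}$ is legal at all.
\item In the corner case $d_\mathrm{in}=rd_\mathrm{out}$ with $r\geq 2$, again no legal $d'$ exists, so the $\Omega(d_\mathrm{out})$ term is never obtained.
\end{itemize}
Your sentence producing $\min\{d_\mathrm{in},d_\mathrm{out},\Delta_\Lambda\}$ identifies $d_\mathrm{out}-d'=\Delta_\Lambda$ only for $r=1$. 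For general $r$, the role of $\Delta_\Lambda$ is asserted rather than derived.

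\textbf{The missing idea.} The paper sends the complement through a fixed channel of small Kraus rank rather than a measure-and-prepare map. Write $\CC^{d_\mathrm{in}}=\mcH_A\oplus\mcH_B$ with $\dim\mcH_A=a$, and set $\Lambda_V(\cdot)=V\Pi_A(\cdot)\Pi_A V^\dagger+\Gamma(\Pi_B(\cdot)\Pi_B)$. Here $\Gamma(\sigma)=\Tr_{\CC^{r-1}}[W\sigma W^\dagger]$ for a fixed isometry $W:\mcH_B\to\CC^{d_\mathrm{out}}\otimes\CC^{r-1}$.

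This channel has Kraus rank at most $r$. Such a $W$ exists precisely when $d_\mathrm{in}-a\leq (r-1)d_\mathrm{out}$, i.e.\ $a\geq d_\mathrm{out}-\Delta_\Lambda$. This lower constraint, together with $a\leq\min\{d_\mathrm{in},d_\mathrm{out}-1\}$, is what produces the minimum.

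Choose $a=d_\mathrm{in}$ if $d_\mathrm{in}<d_\mathrm{out}/2$, $a=d_\mathrm{out}-\Delta_\Lambda$ if $\Delta_\Lambda<d_\mathrm{out}/2$, and $a=\lfloor d_\mathrm{out}/2\rfloor$ otherwise. This gives $a(d_\mathrm{out}-a)=\Omega(d_\mathrm{out}\min\{d_\mathrm{in},d_\mathrm{out},\Delta_\Lambda\})$. Corollary~\ref{cor:query-complexity} with $\epsilon\leq 1/4$ then yields $n\geq a(d_\mathrm{out}-a)/(2\epsilon)$.

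For $\Delta_\Lambda=0$, use $\Lambda_U(\cdot)=U\Tr_{\CC^r}[\cdot]U^\dagger$ on $\CC^{d_\mathrm{out}}\otimes\CC^r$. This channel has rank $r$ and reduces to unitary conjugation.

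Once the embedding is repaired this way, the rest of your argument is fine: the one-query simulation, the restriction to $\mcH_A$ (a real subspace), data processing, and the $r=O(1)$ simplification.
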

\begin{proof}
    First suppose that $d_\mathrm{in}<r d_\mathrm{out}$, and define the set of admissible input dimensions for the embedded isometry channel by
    \begin{align}
        \mcA_r \coloneqq
        \left\{a\in\ZZ \mid
        \max\{1,d_\mathrm{in}-(r-1)d_\mathrm{out}\}
        \leq a\leq
        \min\{d_\mathrm{in},d_\mathrm{out}-1\}
        \right\}.
        \label{eq:admissible-embedded-isometry-dimension}
    \end{align}
    Since $d_\mathrm{in}$, $d_\mathrm{out}$, and $r$ are integers and $d_\mathrm{in}<r d_\mathrm{out}$, the gap $\Delta_\Lambda$ satisfies $\Delta_\Lambda\geq1$.
    Thus, $1\leq\min\{d_\mathrm{in},d_\mathrm{out}-1\}$ and
    \begin{align}
        d_\mathrm{in}-(r-1)d_\mathrm{out}
        =d_\mathrm{out}-\Delta_\Lambda
        \leq\min\{d_\mathrm{in},d_\mathrm{out}-1\}
    \end{align}
    hold, which proves that the set $\mcA_r$ in Eq.~\eqref{eq:admissible-embedded-isometry-dimension} is not empty.
    Fix any $a\in\mcA_r$ and decompose the input space as $\CC^{d_\mathrm{in}}=\mcH_A\oplus\mcH_B$, where $\dim\mcH_A=a$ and $\dim\mcH_B=d_\mathrm{in}-a$.
    By definition of $\mcA_r$, $\dim\mcH_B\leq(r-1)d_\mathrm{out}$ holds, thus there exists a fixed quantum channel $\Gamma:\mcL(\mcH_B)\to\mcL(\CC^{d_\mathrm{out}})$ of Kraus rank at most $r-1$.
    For an unknown isometry operator $V:\mcH_A\to\CC^{d_\mathrm{out}}$, we define a quantum channel $\Lambda_V:\mcL(\CC^{d_\mathrm{in}})\to\mcL(\CC^{d_\mathrm{out}})$ by
    \begin{align}
        \Lambda_V(\cdot)
        \coloneqq
        V \Pi_A (\cdot) \Pi_A V^\dagger
        +\Gamma(\Pi_B (\cdot) \Pi_B),
        \label{eq:embedded-isometry-channel}
    \end{align}
    where $\Pi_A$ and $\Pi_B$ are the projectors onto $\mcH_A$ and $\mcH_B$.
    The quantum channel $\Lambda_V$ has Kraus rank at most $r$.
    One query to $\Lambda_V$ can be simulated with one query to $V$, while the restriction of $\overline{\Lambda_V}$ to $\mcH_A$ is exactly $\overline{\mcV}$.
    Hence an $n$-query channel-conjugation protocol with error at most $\epsilon$ would give an $n$-query protocol for conjugating every isometry $V:\CC^a\to\CC^{d_\mathrm{out}}$ with error $\leq \epsilon$ due to the data-processing inequality for the diamond norm.
    Thus, the lower bound on the query complexity for isometry complex conjugation in Cor.~\ref{cor:query-complexity} gives
    \begin{align}
        \label{eq:channel-conjugation-lower-bound-finite}
        n\geq a\left[
        \frac{(d_\mathrm{out}-a)(1-\epsilon)}{\epsilon}-1
        \right].
    \end{align}
    Since $d_\mathrm{out}-a\geq 1$ and $\epsilon\leq 1/4$ hold, we have
    \begin{align}
        n\geq {a(d_\mathrm{out}-a) \over 2\epsilon}.
    \end{align}
    By taking
    \begin{align}
        a =
        \begin{cases}
            d_\mathrm{in} & (d_\mathrm{in}<d_\mathrm{out}/2)\\
            d_\mathrm{out}-\Delta_\Lambda & (\Delta_\Lambda<d_\mathrm{out}/2)\\
            \lfloor d_\mathrm{out}/2\rfloor & (\text{otherwise})
        \end{cases},
    \end{align}
    we obtain
    \begin{align}
        n&\geq \Omega(d_\mathrm{out}\min\{d_\mathrm{in},d_\mathrm{out},\Delta_\Lambda\}/\epsilon)\\
        &\geq \Omega\qty(
        d_\mathrm{out}\qty[
        \frac{\min\{d_\mathrm{in},d_\mathrm{out},\Delta_\Lambda\}}
        {\epsilon}+1]).
    \end{align}

    We next consider the corner case $d_\mathrm{in}=r d_\mathrm{out}$.
    We identify the input space $\CC^{d_\mathrm{in}}$ with $\CC^{d_\mathrm{out}} \otimes \CC^r$, and for an unknown unitary $U\in\U(d_\mathrm{out})$, we define the rank-$r$ channel
    \begin{align}
        \Lambda_U(\cdot)
        \coloneqq
        U\Tr_{\CC^r}[\cdot]U^\dagger.
        \label{eq:embedded-unitary-channel-endpoint}
    \end{align}
    One query to $\Lambda_U$ can be simulated with one query to $U$, and restricting $\overline{\Lambda_U}$ to inputs $\rho\otimes\ketbra{1}$ gives $\overline U\rho U^\top$.
    Thus an $n$-query channel-conjugation protocol with error at most $\epsilon$ gives an $n$-query protocol for conjugating every unitary in dimension $d_\mathrm{out}$ with the error $\leq \epsilon$.
    Thus, the lower bound on the query complexity for unitary complex conjugation in Cor.~\ref{cor:query-complexity} gives
    \begin{align}
        n&\geq d_\mathrm{out}-1-{\epsilon \over 1-\epsilon}\\
        &\geq d_\mathrm{out}-1-{1\over 3}\\
        &\geq \Omega(d_\mathrm{out}).
    \end{align}
    Since $\Delta_\Lambda=0$ holds in this case, we obtain Eq.~\eqref{eq:channel-conjugation-lower-bound-general-r}.

    Finally, for the case of $r=O(1)$, since $\Delta_\Lambda = rd_\mathrm{out}-d_\mathrm{in}\geq 0$, we have $d_\mathrm{out} = \Omega(d_\mathrm{in})$ and $\Delta_\Lambda = O(d_\mathrm{out})$.
    Thus, we have
    \begin{align}
        d_\mathrm{out}\qty[\frac{\min\{d_\mathrm{in},d_\mathrm{out},\Delta_\Lambda\}}{\epsilon}+1]
        &=
        \begin{cases}
            \Omega\qty(d_\mathrm{in} \qty[{\Delta_\Lambda \over \epsilon}+1]) & (\Delta_\Lambda = O(d_\mathrm{in}))\\
            \Omega\qty(d_\mathrm{in} \qty[{d_\mathrm{out} \over \epsilon}+1]) & (\Delta_\Lambda = \Omega(d_\mathrm{in}))
        \end{cases}\\
        &= \Omega\qty(d_\mathrm{in} \qty[{\Delta_\Lambda \over \epsilon}+1]),
    \end{align}
    which gives Eq.~\eqref{eq:channel-conjugation-lower-bound-constant-r}.
\end{proof}

\section{Extension to multi-copy isometry complex conjugation}
\label{sec:multi-copy-conjugation}

We consider the extension of isometry complex conjugation to the multi-copy case, which is to implement $f_k: V\mapsto \overline{V}^{\otimes k}$ for $k\geq 1$ using $n$ queries of the unknown isometry $V: \CC^d \to \CC^D$.
We define the average fidelity for the multi-copy case by
\begin{align}
    F_{d,D,n,k}^{(X)}
    \coloneqq
    \sup_{\mathsf S\in X}
    F_\mathrm{avg}(\sfS, f_k)
    \label{eq:multi-copy-fidelity-definition}
\end{align}
for $X\in \{\mathrm{PAR}, \mathrm{GEN}\}$.
Then, we can show that the parallel estimation-based protocol achieves the asymptotically optimal fidelity:

\begin{theorem}
\label{thm:multi-copy-conjugation-asymptotic}
For fixed $d,D,k$ with $d<D$, the asymptotic multi-copy conjugation fidelity is given by
\begin{align}
    \label{eq:multi-copy-conjugation-asymptotic}
    F_{d,D,n,k}^{(\mathrm{GEN})}
    =1-\frac{kd(D-d)}{n}+o(n^{-1}),
\end{align}
which is asymptotically achieved by a parallel estimation-based protocol.
\end{theorem}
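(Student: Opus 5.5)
Two matching bounds are needed: an achievability bound from an estimation-based parallel protocol, and a converse bound valid for all general superchannels.

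\emph{Achievability.} We use all $n$ queries in parallel to run the isometry estimation protocol of Ref.~\cite{yoshida2026quantum}, obtaining a classical estimate $\hat V$. We then output $\overline{\hat V}^{\otimes k}$, which is a parallel superchannel. I expect Ref.~\cite{yoshida2026quantum} to provide an estimation scheme that is covariant under $V\mapsto WVU$ and has average channel fidelity $F_{\mathrm{ch}}(\hat{\mcV},\mcV)=1-d(D-d)/n+o(n^{-1})$. Since complex conjugation preserves channel fidelity, it then remains to pass from one copy to $k$ copies. For this I would use the local quadratic expansion: for $\hat V$ close to $V$, write $1-F_{\mathrm{ch}}(\hat{\mcV}^{\otimes k},\mcV^{\otimes k})=1-|\Tr(\hat V^\dagger V)/d|^{2k}=k[1-|\Tr(\hat V^\dagger V)/d|^2]+O(\|\hat V-V\|^4)$. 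The estimator concentrates with $\mathbb{E}\|\hat V-V\|^4=O(n^{-2})$, so averaging gives $1-kd(D-d)/n+o(n^{-1})$.

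\emph{Converse.} I would not solve the multi-copy SDP exactly. Instead I would reduce $k$ copies to a single copy combined with a cheap adaptive/parallel refinement. The idea is to convert a $k$-copy conjugation protocol using $n$ queries into a single-copy protocol using about $n/k$ effective queries. Concretely, use the $k$-copy output as a resource: applying one conjugate copy per block is like $n/k$ queries per conjugate, which suggests a scaling argument. A cleaner route is the following. A protocol producing $\overline{V}^{\otimes k}$ with fidelity $F$ can be fed into isometry estimation, giving $\overline V^{\otimes k}$ together with the $n$ original queries. Equivalently, consider the combined task $V^{\otimes n}\mapsto \overline V^{\otimes k}$ and bound it via the dual SDP~\eqref{eq:dual-sdp-general}. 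I would generalize the dual ansatz of Prop.~\ref{prop:optimality}: take $W=\sum_{\alpha\vdash_d n}\sum_{p,q}(E^\alpha_{pq})_{\mcI^n}\otimes(E^\alpha_{pq})_{\mcO^n}\otimes\1_{\mcP^k}\,\cdot\, w_\alpha/(d^k\mD{\alpha})$, with weights $w_\alpha$ chosen to be a slowly varying probability-like profile concentrated near balanced diagrams. This $W$ satisfies the GEN normalization constraints by the same branching calculation, provided $w_\alpha$ is handled suitably; I would likely need $w$ uniform, so I would instead put the freedom in $c$. The performance operator now involves $\dketbra{V}^{\otimes n+k}$, decomposed via Schur--Weyl on $n+k$ systems into $\lambda\vdash_d n+k$. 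Applying Lem.~\ref{eq:matrix-inequality} iteratively over the $k$-step paths $\alpha\to\lambda$ should give $c\le d^{-k}\max_\lambda\sum_{\alpha\in\lambda-k\square}(\#\text{paths }\alpha\to\lambda)\,\mD{\alpha}/\mD{\lambda}$.

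\emph{Main obstacle.} The hardest step is the asymptotic evaluation of this bound: showing that the $k$-step branching sum behaves like $[S_D(\lambda)/d]^k(1+o(n^{-1}))$ near the balanced diagram. For this I would iterate the residue formula~\eqref{eq:residue-evaluation}. For $\lambda$ with all rows $\approx n/d$, each step contributes $1-d(D-d)/n+O(n^{-2})$, which gives $1-kd(D-d)/n+o(n^{-1})$. Unbalanced $\lambda$ only decrease the value, by the transfer argument~\eqref{eq:balancing-transfer}. If the path-counting multiplicity spoils the iterated Cauchy--Schwarz bound, I would fall back on a weaker dual with the $k$ output copies symmetrized, which suffices at first order in $1/n$.
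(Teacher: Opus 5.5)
\textbf{Overall.} Your achievability protocol matches the paper's, and so does your dual pair $(c,W)$ once you settle on uniform weights. The paper obtains exactly $c=\max_{\lambda\vdash_d n+k}\sum_{\alpha\vdash_d n}f^{\lambda/\alpha}\mD{\alpha}/(d^k\mD{\lambda})$. It does so by discarding the positive blocks with $\lambda\neq\mu$ or $p_{\alpha\to\lambda}\neq p_{\alpha\to\mu}$, and then applying Lem.~\ref{eq:matrix-inequality} once with index set $\{(\alpha,p_{\alpha\to\lambda})\}$. So the path multiplicity causes no trouble, and you can drop both the symmetrized fallback and the $n/k$ reduction idea.

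\textbf{Achievability.} You do not need the moment bound $\mathbb{E}\|\hat V-V\|^4=O(n^{-2})$, which you would have to prove separately and which also needs a phase convention for $\hat V$. With $X=F_{\mathrm{ch}}(\hat{\mcV},\mcV)\in[0,1]$ one has exactly $1-X^k=(1-X)(1+X+\cdots+X^{k-1})\le k(1-X)$. So the remainder in your expansion is nonpositive, and averaging gives $1-k[1-F_{\mathrm{est}}(n,d,D)]$ directly. This is Bernoulli's inequality, as in the paper.

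\textbf{The gap.} The genuine problem is in the step you single out as the main obstacle. Eq.~\eqref{eq:balancing-transfer} compares $P_D(\lambda')$ with $P_D(\lambda)$. That is, it shows that the \emph{one-step} sum $S_D$ increases when the top diagram is balanced. It says nothing about the $k$-step sum, whose value depends on all intermediate diagrams in the chains below $\lambda$. Those diagrams change when $\lambda$ changes, and in general even chains below the balanced $\lambda^*$ pass through unbalanced diagrams. So ``unbalanced $\lambda$ only decrease the value'' is unsupported. Since $c$ is a maximum over all $\lambda\vdash_d n+k$, an asymptotic evaluation near balanced diagrams does not bound $c$.

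The missing idea is that you never need to locate the maximizer. Write the $k$-step sum as a nested sum of nonnegative terms,
\begin{align*}
\sum_{\alpha\vdash_d n}\frac{f^{\lambda/\alpha}\mD{\alpha}}{d^k\mD{\lambda}}
=\sum_{\alpha_k\in\lambda-\square}\frac{\mD{\alpha_k}}{d\mD{\lambda}}
\sum_{\alpha_{k-1}\in\alpha_k-\square}\frac{\mD{\alpha_{k-1}}}{d\mD{\alpha_k}}
\cdots
\sum_{\alpha_1\in\alpha_2-\square}\frac{\mD{\alpha_1}}{d\mD{\alpha_2}}.
\end{align*}
Then bound each innermost one-step sum uniformly over \emph{all} $\mu\vdash_d m$ by the single-copy optimum from Thm.~\ref{thm:optimal-fidelity}. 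This is where the transfer argument legitimately enters: for $m\ge d$, $S_D(\mu)/d\le\frac{1}{d}[D-(D-d)R_{d,D,m-1}]=1-d(D-d)/m+O(m^{-2})$. Peeling the sums from the inside out gives $[1-d(D-d)/n+O(n^{-2})]^k=1-kd(D-d)/n+O(n^{-2})$ for every $\lambda$ at once. This is the paper's Prop.~\ref{prop:multi-copy-optimality-asymptotic}, and no two-sided estimate near balanced diagrams is needed.
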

\begin{proof}[Proof sketch]
    As shown in Ref.~\cite{yoshida2026quantum}, the optimal estimation of an unknown isometry $V:\CC^d\to\CC^D$ from $n$ queries achieves the leading-order fidelity
    \begin{align}
        F_\mathrm{est}(n,d,D)
        =1-\frac{d(D-d)}{n}+o(n^{-1}),
        \label{eq:estimation-leading-order}
    \end{align}
    where $F_\mathrm{est}$ is the average fidelity defined by
    \begin{align}
        F_\mathrm{est}(n,d,D) \coloneqq \int_{\isometry{d}{D}} \dd V \sum_i p(\hat{V}_i|V) F_\mathrm{ch}(\hat{V}_i, V),
    \end{align}
    where $p(\hat{V}_i|V)$ is the probability of obtain an estimator $\hat{V}_i$ given the true isometry $V$.
    By using this estimation strategy, we can construct a parallel protocol for the multi-copy conjugation task, which is to estimate the unknown isometry $V$ and then prepare the conjugate of the estimated isometry $\hat{V}$ for $k$ copies.
    In this case, the average fidelity of the multi-copy conjugation task can be expressed as
    \begin{align}
        F_{d,D,n,k}^{(\mathrm{PAR})}
        &\geq \int_{\isometry{d}{D}} \dd V \sum_i p(\hat{V}_i|V) F_\mathrm{ch}(\overline{\hat{V}}_i^{\otimes k}, \overline{V}^{\otimes k})\\
        &=\int_{\isometry{d}{D}} \dd V \sum_i p(\hat{V}_i|V) F_\mathrm{ch}(\hat{V}_i, V)^k.
    \end{align}
    By using Bernoulli's inequality $X^k\geq 1-k(1-X)$ for $X\in [0,1]$, we have
    \begin{align}
        F_{d,D,n,k}^{(\mathrm{PAR})}&\geq \int_{\isometry{d}{D}} \dd V \sum_i p(\hat{V}_i|V) [1-k(1-F_\mathrm{ch}(\hat{V}_i, V))]\\
        &=1-k[1-F_\mathrm{est}(n,d,D)]\\
        &=1-\frac{kd(D-d)}{n}+o(n^{-1}).
        \label{eq:multi-copy-conjugation-asymptotic-lower-bound}
    \end{align}
    Appendix~\ref{app:multi-copy-optimality} proves the matching upper bound by constructing a feasible solution of the dual SDP for the multi-copy conjugation task.
\end{proof}

\section{Discussion}
\label{sec:discussion}

\subsection{Comparison with estimation-based protocol}

\begin{figure}
    \centering
    \includegraphics[width=.5\linewidth]{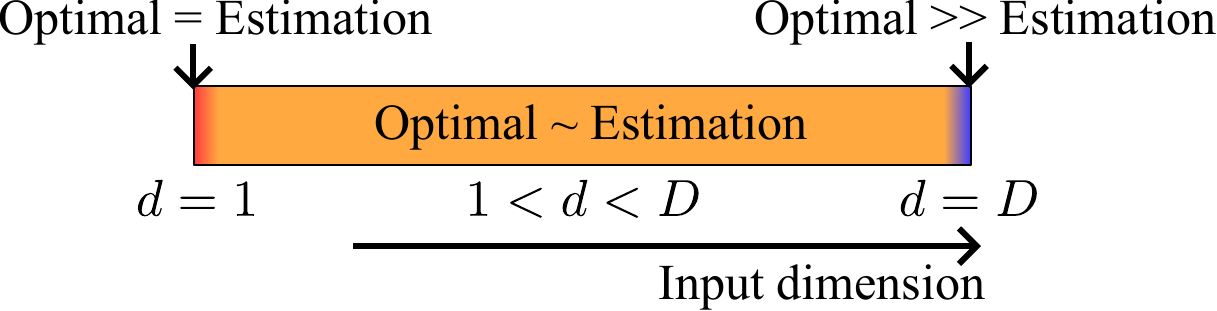}
    \caption{Phase diagram of the optimality of estimation-based strategies for state transposition, isometry complex conjugation, and unitary complex conjugation. The estimation-based strategy is optimal for state transposition~\cite{buzek1999optimal,brzic2026optimal} and asymptotically optimal for isometry complex conjugation (this work), while it is suboptimal even asymptotically for unitary complex conjugation~\cite{miyazaki2019complex,ebler2023optimal}.
    This figure is modified from a similar diagram in Ref.~\cite{yoshida2026quantum}.}
    \label{fig:phase_diagram}
\end{figure}

\begin{figure}
    \centering
    \includegraphics[width=.5\linewidth]{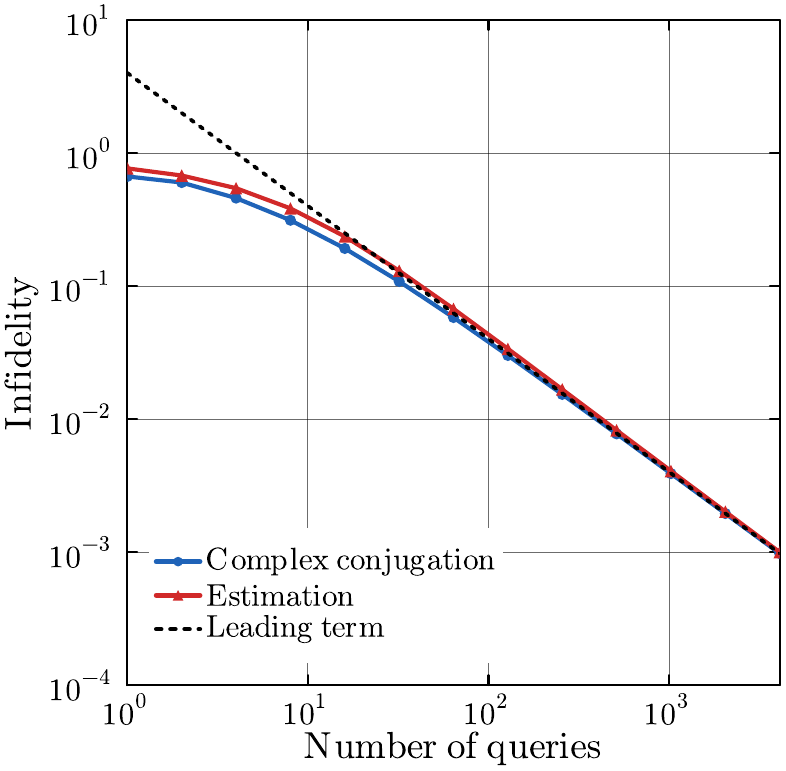}
    \caption{We compare the optimal infidelity of isometry complex conjugation (blue circles) and the isometry estimation (red triangles) for the case of $d=2$, $D=4$.
    The optimal infidelity of isometry complex conjugation is lower than that of the isometry estimation for finite $n$, which shows that the optimal isometry complex conjugation protocol has an advantage over the estimation-based strategy for finite $n$.
    Both infidelities approach ${d(D-d)/n} = 4/n$ (dotted line) in the limit of large $n$.}
    \label{fig:comparison}
\end{figure}

In Sec.~\ref{sec:optimal-complex-conjugation}, we provide a quantum circuit implementing the optimal isometry complex conjugation protocol (Fig.~\ref{fig:optimal-circuit}), which is not an estimation-based strategy.
On the other hand, Theorem~\ref{thm:multi-copy-conjugation-asymptotic} shows that the estimation-based strategy is \emph{asymptotically} optimal for multi-copy isometry complex conjugation for the case of $D>d$.
This result connects the state transposition and unitary complex conjugation, where the estimation-based strategy is known to be optimal for the state transposition~\cite{buzek1999optimal,brzic2026optimal}.
At the same time, it is suboptimal even asymptotically for the unitary complex conjugation~\cite{miyazaki2019complex,ebler2023optimal} (see Fig.~\ref{fig:phase_diagram}).

Here, we discuss the advantage of this circuit over an estimation-based strategy.
The estimation-based strategy can be strictly suboptimal at finite $n$.
We numerically compare the optimal isometry complex conjugation protocol and the estimation-based strategy for $d=2$, $D=4$, and one target copy in Fig.~\ref{fig:comparison}, which shows that the optimal isometry complex conjugation protocol has a better fidelity than the estimation-based strategy for finite $n$.
To obtain the optimal fidelity of isometry estimation, we find the maximum eigenvalue of the estimation matrix $M_\mathrm{est}(n,d,D)$ defined by~\cite{yoshida2026quantum}
\begin{align}
    (M_\mathrm{est}(n,d,D))_{\alpha\beta}
    \coloneqq
    {1\over d^2}\sum_{i,j=1}^{d} \delta_{\alpha+e_i, \beta+e_j} \sqrt{(\alpha_i-i+d+1)(\beta_j-j+d+1)\over (\alpha_i-i+D+1)(\beta_j-j+D+1)} \quad \forall \alpha, \beta\vdash_d n.
\end{align}
For the case of $d=2$, the estimation matrix $M_\mathrm{est}(n,2,D)$ is a $(\lfloor n/2\rfloor +1) \times (\lfloor n/2\rfloor +1)$ tridiagonal matrix. Its maximum eigenvalue can be computed in $O(n \log 1/\epsilon)$ time with precision $\epsilon$ using the Sturm bisection method~\cite{kahan1966accurate}.
We use the function \texttt{LinearAlgebra.eigmax} in Julia, with the matrix represented as a \texttt{SymTridiagonal} object.
The source code for the numerical calculation is available at Ref.~\cite{github}, which can be freely used under the MIT license~\cite{mit}.

In addition, the optimal isometry complex conjugation protocol has a polynomial-size implementation, while the optimal covariant estimator uses a continuous-outcome measurement whose efficient realization is not known~\cite{yoshida2026quantum}.
Thus, the optimal isometry complex conjugation protocol has an advantage over the estimation-based strategy for finite $n$ in terms of both fidelity and circuit complexity.

\subsection{Comparison of channel complex conjugation protocols}

We construct the channel complex conjugation by combining the optimal isometry complex conjugation protocol with the random dilation superchannel~\cite{girardi2025random,yoshida2025random}.
By using the isometry-to-unitary lifting~\cite{yoshida2025random}, we can also construct the channel complex conjugation protocol by combining the optimal unitary complex conjugation protocol~\cite{miyazaki2019complex,ebler2023optimal} with the random dilation superchannel.
Since $m$ queries of the random dilation unitary $U$ is obtained from $O(m^2 d_\mathrm{in}^2/\epsilon)$ queries of the original quantum channel $\Lambda$, the unitary-based channel complex conjugation protocol requires $O(d_\mathrm{in}^2 d_\mathrm{out}^2 r^2/\epsilon)$ queries of the original quantum channel $\Lambda$ to achieve the diamond-distance error $\epsilon$.
On the other hand, the isometry-based channel complex conjugation protocol requires $O(d_\mathrm{in} d_\mathrm{out} r/\epsilon)$ queries of the original quantum channel $\Lambda$ (see Cor.~\ref{cor:channel-conjugation-achievability}).
Thus, our isometry-based channel complex conjugation protocol has a quadratic improvement over the query bound obtained via the dilation-unitary lifting.

\section{Conclusion}
\label{sec:conclusion}

We obtain the exact optimal fidelity for conjugating an unknown isometry from $n$ queries and prove that a parallel protocol achieves the optimal fidelity among all general protocols, including those with indefinite causal order.
The closed formula recovers the optimal unitary complex conjugation~\cite{miyazaki2019complex,ebler2023optimal} and optimal state transposition~\cite{buzek1999optimal,brzic2026optimal}.
The optimal channel can be implemented using Schur and (dual) Clebsch--Gordan transforms, which can be realized with the circuit complexity $O(\operatorname{poly}(D,1/\epsilon))$ for diamond-distance error $\epsilon$.
Combining this circuit with a random dilation superchannel gives a complex conjugation protocol of an unknown quantum channel, which is optimal for the case of constant Kraus rank $r=O(1)$ up to a constant factor.
We extend the result to multi-copy complex conjugation $V^{\otimes n}\mapsto\overline V^{\otimes k}$, and show that the estimation-based strategy is asymptotically optimal for $D>d$.

We conclude with three open questions for future work.
First, determining the exact multi-copy optimum would extend both the one-copy result here and optimal multi-copy state transposition~\cite{brzic2026optimal}.
Second, the complementary superchannel of the optimal conjugation protocol may clarify its relation to isometry/channel cloning~\cite{sekatski2025cloning}, paralleling the complementarity between optimal state transposition and cloning~\cite{brzic2026optimal,chiribella2008optimal}.
Finally, our lower bound on the query complexity of channel complex conjugation motivates showing a separation between two access models: one can only query $\Lambda$, and the other can query $\Lambda$ and $\overline{\Lambda}$, for a certain learning task of quantum channels.

\section*{Note added}

During the preparation of this manuscript, we became aware of an independent work~\cite{adhikari2026prep}, which derives upper and lower bounds of the optimal fidelity for multi-copy unitary complex conjugation.
This is a special case of our work for $D=d$.
The upper bound in their work is equivalent to that in our work (Prop.~\ref{prop:multi-copy-optimality}) for the case of $D=d$.

\appendices

\section{Proof of Prop.~\ref{prop:optimal-circuit}}
\label{app:optimal-circuit}
\begin{proof}
    We define a linear operator $T_{\alpha, V}: \CC^d \to \mcU_{\lambda^*}^{(d)} \otimes \CC^D \otimes \mcU_{\lambda^*}^{(D)}$ for $\alpha\in \lambda^*-\square$ by
    \begin{align}
        T_{\alpha,V}
        \coloneqq
        \left[\1_{\mcU_{\lambda^*}^{(d)}}\otimes
        \mathrm{CG}^{(D)\dagger}_{\lambda^*,\overline\square}
        (\ket{\alpha}\otimes V_\alpha)\right]
        \mathrm{CG}^{(d)\dagger}_{\lambda^*,\overline\alpha}.
    \end{align}
    The isometry in Eq.~\eqref{eq:stinespring-isometry} is given by $U_V=M_{\lambda^*}^{-1/2}\sum_\alpha\sqrt{\mD{\alpha}}T_{\alpha,V}$.
    Then, the Choi operator of the output channel $\Phi_V$ is given by
    \begin{align}
        J_{\Phi_V}
        =\frac{1}{M_{\lambda^*}}
        \sum_{\alpha,\beta\in\lambda^*-\square}
        \sqrt{\mD{\alpha}\mD{\beta}}\,
        J_{\alpha\beta}(V),
        \label{eq:choi-branch-sum}
    \end{align}
    where $J_{\alpha\beta}(V)$ is defined by
    \begin{align}
        J_{\alpha\beta}(V)
        \coloneqq
        \Tr_{\mcU_{\lambda^*}^{(d)}\mcU_{\lambda^*}^{(D)}}
        [\dketbra{T_{\alpha,V}}{T_{\beta,V}}].
        \label{eq:branch-choi-block}
    \end{align}
    Then, the channel fidelity between $\Phi_V$ and $\overline{\mcV}$ is given by
    \begin{align}
        F_{\mathrm{ch}}(\Phi_V,\overline{\mcV})
        &= \frac{1}{d^2} \dbra{\overline V}J_{\Phi_V}\dket{\overline V}\\
        &= \frac{1}{d^2 M_{\lambda^*}} \sum_{\alpha, \beta\in\lambda^*-\square} \sqrt{\mD{\alpha}\mD{\beta}} \dbra{\overline V}J_{\alpha\beta}(V)\dket{\overline V}.
    \end{align}
    To calculate the inner product $\dbra{\overline V}J_{\alpha\beta}(V)\dket{\overline V}$, we use the following two identities:
    \begin{align}
        (\mathrm{CG}_{\lambda^*, \overline{\alpha}}^{(d)\dagger}\otimes \1_d)\dket{\1_d}
        &= \sum_{i=1}^{d} \sum_{\psi_{\lambda^*}=1}^{\md{\lambda^*}} \sum_{u_\alpha=1}^{\md{\alpha}} (\ketbra{u_{\lambda^*}} \otimes \ketbra{u_\alpha})\mathrm{CG}_{\lambda^*, \overline{\alpha}}^{(d)\dagger} \ket{i} \otimes \ket{i}\\
        &= \sqrt{\frac{d}{\md{\lambda^*}}} \sum_{i=1}^{d} \sum_{\psi_{\lambda^*}=1}^{\md{\lambda^*}} \sum_{u_\alpha=1}^{\md{\alpha}} \ketbra{u_{\lambda^*}} \mathrm{CG}_{\alpha, \square}^{(d)} (\ket{u_\alpha}\otimes \ket{i}) \otimes \ket{u_\alpha} \otimes \ket{i}\\
        &= \sqrt{\frac{d}{\md{\lambda^*}}} (\Pi_{\lambda^*}^{(d)}\mathrm{CG}_{\alpha, \square}^{(d)} \otimes \1_{\mcU_\alpha^{(d)}}\otimes \1_d) (\dket{\1_{\mcU_\alpha^{(d)}}} \otimes \dket{\1_d}),\label{eq:cg-identity-1}\\
        (\1_{\mcU_{\lambda^*}^{(D)}} \otimes \dbra{\1_D})(\mathrm{CG}_{\lambda, \overline{\square}}^{(D) \dagger} \otimes \1_D)
        &= \sum_{i=1}^{D} \sum_{\psi_{\lambda^*}=1}^{\mD{\lambda^*}} \sum_{u_\alpha=1}^{\mD{\alpha}} (\ketbra{u_\lambda^*} \otimes \bra{i}) \mathrm{CG}_{\lambda^*, \overline{\square}}^{(D) \dagger} \ketbra{u_\alpha} \otimes \bra{i}\\
        &= \sqrt{\frac{\mD{\alpha}}{\mD{\lambda^*}}} \sum_{i=1}^{D} \sum_{\psi_{\lambda^*}=1}^{\mD{\lambda^*}} \sum_{u_\alpha=1}^{\mD{\alpha}} \ketbra{u_{\lambda^*}} \mathrm{CG}_{\alpha,\square} (\ketbra{u_\alpha} \otimes \ketbra{i})\\
        &= \sqrt{\frac{\mD{\alpha}}{\mD{\lambda^*}}} \Pi_{\lambda^*}^{(D)}\mathrm{CG}_{\alpha, \square}^{(D)},
        \label{eq:cg-identity-2}
    \end{align}
    which follow from the property of the dual CG transform given in Eq.~\eqref{eq:dual-CG}, and $\Pi_{\lambda^*}^{(d)}$ and $\Pi_{\lambda^*}^{(D)}$ are projectors onto the irrep space $\mcU_{\lambda^*}^{(d)}$ and $\mcU_{\lambda^*}^{(D)}$, respectively.
    Then,
    \begin{align}
        (\1_{\mcU_{\lambda^*}^{(d)} \mcU_{\lambda^*}^{(D)}}\otimes \dbra{\overline{V}})\dket{T_{\alpha, V}}
        &= \left(\1_{\mcU_{\lambda^*}^{(d)} \mcU_{\lambda^*}^{(D)}}\otimes \dbra{\1_D}\right)\left[\1_{\mcU_{\lambda^*}^{(d)}}\otimes
        \mathrm{CG}^{(D)\dagger}_{\lambda^*,\overline\square}
        (\ket{\alpha}\otimes V_\alpha)\right]
        (\mathrm{CG}^{(d)\dagger}_{\lambda^*,\overline\alpha}\otimes V)\dket{\1_d}\\
        &= \sqrt{\frac{d \mD{\alpha}}{\md{\lambda^*}\mD{\lambda^*}}} (\Pi_{\lambda^*}^{(d)}\mathrm{CG}_{\alpha, \square}^{(d)} \otimes \Pi_{\lambda^*}^{(D)} \mathrm{CG}_{\alpha, \square}^{(D)}) (\dket{V_\alpha} \otimes \dket{V})\\
        &= \sqrt{\frac{d \mD{\alpha}}{\md{\lambda^*}\mD{\lambda^*}}} \dket{(\Pi_{\lambda^*}^{(D)} \mathrm{CG}_{\alpha, \square}^{(D)}) V_\alpha \otimes V (\mathrm{CG}_{\alpha, \square}^{(d)\dagger}\Pi_{\lambda^*}^{(d)})}\\
        &= \sqrt{\frac{d \mD{\alpha}}{\md{\lambda^*}\mD{\lambda^*}}} \dket{V_{\lambda^*}}
    \end{align}
    holds, where the second equality follows from Eqs.~\eqref{eq:cg-identity-1} and \eqref{eq:cg-identity-2}, and the last equality follows from Eq.~\eqref{eq:schur-isometry}.
    Then, we have
    \begin{align}
        \dbra{\overline V}J_{\alpha\beta}(V)\dket{\overline V}
        &= \Tr_{\mcU_{\lambda^*}^{(d)}\mcU_{\lambda^*}^{(D)}}[(\1_{\mcU_{\lambda^*}^{(d)}\mcU_{\lambda^*}^{(D)}}\otimes \dbra{\overline V})\dketbra{T_{\alpha,V}}{T_{\beta,V}}(\1_{\mcU_{\lambda^*}^{(d)}\mcU_{\lambda^*}^{(D)}}\otimes \dket{\overline V})]\\
        &= \frac{d \sqrt{\mD{\alpha}\mD{\beta}}}{\md{\lambda^*}\mD{\lambda^*}} \dbraket{V_{\lambda^*}}\\
        &= \frac{d \sqrt{\mD{\alpha}\mD{\beta}}}{\md{\lambda^*}\mD{\lambda^*}} \md{\lambda^*}.
    \end{align}
    Thus, we obtain
    \begin{align}
        F_{\mathrm{ch}}(\Phi_V,\overline{\mcV})
        &= \frac{1}{d^2 M_{\lambda^*}} \sum_{\alpha, \beta\in\lambda^*-\square} \sqrt{\mD{\alpha}\mD{\beta}} \frac{d \sqrt{\mD{\alpha}\mD{\beta}}}{\md{\lambda^*}\mD{\lambda^*}} \md{\lambda^*}\\
        &= \frac{1}{d M_{\lambda^*} \mD{\lambda^*}} \sum_{\alpha, \beta\in\lambda^*-\square} \mD{\alpha}\mD{\beta}\\
        &= \frac{M_{\lambda^*}^2}{d M_{\lambda^*} \mD{\lambda^*}}\\
        &= {1\over d} \frac{\sum_{\alpha\in\lambda^*-\square} \mD{\alpha}}{\mD{\lambda^*}},
    \end{align}
    which shows that the quantum circuit in Fig.~\ref{fig:optimal-circuit} implements an optimal parallel isometry complex conjugation protocol with the fidelity in Eq.~\eqref{eq:max_form}.
\end{proof}

\section{Proof of optimality in Thm.~\ref{thm:multi-copy-conjugation-asymptotic}}
\label{app:multi-copy-optimality}

For a multi-copy isometry complex conjugation $f_k: V\mapsto \overline{V}^{\otimes k}$ for $V\in \isometry{d}{D}$, the performance operator in Eq.~\eqref{eq:performance-operator} becomes
\begin{align}
\Omega_{d,D,n,k}
&\coloneqq \Omega_{f_k, n}\\
&=
{1 \over d^{2k}} \sum_{\substack{\lambda\vdash_d n+k\\p,q\in\Path{\lambda}}} {(E^\lambda_{pq})_{\mcI^n\mcP^k} \otimes(E^\lambda_{pq})_{\mcO^n\mcF^k} \over \mD{\lambda}}.
\label{eq:multi-copy-performance-operator}
\end{align}
We generalize Proposition~\ref{prop:optimality} by constructing a feasible solution of the dual SDP for the multi-copy conjugation task, which gives an upper bound of the optimal fidelity.

\begin{proposition}
\label{prop:multi-copy-optimality}
The optimal fidelity of multi-copy isometry complex conjugation in the general protocol is upper bounded by
\begin{align}
    \label{eq:multi-copy-optimality}
    F_{d,D,n,k}^{(\mathrm{GEN})}
    \leq
    \max_{\lambda\vdash_d n+k}
    \sum_{\alpha\vdash_d n}
    \frac{f^{\lambda/\alpha}\mD{\alpha}}
    {d^k\mD{\lambda}},
\end{align}
where $f^{\lambda/\alpha}\coloneqq |\Path{\alpha\to\lambda}|$ counts the $k$-step branching paths in the Young lattice.
\end{proposition}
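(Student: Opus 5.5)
We follow the proof of Prop.~\ref{prop:optimality} and build a feasible point of the dual SDP~\eqref{eq:dual-sdp-general}, now with output space $\mcP^k\mcF^k$. Take the same operator
\begin{align}
    W=\sum_{\substack{\alpha\vdash_d n\\ p_\alpha,q_\alpha\in\Path{\alpha}}}\frac{(E^\alpha_{p_\alpha q_\alpha})_{\mcI^n}\otimes(E^\alpha_{p_\alpha q_\alpha})_{\mcO^n}\otimes\1_{\mcP^k}}{d^k\mD{\alpha}},
\end{align}
normalized by $d^k$ in place of $d$, and set $c$ equal to the right-hand side of Eq.~\eqref{eq:multi-copy-optimality}. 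The linear constraints carry over from Prop.~\ref{prop:optimality}. Symmetry under $\mfS_n$ acting on the $n$ input--output pairs holds as before. The partial trace $\Tr_{\mcO_n}$ is handled by the branching rules Thms.~\ref{thm:branching-trace-identity} and~\ref{thm:branching-tensor-identity}, which are blind to the extra $\1_{\mcP^k}$ factor. Finally, $\Tr W=d^k\sum_\alpha\md{\alpha}d_\alpha/d^k\cdot\ldots$ must equal $d^n$ times the normalization required when $\mcP$ has dimension $d^k$; I will fix the constant in front of $W$ so that $\Tr W$ matches, treating $\mcP^k$ as a single input space of dimension $d^k$.

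The substantive step is the operator inequality $\Omega_{d,D,n,k}\le c\,W\otimes\1_{\mcF^k}$. Apply Thm.~\ref{thm:branching-tensor-identity} $k$ times, once for each copy of $\mcP$ and once for each copy of $\mcF$. This gives
\begin{align}
    (E^\alpha_{p_\alpha q_\alpha})_{\mcI^n}\otimes\1_{\mcP^k}=\sum_{\lambda\vdash_d n+k}\sum_{s\in\Path{\alpha\to\lambda}}(E^{\lambda,d}_{p_\alpha s,q_\alpha s})_{\mcI^n\mcP^k},
\end{align}
and the analogous expansion over diagrams $\mu$ of height at most $D$ on $\mcO^n\mcF^k$. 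The result is a sum of positive block operators $X_{\alpha;\lambda,s;\mu,t}$, each of the form $\sum_{p,q}E^\lambda_{ps,qs}\otimes E^\mu_{pt,qt}$; this is positive because it is a Gram-type operator $\sum_{p,q}(\ketbra{ps}{qs})\otimes(\ketbra{pt}{qt})$ tensored with identities. Discard every block except those with $\mu=\lambda$, $\lambda\vdash_d n+k$, and $t=s$. What remains, for each fixed $\lambda$, is $\sum_{\alpha\vdash_d n}\sum_{s\in\Path{\alpha\to\lambda}}\frac{1}{d^k\mD{\alpha}}\,\1\otimes\ketbra{a_{\alpha,s}}$, where $\ket{a_{\alpha,s}}=\sum_{p_\alpha}\ket{p_\alpha s}\otimes\ket{p_\alpha s}$.

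Next apply Lem.~\ref{eq:matrix-inequality} with index set $A=\{(\alpha,s):s\in\Path{\alpha\to\lambda}\}$, vectors $\ket{a_{\alpha,s}}/\sqrt{\mD{\alpha}}$, and coefficients $c_{\alpha,s}=\sqrt{\mD{\alpha}}$. Every path of $\Path{\lambda}$ factors uniquely as $p_\alpha s$, so $\sum c_{\alpha,s}\ket{a_{\alpha,s}}/\sqrt{\mD\alpha}=\sum_{p_\lambda}\ket{p_\lambda}\ket{p_\lambda}$, which produces $Y_\lambda$. The denominator is $\sum_{\alpha}f^{\lambda/\alpha}\mD{\alpha}$. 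The inequality becomes
\begin{align}
    W\otimes\1_{\mcF^k}\ge\sum_{\lambda}\frac{Y_\lambda}{d^k\sum_\alpha f^{\lambda/\alpha}\mD{\alpha}}\ge\frac{1}{c\,d^{2k}}\sum_\lambda\frac{Y_\lambda}{\mD{\lambda}}=\frac{\Omega_{d,D,n,k}}{c},
\end{align}
where the last equality uses Eq.~\eqref{eq:multi-copy-performance-operator}, and this fixes the normalization of $c$.

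I expect the main obstacle to be the bookkeeping: reconciling the $d^k$ factors between the dual normalization $\Tr W$ and the $d^{2k}$ in $\Omega$, and making sure the iterated branching on the $\mcF^k$ side (diagrams of height at most $D$) really contains the $\mu=\lambda$ blocks with identical path labels. That requires the Young--Yamanouchi path bases for $\U(d)$ and $\U(D)$ to agree on diagrams of height at most $d$, which holds by construction of the restricted Young lattice. Weak duality then gives the bound.
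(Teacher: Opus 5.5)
Your proposal is correct and is essentially the paper's proof: you use the same $W$ with prefactor $1/d^k$ and the same $c$. You then apply the $k$-fold branching expansion on both sides, discard the blocks with $\mu\neq\lambda$ or $t\neq s$, and apply Lem.~\ref{eq:matrix-inequality} with index set $\{(\alpha,s)\}$ and coefficients $\sqrt{\mD{\alpha}}$, exactly as the paper does.

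Your normalization worry is unfounded. With the $1/d^k$ prefactor, $\Tr W=\sum_{\alpha\vdash_d n}\md{\alpha}d_\alpha=d^n$, which is exactly the required dual constraint. That constraint does not depend on $\dim\mcP^k$, because the $\mcP$ factor plays the role of a normalized state. So no constant needs fixing, and none could be changed without rescaling $c$ and spoiling the bound.
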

\begin{proof}
    Similarly to the proof of Proposition~\ref{prop:optimality}, we construct a feasible solution of the dual SDP for the multi-copy conjugation task.
    The Hilbert spaces $\mcP$ and $\mcF$ in the dual SDP~\eqref{eq:dual-sdp-general} are given by $\mcP = \mcP^k \coloneqq \bigotimes_{i=1}^{k} \mcP_i$ and $\mcF = \mcF^k \coloneqq \bigotimes_{i=1}^{k} \mcF_i$ using $\mcP_i\cong \CC^d$ and $\mcF_i \cong \CC^D$ for $i\in [k]$.
    We construct a feasible solution of the dual SDP by
    \begin{align}
        c &= \max_{\lambda\vdash_d n+k}
        \sum_{\alpha\vdash_d n}
        \frac{f^{\lambda/\alpha}\mD{\alpha}}
        {d^k\mD{\lambda}},\\
        W&=\sum_{\substack{\alpha\vdash_d n\\
        p_\alpha,q_\alpha\in\Path{\alpha}}}
        \frac{(E^\alpha_{p_\alpha q_\alpha})_{\mcI^n}
        \otimes(E^\alpha_{p_\alpha q_\alpha})_{\mcO^n}
        \otimes\1_{\mcP^k}}
        {d^k\mD{\alpha}}.
    \end{align}
    We first verify the operator inequality:
    \begin{align}
        W\otimes \1_{\mcF^k}
        &=\sum_{\substack{\alpha\vdash_d n\\
        p_\alpha,q_\alpha\in\Path{\alpha}}}
        \frac{(E^\alpha_{p_\alpha q_\alpha})_{\mcI^n}
        \otimes\1_{\mcP^k}
        \otimes(E^\alpha_{p_\alpha q_\alpha})_{\mcO^n}
        \otimes\1_{\mcF^k}}
        {d^k\mD{\alpha}}\\
        &= \sum_{\substack{\alpha\vdash_d n\\p_\alpha,q_\alpha\in\Path{\alpha}}}
        \sum_{\substack{\lambda\vdash_d n+k,\ \mu\vdash_D n+k\\p_{\alpha\to\lambda}\in\Path{\alpha\to\lambda}\\p_{\alpha\to\mu}\in\Path{\alpha\to\mu}}}
        \frac{(E^\lambda_{p_\alpha\cup p_{\alpha\to\lambda},q_\alpha\cup p_{\alpha\to\lambda}})_{\mcI^n\mcP^k}
        \otimes(E^\mu_{p_\alpha\cup p_{\alpha\to\mu},q_\alpha\cup p_{\alpha\to\mu}})_{\mcO^n\mcF^k}}
        {d^k\mD{\alpha}}\\
        &\geq \sum_{\substack{\alpha\vdash_d n\\p_\alpha,q_\alpha\in\Path{\alpha}}}
        \sum_{\substack{\lambda\vdash_d n+k\\p_{\alpha\to\lambda}\in\Path{\alpha\to\lambda}}}
        \frac{(E^\lambda_{p_\alpha\cup p_{\alpha\to\lambda},q_\alpha\cup p_{\alpha\to\lambda}})_{\mcI^n\mcP^k}
        \otimes(E^\lambda_{p_\alpha\cup p_{\alpha\to\lambda},q_\alpha\cup p_{\alpha\to\lambda}})_{\mcO^n\mcF^k}}
        {d^k\mD{\alpha}}\\
        &\geq \sum_{\substack{\lambda\vdash_d n+k\\p_\lambda,q_\lambda\in\Path{\lambda}}}
        \frac{(E^\lambda_{p_\lambda q_\lambda})_{\mcI^n\mcP^k}
        \otimes(E^\lambda_{p_\lambda q_\lambda})_{\mcO^n\mcF^k}}
        {d^k\sum_{\alpha\vdash_d n}
        f^{\lambda/\alpha}\mD{\alpha}}\\
        &\geq\frac{1}{cd^{2k}}
        \sum_{\substack{\lambda\vdash_d n+k\\
        p_\lambda,q_\lambda\in\Path{\lambda}}}
        \frac{(E^\lambda_{p_\lambda q_\lambda})_{\mcI^n\mcP^k}
        \otimes(E^\lambda_{p_\lambda q_\lambda})_{\mcO^n\mcF^k}}
        {\mD{\lambda}}\\
        &=\frac{1}{c}\Omega_{d,D,n,k}.
    \end{align}
    The second equality follows from Thm.~\ref{thm:branching-tensor-identity}, the first inequality keeps only the case of $\lambda=\mu$ and $p_{\alpha\to\lambda} = p_{\alpha\to\mu}$, the second inequality uses Lem.~\ref{eq:matrix-inequality} applied to the set of vectors and coefficients given by
    \begin{align}
        \ket{a_{\alpha,p_{\alpha\to\lambda}}^\lambda}
        \coloneqq
        \frac{1}{\sqrt{\mD{\alpha}}}
        \sum_{p_\alpha\in\Path{\alpha}}
        \ket{p_\alpha\cup p_{\alpha\to\lambda}}\otimes\ket{p_\alpha\cup p_{\alpha\to\lambda}},
        \qquad
        c_{\alpha,p_{\alpha\to\lambda}}^\lambda\coloneqq\sqrt{\mD{\alpha}},\\
        A\coloneqq \{(\alpha, p_{\alpha\to\lambda})\mid \alpha\vdash_d n,p_{\alpha\to\lambda}\in\Path{\alpha\to\lambda}\},
    \end{align}
    and the third inequality uses the definition of $c$.
    As in Proposition~\ref{prop:optimality}, since
    \begin{align}
        [W,\pi_d(\sigma)_{\mcI^n}\otimes
        \pi_D(\sigma)_{\mcO^n}\otimes\1_{\mcP^k}]=0
        \quad\forall\sigma\in\mfS_n
    \end{align}
    holds, it is sufficient to check the second constraint in Eq.~\eqref{eq:dual-sdp-general} for the case of $i=n$:
    \begin{align}
        \Tr_{\mcO_n} W
        &=\sum_{\substack{\gamma\vdash_d n-1,
        \alpha\in\gamma+_d\square\\
        p_\gamma,q_\gamma\in\Path{\gamma}}}
        \frac{(E^{\alpha}_{p_\gamma\to\alpha,q_\gamma\to\alpha})_{\mcI^n}
        \otimes(E^{\gamma}_{p_\gamma q_\gamma})_{\mcO^{n-1}}
        \otimes\1_{\mcP^k}}
        {d^k\mD{\gamma}}\\
        &=\sum_{\substack{\gamma\vdash_d n-1\\
        p_\gamma,q_\gamma\in\Path{\gamma}}}
        \frac{(E^{\gamma}_{p_\gamma q_\gamma})_{\mcI^{n-1}}
        \otimes\1_{\mcI_n}
        \otimes(E^{\gamma}_{p_\gamma q_\gamma})_{\mcO^{n-1}}
        \otimes\1_{\mcP^k}}
        {d^k\mD{\gamma}}\\
        &=\Tr_{\mcI_n\mcO_n}W
        \otimes\frac{\1_{\mcI_n}}{d}.
    \end{align}
    Finally, the dimension counting in Eq.~\eqref{eq:schur-weyl} gives
    \begin{align}
        \Tr W
        &= \sum_{\alpha\vdash_d n} \md{\alpha} d_\alpha = d^n.
    \end{align}
    Thus $(c,W)$ is feasible for the dual SDP and proves Eq.~\eqref{eq:multi-copy-optimality}.
\end{proof}

\begin{proposition}
    \label{prop:multi-copy-optimality-asymptotic}
For fixed $d,D,k$ and $n\to\infty$, the right-hand side of Eq.~\eqref{eq:multi-copy-optimality} satisfies
\begin{align}
    \max_{\lambda\vdash_d n+k}
    \sum_{\alpha\vdash_d n}
    \frac{f^{\lambda/\alpha}\mD{\alpha}}
    {d^k\mD{\lambda}}
    \leq1-\frac{kd(D-d)}{n}+O(n^{-2}).
\end{align}
\end{proposition}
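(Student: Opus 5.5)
The plan is to reduce the $k$-copy quantity to a product of $k$ one-copy quantities, each of which Thm.~\ref{thm:optimal-fidelity} has already maximized in closed form. Define, for $\lambda\vdash_d m$,
\begin{align}
    G_k(\lambda)\coloneqq\sum_{\alpha\vdash_d m-k}\frac{f^{\lambda/\alpha}\mD{\alpha}}{\mD{\lambda}},
\end{align}
so the left-hand side of the claim is $d^{-k}\max_{\lambda\vdash_d n+k}G_k(\lambda)$.

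First, I will write $f^{\lambda/\alpha}$ as a sum over paths $\alpha=\lambda_0\to\lambda_1\to\cdots\to\lambda_k=\lambda$. The ratio $\mD{\alpha}/\mD{\lambda}$ telescopes along each path into $\prod_{j}\mD{\lambda_{j-1}}/\mD{\lambda_j}$. This gives the recursion
\begin{align}
    G_k(\lambda)=\sum_{\mu\in\lambda-\square}\frac{\mD{\mu}}{\mD{\lambda}}\,G_{k-1}(\mu),\qquad G_1=S_D,
\end{align}
where $S_D$ is the function from the proof of Thm.~\ref{thm:optimal-fidelity}. Here I use that removing a box from a diagram of height at most $d$ stays inside $\YY_d$, so the sum over $\alpha\vdash_d n$ is exactly the sum over $k$-step downward paths.

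Next, I bound $G_{k-1}(\mu)$ by $\max_{\mu'\vdash_d m-1}G_{k-1}(\mu')$ and pull it out of the sum. Induction then gives
\begin{align}
    \max_{\lambda\vdash_d n+k}\frac{G_k(\lambda)}{d^k}\leq\prod_{j=1}^{k}\Bigl[\frac1d\max_{\lambda\vdash_d n+j}S_D(\lambda)\Bigr]=\prod_{j=1}^{k}F^{(\mathrm{GEN})}_{d,D,n+j-1}.
\end{align}
Each factor is given exactly by Eq.~\eqref{eq:closed-form}, since the balanced diagram $\lambda^*$ is the maximizer.

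Then I expand the closed form. For fixed $d,D$ and large $m=n+j$, write $m=qd+r$ with $q\geq 1$ and $0\leq r<d$. The ratio $R_{d,D,m-1}$ in Eq.~\eqref{eq:dimension-ratio} is a ratio of products of linear factors in $q$, and expanding it gives
\begin{align}
    R=1+\frac{1}{q+D}-\frac{1}{q+D-r}+\frac{d}{q+D-d}+O(q^{-2})=1+\frac{d}{q}+O(q^{-2}),
\end{align}
uniformly in $r$, because the first two correction terms cancel to order $q^{-2}$. Hence
\begin{align}
    \frac1d\bigl[D-(D-d)R\bigr]=1-\frac{D-d}{q}+O(q^{-2}).
\end{align}
Since $q=(m-r)/d$, we have $1/q=d/n+O(n^{-2})$ uniformly over $1\leq j\leq k$. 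So each factor equals $1-d(D-d)/n+O(n^{-2})$, and the product of the $k$ factors is $1-kd(D-d)/n+O(n^{-2})$ because $k$ is fixed.

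I expect the main obstacle to be the expansion step: checking that the $O(q^{-2})$ remainder is uniform in the residue $r$, and that the constants do not depend on $n$. For this I would expand $\log R$ term by term. The bound itself comes essentially for free from the recursion and the one-copy maximization already proved. Combined with the achievability in Thm.~\ref{thm:multi-copy-conjugation-asymptotic}, this gives the matching asymptotics.
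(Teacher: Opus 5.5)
Your proposal is correct and is essentially the paper's proof: the paper also expands $f^{\lambda/\alpha}$ into single-box chains, bounds each one-box sum uniformly by $\frac{1}{d}[D-(D-d)R_{d,D,n+i-1}]=1-\frac{d(D-d)}{n}+O(n^{-2})$, and iterates, which is exactly your recursion-plus-induction. The only slip is cosmetic: the last factor of $R$ contributes $-1/(q+D-r+1)$ rather than $-1/(q+D-r)$, but the difference is $O(q^{-2})$, so your expansion $R=1+d/q+O(q^{-2})$ still holds.
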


\begin{proof}
    We assume $n\geq d$ and fix $\lambda\vdash_d n+k$.
    Expanding every $k$-step path into single-box steps gives
    \begin{align}
        \sum_{\alpha\vdash_d n}
        \frac{f^{\lambda/\alpha}\mD{\alpha}}
        {d^k\mD{\lambda}}
        &=
        \sum_{\substack{\alpha_1,\ldots,\alpha_k\\
        \alpha_i\in\alpha_{i+1}-\square\ (i=1,\ldots,k)\\
        \alpha_{k+1}=\lambda}}
        \prod_{i=1}^k
        \frac{\mD{\alpha_i}}
        {d\mD{\alpha_{i+1}}}.
        \label{eq:path-product}
    \end{align}
    By denoting $n+i=q_i d+r_i$ with $q_i, r_i\in \ZZ_{\geq 0}$ and $0\leq r_i<d$, Eq.~\eqref{eq:dimension-ratio} bounds each one-box sum as
    \begin{align}
        \sum_{\alpha_i\in\alpha_{i+1}-\square}
        \frac{\mD{\alpha_i}}{d\mD{\alpha_{i+1}}}
        &\leq{1\over d}
        \left[D-(D-d)R_{d,D,n+i-1}\right]\\
        &=1-\frac{D-d}{q_i}+O(q_i^{-2})\\
        &=1-\frac{d(D-d)}{n}+O(n^{-2})
    \end{align}
    for all $\alpha_{i+1}\vdash_d n+i$.
    Since $k,d,D$ are fixed, the $O(n^{-2})$ remainder is uniform over $i=1,\ldots,k$.
    Iterating the bound in Eq.~\eqref{eq:path-product} yields
    \begin{align}
        \sum_{\alpha\vdash_d n}
        \frac{f^{\lambda/\alpha}\mD{\alpha}}
        {d^k\mD{\lambda}}
        &\leq\left[1-\frac{d(D-d)}{n}+O(n^{-2})\right]^k\\
        &=1-\frac{kd(D-d)}{n}+O(n^{-2}).
    \end{align}
\end{proof}

\begin{proof}[Proof of optimality in Thm.~\ref{thm:multi-copy-conjugation-asymptotic}]
    Propositions~\ref{prop:multi-copy-optimality} and~\ref{prop:multi-copy-optimality-asymptotic} imply
    \begin{align}
        F_{d,D,n,k}^{(\mathrm{GEN})}
        &\leq\max_{\lambda\vdash_d n+k}
        \sum_{\alpha\vdash_d n}
        \frac{f^{\lambda/\alpha}\mD{\alpha}}
        {d^k\mD{\lambda}}\notag\\
        &\leq1-\frac{kd(D-d)}{n}+O(n^{-2}).
    \end{align}
    Together with Eq.~\eqref{eq:multi-copy-conjugation-asymptotic-lower-bound} in the proof sketch, this proves Eq.~\eqref{eq:multi-copy-conjugation-asymptotic}.
\end{proof}

\section*{Acknowledgment}

We acknowledge fruitful discussions with Adam Burchardt, Akihito Soeda, and thank Rashi Adhikari, Marco T\'{u}lio Quintino and Micha{\l} Studzi\'{n}ski for letting us know about their independent work~\cite{adhikari2026prep}.
We acknowledge the use of Codex 5.5 Extra High and 5.6-Sol Ultra for code generation of the SDP and numerical checks, which helped us to guess the primal and dual solutions in the proofs of Props.~\ref{prop:achievability} and \ref{prop:optimality} and prepare Fig.~\ref{fig:comparison}, and to assist the preparation of the manuscript.
All analytical arguments were independently derived and verified by the authors, and the authors take full responsibility for the content of the manuscript.
This work was supported by MEXT Quantum Leap Flagship Program (MEXT QLEAP) JPMXS0118069605 and JPMXS0120351339; Japan Science and Technology Agency (JST) as part of Adopting Sustainable Partnerships for Innovative Research Ecosystem (ASPIRE), Grant Number JPMJAP25A3; JST CREST, Grant Number JPMJCR25I5; JST NEXUS, Grant Number JPMJNX26C9; JSPS KAKENHI Grant No. 23K21643; and IBM Quantum.

\bibliographystyle{IEEEtran}
\bibliography{main}

\end{document}